\newcommand{\Argmax}{\mathop{\mathrm{argmax}}}
\newcommand{ \ep }{\varepsilon}
\newcommand{\Pro}{\mathcal{P}}
\newcommand{\suchthat}{\ensuremath{\ : \ }} 
\newcommand{\argmax}{\text{arg}~\text{max}}
\newcommand{\restr}[1]{\lower3pt\hbox{$|_{#1}$}}
\newcommand{\argmmax}{\mathop{\mathrm{argmax}}}
\newcommand{\Tr}{\operatorname{Tr}}
\newcommand{\DM}{\mathfrak{P}}
\newcommand{\unit}{1\!\!1}
\newcommand{\prim}{\mathfrak{F}^{\ep}}
\newcommand{\primal}{\tF^{\ep}}
\newcommand{\primin}{\mathfrak{F}^{\ep}}
\newcommand{\dual}{\mathfrak{D}^{\ep}}
\newcommand{\dualf}{\tD_{\bm \gamma}^\ep}
\newcommand{\dualin}{\mathfrak{D}^{\ep}}
\newcommand{\Sink}{\mathcal{T}^\ep}
\newcommand{\ffD}{\text D_\gamma^{-,\ep}}
\newcommand{\ffd}{\text D_\gamma^-}
\newcommand{\fbD}{\text D_\gamma^{+, \ep}}
\newcommand{\fbfD}{\text D_\gamma^{\pm, \ep}}
\newcommand{\fD}{\mathfrak D_\ep^-}
\newcommand{\bD}{\mathfrak D_\ep^+}
\newcommand{\bfD}{\mathfrak D_\pm^\ep}
\theoremstyle{plain}
\newtheorem{theo}{Theorem}[section]
\newtheorem{proposition}{Proposition}[section]
\theoremstyle{definition}
\newtheorem{defin}[theo]{Definition}
\newtheorem{rem}[theo]{Remark}
\newcommand{\cM}{\mathcal{M}}
\newcommand{\cS}{\mathcal{S}}
\newcommand{\bR}{\mathbb{R}}
\newcommand{\bN}{\mathbb{N}}
\newcommand{\bC}{\mathbb{C}}
\newcommand{\tP}{\operatorname{P}}
\newcommand{\tQ}{\operatorname{Q}}
\newcommand{\tT}{\operatorname{T}}
\newcommand{\tD}{\operatorname{D}}
\newcommand{\tF}{\operatorname{F}}
\newcommand{\tS}{\operatorname{S}}
\newcommand{\ren}{\operatorname{Ren}}
\newcommand{\fT}{\mathfrak{T}}
\DeclareFontFamily{U}{mathx}{\hyphenchar\font45}
\DeclareFontShape{U}{mathx}{m}{n}{
	<5> <6> <7> <8> <9> <10>
	<10.95> <12> <14.4> <17.28> <20.74> <24.88>
	mathx10
}{}
\DeclareSymbolFont{mathx}{U}{mathx}{m}{n}
\DeclareMathSymbol{\bigtimes}{1}{mathx}{"91}
\newcommand{\dd}{\, \mathrm{d}}
\title{A Non-Commutative Entropic Optimal Transport Approach to Quantum Composite Systems at Positive Temperature}
\author[1]{Dario Feliciangeli}
\author[2]{Augusto Gerolin}
\author[1]{Lorenzo Portinale}
\affil[1]{IST Austria, Am Campus 1, 3400 Klosterneuburg, Austria.}
\affil[2]{Department of Theoretical Chemistry, Vrije Universiteit Amsterdam, The Netherlands.}
\date{\textit{Dedicated to Tito.}}
\begin{document}
\maketitle
\begin{abstract}
	This paper establishes new connections between many-body quantum systems, One-body Reduced Density Matrices Functional Theory (1RDMFT) and Optimal Transport (OT), by interpreting the problem of computing the ground-state energy of a finite dimensional composite quantum system at positive temperature as a non-commutative entropy regularized Optimal Transport problem. 
	We develop a new approach to fully characterize the dual-primal solutions in such non-commutative setting.	
	The mathematical formalism is particularly relevant in quantum chemistry: numerical realizations of the many-electron ground state energy can be computed via a non-commutative version of Sinkhorn algorithm. 
	Our approach allows to prove convergence and robustness of this algorithm, which, to our best knowledge, were unknown even in the two marginal case.
	 Our methods are based on careful a priori estimates in the dual problem, which we believe to be of independent interest.
	Finally, the above results are extended in 1RDMFT setting, where bosonic or fermionic symmetry conditions are enforced on the problem.
\end{abstract}
\tableofcontents

\section{Introduction}

In this work we are interested in studying the ground state energy of a finite dimensional composite quantum system at positive temperature. In particular, we focus on the problem of minimizing the energy of the composite system \textit{conditionally} to the knowledge of the states of all its subsystems. 

The first motivation for this study is physical: it is useful to understand how one could infer the state of a composite system when one only has experimental access to the measurement of the states of its subsystems. The second motivation is mathematical: indeed this problem can be cast as a non-commutative optimal transport problem, therefore showcasing how several ideas and concepts introduced in the commutative setting carry through to the non-commutative framework. Finally, a third motivation comes from the fact that one-body reduced density matrix functional theory, which is of interest on its own, can be framed as a special case of our setting.

Let us consider a composite system with $N$ subsystems, each with state space given by the complex Hilbert space $\mathfrak{h}_j$  of dimension $d_j<\infty$, for $j=1,\dots, N$, and denote the state space of the composite system $\mathfrak{h}:=\mathfrak{h}_1 \otimes \mathfrak{h}_2 \otimes \dots \otimes \mathfrak{h}_N$ (with dimension $d=d_1 \cdot d_2 \cdot \dots d_N$). Further denote by $\H$ the Hamiltonian to which the whole system is subject and suppose that $\H=\H_0+\H_{\text{int}}$, where $\H_0$ is the non-interacting part of the Hamiltonian, i.e.  $\H_0=\bigoplus_{j=1}^N \H_j:=\H_1 \otimes \unit \dots \otimes \unit+ \unit\otimes \H_2 \otimes \unit \dots \otimes \unit +\dots +\unit\otimes \dots \otimes \unit \otimes \H_N$ with $\H_j$ acting on $\mathfrak{h}_j$, and $\H_{\text{int}}$ is its interacting part. Finally, suppose to have knowledge of the states $\bm \gamma=(\gamma_1, \dots, \gamma_N)$ of the $N$ subsystems, where each $\gamma_j$ is a density matrix over $\mathfrak{h}_j$. 

Then the energy of the composite system at temperature $\varepsilon > 0$ is given by
\begin{align}
	\label{eq:multimarginalGSEposT}
	\inf_{\Gamma \mapsto \bm \gamma} \left\{\Tr(\H\Gamma)+\varepsilon S(\Gamma)\right\}&= \sum_{j=1}^N\Tr(\H_j \gamma_j)+\primin(\bm \gamma)\nonumber\\
	&:=\sum_{j=1}^N\Tr(\H_j \gamma_j)+\inf_{\Gamma \mapsto \bm \gamma} \left\{\Tr(\H_{\text{int}}\Gamma)+\varepsilon S(\Gamma)\right\},
\end{align}
where the shorthand notation $\Gamma \mapsto \bm \gamma$ denotes the set of density matrices over $\mathfrak{h}$ with $j$-th marginal equal to $\gamma_j$, and $S(\Gamma):=\Tr\left(\Gamma \log(\Gamma)\right)$ is the opposite of the Von Neumann entropy of $\Gamma$ (note that we prefer to adopt the mathematical sign convention).

Our approach for the study of $\primin(\bm \gamma)$ borrows ideas from optimal transport and convex analysis, and takes the following observation as a starting point: the minimization appearing in $\primin$ can be cast as a non-commutative entropic optimal transport problem. Indeed, one looks for an optimal non-commutative coupling $\Gamma$, with fixed non-commutative marginals (i.e. partial traces) $\bm \gamma$, which minimizes the sum of a transport cost (given by $\Tr(\H_{\text{int}}\Gamma)$) and an entropic term. In light of this interpretation, setting the quantum problem at positive temperature $\varepsilon$ corresponds to consider an entropic optimal transport problem with parameter $\varepsilon$. 

Guided by this viewpoint, we first show that $\primin$ has a dual formulation (see Theorem \ref{theo:duality} (i)), i.e. that the constrained minimization appearing in its definition is in duality with an unconstrained maximization problem (defined in \eqref{eq:dual_statements}). We can then consider any vector $(U^{\varepsilon}_1,\dots,U^{\varepsilon}_N)$ of self-adjoint matrices which is a maximizer in the dual functional of $\primin$, whose existence and uniqueness up to trivial transformations we prove in Theorem \ref{theo:duality}(ii). We refer to such $U^{\ep}_i$-s as \textit{Kantorovich potentials} and show in Theorem \ref{theo:duality}(iii) that the unique minimizer $\Gamma^{\varepsilon}$ realizing $\primin(\bm\gamma)$ can be written in terms of them as
\begin{align}
	\label{eq:shape}
	\Gamma^{\varepsilon}=\exp\left( \frac {\bigoplus_{i=1}^N  U^{\varepsilon}_i-\H_{\text{int}}}{\ep}\right),
\end{align} 
in the case of all the $\gamma_j$-s having trivial kernels (in the general case a very similar formula holds). In this setting, $\primin$ is continuous and its functional derivative can be computed in terms of the \textit{Kantorovich potentials} as
\begin{align}
	\label{eq:FunctDer}
	\frac{\dd\primin}{\dd \gamma_i} (\bm\gamma)=U_i^{\varepsilon}, \quad \text{for all} \quad i=1,\dots, N,
\end{align}
as we show in Proposition \ref{prop:OTcont}.

Furthermore, we introduce the Non-Commutative Sinkhorn algorithm to compute the optimizer realizing $\primin(\bm \gamma)$. This algorithm exploits the shape of the minimizer obtained in \eqref{eq:shape}, in order to construct a sequence $\Gamma^{(k)}$ of density matrices converging to $\Gamma^{\ep}$ of the form 
\begin{align}
	\Gamma^{(k)}=\exp\left(\frac{\bigoplus_{i=1}^N U_i^{(k)}-\H_{\text{int}}}{\varepsilon}\right),
\end{align}
where the vector $(U_1^{(k)},\dots,U_N^{(k)})$ is iteratively updated by progressively imposing that $\Gamma^{(k)}$ has at least one correct marginal. We prove the convergence and the robustness of this algorithm in Section \ref{sec:noncomsin}. 
%

It is important to note that studying $\primin(\bm \gamma)$, i.e. the constrained minimization at fixed marginals, can also help solving the unconstrained minimization of the Hamiltonian $\H$ at positive temperature $\ep$. Indeed, denoting by $\DM(\mathfrak{h})$ the set of density matrices over $\mathfrak{h}$, then
\begin{align}
	\label{eq:Uncostrained}
	E^{\ep}(\H):=\inf_{\Gamma \in \DM(\mathfrak{h})} \left\{\Tr(\H\Gamma)+\ep S(\Gamma)\right\}=\inf_{\bm \gamma} \left\{\sum_{j=1}^N\Tr(\H_j \gamma_j)+\primin(\bm \gamma)\right\}.
\end{align}
Combining \eqref{eq:FunctDer} and \eqref{eq:Uncostrained} allows to write down the Euler--Lagrange equation of \eqref{eq:Uncostrained} recovering its optimizer, i.e. the Gibbs state constructed with $\H$ at temperature $\ep$.

Our work is not the first to try to extend the theory of optimal transport to the non-commutative setting. One of the first attempts was carried out by E. Carlen and J. Maas \cite{CarMas14}, followed by many others (e.g. \cite{BreVor20, CalGolPau18, CalGolPau19, CheGanGeoTan19, CheGeoan18, DPaTre19, DPaTreGioAmb18, GeoPav15, MitMie17, MonVor20, PeyChiViaSol19}). There is an important distinction to be made here. Commutative optimal transport can be cast \textit{equivalently} as a static coupling problem or as a dynamical optimization problem. On the other hand, in the non-commutative setting it is not clear what is the relation (if any) between the two interpretations. This singles out a big difference between works that consider the dynamical formulation of commutative optimal transport as a starting point (e.g. \cite{BreVor20, CarMas14, CheGanGeoTan19, CheGeoan18, MitMie17, MonVor20, PeyChiViaSol19}) and the ones which instead focus on its static formulation (e.g. \cite{Cut13,GalSal,LeoSurvey,Schr31,Zam86}). 
%
%

This paper adopts an even different approach. We consider as a starting point the Entropic regularization of optimal transport (which is to be considered as an extension of static optimal transport, see e.g. the survey \cite{LeoSurvey} and references therein)
and introduce its non-commutative counterpart. We carry out this program by extending the method developed in \cite{DMaGer19,DMaGer20,GerKauRajEnt}. See also Section \ref{sec:noncomsin} for a detailed explanation of the multimarginal Sinkhorn algorithm in the commutative setting, as studied in \cite{DMaGer19}.

In the work \cite{CalGolPau18}, the authors study the case of $\eps=0$ temperature and prove a duality result for the non-commutative problem in the very same spirit of the Kantorovich duality for the classical Monge problem. The recent work \cite{Wirth:2021} studies the entropic quantum optimal transport problem as well, adopting, in constrast to our static approach, a dynamical formulation. Therein, the author proves a dynamical duality result at positive and zero temperature. 
To the best of our knowledge, the present work is the first complete analysis of the quantum entropic transport problem in the static framework.

As for the Sinkhorn Algorithm, another concept which we borrow from the commutative setting and extend to the quantum one, its convergence in the commutative setting was first established in the $N=2$ marginal case \cite{FraLor89,Sin64} for discrete measures and in \cite{RusIPFP} for continuous measures (see also \cite{chen2016entropic}). In the multi-marginal setting, convergence guarantees were obtained for the discrete case in \cite{ChiPeySchVia18,KarlRin17} and for continuous measures in \cite{DMaGer19,DMaGer20}. Other variants of the Sinkhorn algorithm for (unbalanced) tensor-valued measures or matrix optimal mass transport have been studied in \cite{PeyChiViaSol19, RyuCheWucOsh18} and do not apply to our setting. In the context of Computational Optimal Transport, the entropic regularization and the Sinkhorn algorithm was introduced in \cite{Cut13, GalSal}.

\subsection*{Enforcing symmetry constraints: One-body Reduced Density Matrix Functional Theory}

We conclude this introduction by briefly discussing the case in which symmetry conditions are enforced on the problem, either bosonic or fermionic, which we can also treat (see Section \ref{sec:symmetryresults}). In this case, \eqref{eq:multimarginalGSEposT} makes sense only for $\mathfrak{h}_j=\mathfrak{h}_0$ for all $j=1,\dots,N$ and $\bm \gamma=(\gamma,\dots, \gamma)$ (i.e. the underlying Hilbert spaces and the marginals must all be the same) and its study can be framed in the context of One-body Reduced Density Matrix Functional Theory ($1$RDMFT), introduced in 1975 by Gilbert \cite{Gil75} as an extension of the Hohenberg-Kohn (Levy-Lieb) formulation of Density Functional Theory (DFT) \cite{HohKoh,Levy,Lieb83}. In the last decades, DFT and $1$RDMFT have been standard methods for numerical electronic structure calculations and are to be considered a major breakthrough in fields ranging from materials science to chemistry and biochemistry. \\
In both these theories one tries to approximate a complicated N-particle quantum system by studying one-particle objects, namely one-body densities in the case of DFT and one-body reduced density matrices in the case of 1RDMFT, by using a two-steps minimization analogous to the one introduced in \eqref{eq:Uncostrained}. 

It is interesting to see that the well-known Pauli principle (see e.g. \cite[Theorem 3.2]{LiebSeiringer10}) 
, which provides necessary and sufficient conditions for $\gamma$ to be the one-body reduced density matrix of an $N$-body antisymmetric density matrix, finds a variational interpretation in our discussion. Indeed, in the antisymmetric case we show (see Proposition \ref{prop:pauli}) that $\gamma$ satisfies the Pauli principle (resp. satisfies the Pauli principle \textit{strictly}) if and only if the supremum of the dual functional of $\primin$ is finite (resp. is attained), as it is to be expected. 

Other extensions of DFT have been considered, including Mermin's Thermal Density Functional Theory \cite{Mer65}, Spin DFT \cite{vonBarHed72}, and Current DFT \cite{VigRas87}. Physical and computational aspects of 1RDMFT have been investigated in \cite{ayers2005generalized, BalEicGro15,BalEicGro12, BalEicGro12c, BuiBae02, Men15, Mul84,Per05,PerGie15, ReqPan08,Sch19,Lee07}. A framework for {1RDMFT} for Bosons at zero temperature was recently introduced in \cite{BenWolMigSch20} (see also \cite{GieRug19} and references therein for a recent review). In particular, the first exchange-correlation energy in density-matrix functional theory was introduced by M\"uller \cite{Mul84}, leading to mathematical results \cite{FraLieSei07,FraNamVan18}.

\subsection*{Organisation of the paper}

The paper is divided as follows: in Section \ref{sec:contributions} we introduce the framework, the main definitions, and present our main results Theorem \ref{theo:duality}, Theorem \ref{theo:introsinkhorn}, and Theorem \ref{theo:dualBosFerm}. In Section \ref{sec:prem} we introduce and develop the technical tools needed to prove our main results, in particular we define the notion of non-commutative $(\H,\ep)$-transform (see Section \ref{sec:Heptransform}) and prove a stability and differentiability result for the primal problem in Proposition \ref{prop:OTcont}. In Section \ref{sec:mainresults}, Section \ref{sec:noncomsin}, and Section \ref{sec:1RDM} we build upon Section \ref{sec:prem} and prove our main results, respectively, Theorem \ref{theo:duality}, Theorem \ref{theo:introsinkhorn}, and Theorem \ref{theo:dualBosFerm}.

\section{Contributions and statements of the main results}\label{sec:contributions}
The main contributions of this work consist in 
\begin{itemize}
	\item Theorem \ref{theo:duality}, which represents a duality result for the functional $\primin$ (whose definition is recalled below in equation \eqref{eq:primal_statements}). Theorem \ref{theo:duality} also includes the characterization of the optimizers of $\primin$ (and of its dual functional). 
	\item The introduction of a non-commutative Sinkhorn algorithm, which can be used to compute the aforementioned optimizers. We also prove convergence and robustness of this algorithm in Theorem \ref{theo:introsinkhorn}.
	\item The generalization of Theorem \ref{theo:duality} to the case of bosonic or fermionic systems, stated in Theorem \ref{theo:dualBosFerm}. This also allows to give an interesting variational characterization of the Pauli exclusion principle (see Proposition \ref{prop:pauli}).
\end{itemize}

In our analysis, the main tool is the non-commutative $(\H, \eps)$-transform (introduced in Section \ref{sec:Heptransform}), which allows to obtain a priori estimates on
maximizing sequences of Kantorovich potentials, yielding compactness. Although this approach is not strictly necessary to prove duality (Theorem \ref{theo:duality})
in our finite dimensional setting, we believe these estimates to be of independent interests.
Moreover, they are fundamental to prove the convergence of the non-commutative
Sinkhorn algorithm (Theorem \ref{theo:introsinkhorn}).

We now proceed to introduce our setting and state our main contributions.

\subsection{Duality and minimization of $\primin$}

We recall that in this case we simply work with a general composite system, with no symmetry constraints enforced. For $d \in \N$, we shall denote by $\cM^d = \cM^d(\bC)$ the set of all $d \times d$ complex matrices, by $\cS^d$ the hermitian elements of $\cM^d$, and by $\cS_\geq^d$ (respectively $\cS_>^d$) the set of all the positive semidefinite (resp. positive definite) elements of $\cS^d$. With a slight abuse of notation, we denote by $\Tr$ the trace operator on $\cM_d$ for any dimension $d$.  Furthemore, for any Hilbert space $\mathfrak{h}$, we denote by $\DM(\mathfrak{h})$ the set of \textit{density matrices} over $\mathfrak{h}$, namely the positive self-adjoint operators with trace one. For simplicity, we shall also use the notation $\DM^d=\DM(\bC^d)$. For every $N \in \N$ we adopt the notation $[N]:= \{1, ... , N\}$.

Our main object of study is the minimisation problem for $N \in \N$, $i \in [N]$, $\gamma_i \in \DM^{d_i}$, $\H \in \cS^{\bm d}$ 
\begin{align}	\label{eq:primal_statements}
	\prim(\bm \gamma) = \inf \left\lbrace \Tr(\H\Gamma)+ \eps  \Tr(\Gamma \log \Gamma) \suchthat \Gamma\in\DM^{\bm d} \text{ and } \Gamma\mapsto \bm \gamma  \right\rbrace,
\end{align}
where $d_i \in \N$, $\bm d: =\prod_{i=1}^N d_i$,  $\bm \gamma := (\gamma_i)_{i \in [N]}$, and $\Gamma\mapsto \bm \gamma $ means that the $i$-th marginal \eqref{eq:defP_i} of $\Gamma$ is equal to $\gamma_i$. This coincides with the Definition of $\primin$ given in \eqref{eq:multimarginalGSEposT}. 

The natural space to work with is given by
$	\mathcal O:= \bigotimes_{i=1}^N \big( \ker \gamma_i \big)^\perp 
$
%
where for simplicity we set $\hat d_i := (d_i - \dim \ker \gamma_i)$ and $\mathbf{\hat{d}}: =\prod_{i=1}^N \hat d_i$. We also denote by $\H_{\mathcal O}$ the restriction of $\H$ to the subspace $\mathcal O$.  The corresponding dual problem is defined as
\begin{gather}	\label{eq:dual_statements}
	\dualin(\bm \gamma) = \sup \Bigg\{  \sum_{i=1}^N \Tr(U_i \gamma_i)-\eps \Tr \bigg( \exp\bigg[ \frac{\bigoplus_{i=1}^N U_i - \H_{\mathcal O}}{\eps} \bigg] \bigg) \suchthat U_i\in \cS^{\hat  d_i} \Bigg\} + \eps  \, ,
\end{gather}
where $\bigoplus$ denotes the Kronecker sum \eqref{eq:def_Kronecker}.

Our first result is a duality result and serves also as a characterization of the minimizers in \eqref{eq:primal_statements}.
Note that, throughout the whole paper, when no confusion can arise, we shall use the slightly imprecise notation $\alpha \unit = \alpha$ for $\alpha \in \bC$.

\begin{theo}[Duality]
	\label{theo:duality}
	Let $\ep >0$, $N \in \N$, and $\H \in \cS^{\bm d}$. For fixed $\bm \gamma =( \gamma_i \in \DM^{d_i} )_{i \in [N]} $, consider the primal and dual problems $\primin(\bm \gamma)$, $\dualin(\bm \gamma)$ as in \eqref{eq:primal_statements}, \eqref{eq:dual_statements} respectively. We then have that
	\begin{itemize}
		\item[(i)] the primal and dual problems coincide, i.e. $\primin (\bm \gamma) = \dualin(\bm \gamma)$.
		\item[(ii)] $\dualin(\bm \gamma)$ admits a maximizer $\{   U_i^\ep \in \cS^{\hat d_i} \}_{i=1}^N$, which is unique up to trival translation. Precisely, if $\{\tilde U_i^\ep \in \cS^{\hat d_i} \}_{i=1}^N$ is another maximizer, then $\tilde U_i^\ep - U_i^\ep = \alpha_i \in \R$ with $\sum_i \alpha_i =0$.
		\item[(iii)] There exists a unique $ {\Gamma}^{\ep}\in \DM^{\bm d}$ with $ {\Gamma}^{\ep} \mapsto \bm \gamma$ which minimizes the functional $\primin(\bm \gamma)$. Moreover, $ {\Gamma}^{\ep}$ and $\{   U_i^\ep \}$ are related via the formula
		\begin{align}	\label{eq:optimality_maintheorem}
			{\Gamma}^{\ep}=\exp\left( \frac {\bigoplus_{i=1}^N   U_i^\ep-\H_{\mathcal O}}{\ep}\right) \quad \text{on} \; \mathcal O
		\end{align}
		and $\Gamma^{\ep}=0$ on $\mathcal O^{\perp}$.
		
	\end{itemize}  
\end{theo}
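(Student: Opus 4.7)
My plan is to prove the three parts in a coupled way, pivoting on the simultaneous existence of primal and dual optima. I would first establish weak duality $\primin(\bm \gamma) \geq \dualin(\bm \gamma)$. Any admissible $\Gamma \mapsto \bm \gamma$ must be supported in $\mathcal O$, since the partial-trace constraint kills components in any $\ker\gamma_i$, and hence $\Tr(\H\Gamma) = \Tr(\H_{\mathcal O}\Gamma)$. For arbitrary $U_i \in \cS^{\hat d_i}$ one has $\sum_i \Tr(U_i\gamma_i) = \Tr\bigl((\bigoplus_i U_i)\Gamma\bigr)$; combining this with the matrix Gibbs inequality $\Tr(K\Gamma) + \ep\Tr(\Gamma\log\Gamma) - \ep\Tr(\Gamma) \geq -\ep\Tr(\exp(-K/\ep))$ applied to $K = \H_{\mathcal O} - \bigoplus_i U_i$, together with $\Tr(\Gamma) = 1$, yields the weak inequality after rearrangement.

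Existence and uniqueness of the primal minimizer $\Gamma^\ep$ then follow immediately: the primal functional is continuous and strictly convex on the nonempty compact convex admissible set (which always contains the product $\gamma_1 \otimes \cdots \otimes \gamma_N$). For the dual side, which is the main analytic step, I would invoke the non-commutative $(\H,\ep)$-transform from Section \ref{sec:Heptransform} to produce the essential a priori estimate: any maximizing sequence $\{U_i^{(k)}\}$ can be replaced (without decreasing the dual value) by one whose spectrum is controlled in terms of $\H$, $\bm \gamma$ and $\ep$. The dual functional is manifestly invariant under the trivial translations $U_i \mapsto U_i + \alpha_i \unit$ with $\sum_i\alpha_i = 0$, so after fixing a gauge (e.g. $\Tr(U_i^{(k)}) = 0$ for $i\geq 2$) one extracts a convergent subsequence by finite-dimensional compactness and, by continuity of $\dualin$, obtains a maximizer $\{U_i^\ep\}$. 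I expect the construction of this a priori bound through the $(\H,\ep)$-transform to be the main obstacle, as the classical $c$-transform argument does not transpose directly to the non-commutative setting.

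With $\{U_i^\ep\}$ in hand, I would define $\widehat\Gamma := \exp\bigl((\bigoplus_i U_i^\ep - \H_{\mathcal O})/\ep\bigr)$ on $\mathcal O$ (and zero on $\mathcal O^\perp$). The first-order stationarity conditions $\partial_{U_i}\dualin = 0$ read exactly that the partial traces of $\widehat\Gamma$ equal $\gamma_i$, so $\widehat\Gamma \mapsto \bm \gamma$ and in particular $\Tr(\widehat\Gamma) = 1$, making $\widehat\Gamma$ admissible for the primal. A direct substitution then shows that the primal value at $\widehat\Gamma$ coincides with the dual value at $\{U_i^\ep\}$; combined with weak duality this simultaneously closes (i) and identifies $\Gamma^\ep = \widehat\Gamma$, giving the formula in (iii). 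Uniqueness in (ii) follows because any other dual maximizer $\{\tilde U_i^\ep\}$ must generate, via the same formula, the same (unique) primal minimizer $\Gamma^\ep$; consequently $\bigoplus_i (U_i^\ep - \tilde U_i^\ep)$ vanishes on $\mathcal O$, and a short linear algebra argument then forces $U_i^\ep - \tilde U_i^\ep = \alpha_i \unit$ with $\sum_i \alpha_i = 0$.
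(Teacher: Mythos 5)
Your proposal is correct and follows essentially the same strategy as the paper: weak duality from the Legendre/Gibbs inequality for the entropy (Proposition \ref{prop:lowerbound_multim}), existence of a dual maximizer by applying the $(\H,\ep)$-transform to a maximizing sequence to obtain a priori bounds and then fixing a gauge to extract a convergent subsequence (the paper's renormalisation map $\ren$ composed with the Sinkhorn operator $\tau$, Proposition \ref{propN:uniform_bounds}), and strong duality by plugging the stationary $\widehat\Gamma$ back into the primal functional (Remark \ref{rem:nonCommSch_multim}), with uniqueness of the dual maximizer deduced from uniqueness of the primal minimizer (Lemma \ref{lem:characterizionmax}). The only cosmetic differences are that you establish existence/uniqueness of the primal minimizer independently at the start (the paper obtains it as a byproduct), and you suggest a trace-based gauge rather than the paper's $\lambda_\ep$-based one — both are fine since each absorbs the one-dimensional scalar freedom that the transform estimates leave unresolved.
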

The proof of the existence of maximizers for the dual problem follows the direct method of Calculus of Variations. In analogy with \cite{DMaGer19,DMaGer20}, where the notion of commutative $(c,\eps)$-trasform is introduced, we define the non-commutative $(\H,\ep)$-transform (see Section \ref{sec:Heptransform}). We use this tool to obtain a priori estimates on $U$ and infer compactness of the maximizing sequences of Kantorovich potentials.

%
As a byproduct of the a priori estimates obtained in Section \ref{sec:Heptransform}, it is possible to prove a stability result (with respect to the marginals) for the Kantorovich potentials and compute the Frech\'et derivative of $\prim(\cdot)$. This is the content of the following proposition, which is proved in Section \ref{sec:mainresults}.
For simplicity, we here assume that the marginals have trivial kernel. With a bit more effort, and arguing as in Theorem \ref{theo:duality} (see also Remark \ref{rem:kernels}), one can obtain a similar result in the general setting as well.

\begin{proposition}[Stability and differentiability of $\prim(\cdot)$]	\label{prop:OTcont}
	Fix $\eps>0$ and assume $\ker(\gamma_i) = \{ 0 \}$.
	\begin{enumerate}
		\item[$(i)$] \emph{Stability}: if $\bm \gamma^n = (\gamma^n_i)_{n\in\N} $, $\gamma^n_i \subset \DM^{d_i}$ is a sequence of density matrices converging to $\bm \gamma = (\gamma_i)_{n\in\N}$ as $n \to \infty$, then any sequence of Kantorovich potentials $\bm U^{\ep,n}$ converges, up to subsequences and renormalisation, to a Kantorovich potential $\bm U^\ep$ for $\prim(\bm \gamma)$. Therefore, the functional $\prim(\cdot)$ is continuous.
		\item[$(ii)$] \emph{Frech\'et differential}:  $\prim(\cdot)$ is Fr\'echet differentiable and for every $i \in [N]$ it holds
		\begin{align}	\label{eq:FrechDiff}
			\Big(\dfrac{d\prim}{d\gamma_i}\Big)_{\bm \gamma}(\sigma) = \Tr \big( \bm U^\ep_i \sigma \big), \quad \forall \sigma \in \cS^{d_i}, \; \Tr(\sigma) =0 \, , 
		\end{align}
		where $\bm U^\ep$ is a Kantorovich potential for $\prim(\bm \gamma)$.
	\end{enumerate}
\end{proposition}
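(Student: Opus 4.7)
Introduce the dual functional
$$\cF(\bm \gamma, \bm U) := \sum_{i=1}^N \Tr(U_i \gamma_i) - \eps \Tr\Big( \exp\Big[ \tfrac{\bigoplus_i U_i - \H}{\eps} \Big] \Big) + \eps,$$
so that Theorem \ref{theo:duality} reads $\prim(\bm \gamma) = \sup_{\bm U} \cF(\bm \gamma, \bm U)$, the supremum being attained by any Kantorovich potential $\bm U^\ep$, unique modulo the shift $U_i \mapsto U_i + \alpha_i$ with $\sum_i \alpha_i = 0$. The two decisive features are that $\cF$ is jointly continuous in $(\bm \gamma, \bm U)$ and \emph{affine} in $\bm \gamma$. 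Both parts of the proposition will follow by combining these with the a priori bounds on Kantorovich potentials provided by the non-commutative $(\H,\ep)$-transform of Section \ref{sec:Heptransform}.

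\textbf{Part (i): stability and continuity.} For each $n$, Theorem \ref{theo:duality}(ii) yields a Kantorovich potential $\bm U^{\ep,n}$ for $\prim(\bm \gamma^n)$; fix the shift ambiguity by imposing, e.g., $\Tr(U_i^{\ep,n}) = 0$ for $i \geq 2$. The a priori bounds of Section \ref{sec:Heptransform} depend on the marginals through spectral data (such as their smallest eigenvalues) that stay uniformly controlled thanks to $\ker(\gamma_i) = \{0\}$ and the convergence $\bm \gamma^n \to \bm \gamma$; hence $\bm U^{\ep,n}$ is bounded in $\bigoplus_i \cS^{d_i}$. Extract a convergent subsequence $\bm U^{\ep,n} \to \bm U^{\ep,\infty}$. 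For any fixed competitor $\bm V$, the inequality $\cF(\bm \gamma^n, \bm U^{\ep,n}) \geq \cF(\bm \gamma^n, \bm V)$ passes to the limit by joint continuity of $\cF$, identifying $\bm U^{\ep,\infty}$ as a Kantorovich potential for $\prim(\bm \gamma)$. Continuity of $\prim$ then follows from $\prim(\bm \gamma^n) = \cF(\bm \gamma^n, \bm U^{\ep,n}) \to \cF(\bm \gamma, \bm U^{\ep,\infty}) = \prim(\bm \gamma)$.

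\textbf{Part (ii): Fr\'echet differentiability.} Fix directions $\bm \sigma = (\sigma_i)_{i \in [N]}$ with $\sigma_i \in \cS^{d_i}$ and $\Tr(\sigma_i) = 0$. For $|t|$ sufficiently small, $\bm \gamma + t \bm \sigma$ is still a tuple of density matrices with trivial kernel, so Theorem \ref{theo:duality}(ii) furnishes a Kantorovich potential $\bm U^{\ep,t}$ at $\bm \gamma + t\bm \sigma$, and part (i) gives (after suitable renormalisation) $\bm U^{\ep,t} \to \bm U^\ep$ as $t \to 0$. Testing the dual problem at $\bm \gamma + t\bm \sigma$ with the competitor $\bm U^\ep$, and the one at $\bm \gamma$ with the competitor $\bm U^{\ep,t}$, the affinity of $\cF$ in $\bm \gamma$ yields the sandwich
\begin{align*}
t \sum_{i=1}^N \Tr(U_i^\ep \sigma_i) \;\leq\; \prim(\bm \gamma + t\bm \sigma) - \prim(\bm \gamma) \;\leq\; t \sum_{i=1}^N \Tr(U_i^{\ep,t} \sigma_i),
\end{align*}
valid for any sign of $t$. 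Since $\Tr(\sigma_i) = 0$, the quantity $\sum_i \Tr(U_i \sigma_i)$ is invariant under the shift ambiguity, so stability implies $\sum_i \Tr(U_i^{\ep,t} \sigma_i) \to \sum_i \Tr(U_i^\ep \sigma_i)$ as $t \to 0$. Consequently $\prim(\bm \gamma + t \bm \sigma) = \prim(\bm \gamma) + t \sum_i \Tr(U_i^\ep \sigma_i) + o(t)$, which proves Gateaux differentiability with derivative as in \eqref{eq:FrechDiff}. Convexity of $\prim$ (it is a supremum of affine functionals) combined with the finite-dimensional setting upgrades Gateaux to Fr\'echet differentiability, via the classical fact that a convex function on a finite-dimensional space is Fr\'echet differentiable at any point where it is Gateaux differentiable.

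\textbf{Main obstacle.} The crux is to verify that the a priori estimates of Section \ref{sec:Heptransform} are \emph{uniform} under perturbation of the marginals in a neighbourhood of $\bm \gamma$. This amounts to tracking the dependence of the $(\H,\ep)$-transform bounds on $\bm \gamma$ and checking that, under $\ker \gamma_i = \{0\}$, the relevant spectral quantities (e.g., minimal eigenvalues of the $\gamma_i$) vary continuously enough to produce a uniform control along the sequence.
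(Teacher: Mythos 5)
Your proposal is correct and follows essentially the same route as the paper: in part (i) you use a priori compactness of (suitably normalized) Kantorovich potentials coming from the $(\H,\ep)$-transform bounds, then pass optimality to the limit; in part (ii) you exploit the affine dependence of the dual functional on $\bm\gamma$ to run the two-sided sandwich argument and identify the derivative with $\Tr(U_i^\ep\,\cdot\,)$. The only cosmetic differences from the paper's proof are your choice of a trace-zero normalisation instead of the paper's $\ren$ operator (both work, since $\|\log\gamma_i^n\|_\infty$ stays uniformly bounded when $\gamma_i^n\to\gamma_i>0$, so the two normalisations differ by bounded shifts), and the explicit remark that convexity plus Gateaux differentiability yields Fr\'echet differentiability in finite dimensions, which the paper leaves implicit.
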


As derived in \cite{Gil75} and explained, for instance, in \cite{Per05}, the relevance of the functional derivative in the 1RDMFT case is to find an eigenvalue equation to find an efficient optimization for the one-particle eigenvalue equations.

\subsection{Non-commutative Sinkhorn algorithm}

The second contribution of this work is to introduce and  prove the convergence of a non-commutative Sinkhorn algorithm (see Section \ref{sec:noncomsin}), aimed at computing numerically the optimal density matrix $\Gamma^{\ep}$ and the corresponding Kantorovich potentials $\{ U_i^{\ep} \}_i$.

For this purpose, we define non-commutative  $(\H,\eps)$-transform operators, which extend the notion of $(c,\ep)-$transforms as introduced in \cite{DMaGer19} (see also Section \ref{sec:noncomsin} for a detailed explanation). Note that the $(\H,\eps)$-transform also depends on $\bm \gamma$, but we omit this dependence as $\bm \gamma$ is a fixed parameter of the problem. 

For $i \in [N]$ and $\eps>0$, we consider the operators $\Sink_i : \bigtimes_{j=1}^N \cS^{\hat d_j} \to \bigtimes_{j=1}^N \cS^{\hat d_j}$ of the form 
\begin{align*}
	\bm U:= (U_1, \dots, U_N), \quad \big(\Sink_i(\bm U) \big)_j = 
	\begin{cases}
		U_j &\text{if } j \neq i, \\
		\fT_i^\ep(U_1, \dots, U_{i-1}, U_{i+1} , \dots U_N) &\text{if }j=i
	\end{cases} 
\end{align*} 
where $\fT_i^\ep$ is defined implicitly via 
\begin{align}	\label{eq:def_implicit-T_i_intro}
	\tP_i \left[ \exp\left(\frac {\bigoplus_{j=1}^N \big( \Sink_i(\bm U) \big)_j-\H_{\mathcal O}}{\ep}\right) \right] = \gamma_i
\end{align}
and $\tP_i$ denotes the $i$-th marginal operator, obtained by tracing out all but the $i$-th coordinate, see \eqref{eq:defP_i}. In Section \ref{sec:noncomsin}, we show that the maps $\Sink_i$ are well-defined, i.e. the equation \eqref{eq:def_implicit-T_i_intro} admits a unique solution $\Sink_i(\bm U)$. 

Note that, by construction, the matrix $\exp\left(\bigoplus_{i=1}^N \left(\left(\Sink_i(\bm U) \right)_j-\H_{\mathcal O}\right)/\ep\right) \in \DM^{\mathbf{\hat d}}$ and it has the $i$-th marginal equal to $\gamma_i$. The non-commutative Sinkhorn algorithm is then defined by iterating this procedure for every $i\in[N]$. We define the one-step Sinkhorn map as 
\begin{align*}
	\begin{gathered}
		\tau: \bigtimes_{j=1}^N \cS^{\hat d_j} \to \bigtimes_{j=1}^N \cS^{\hat d_j} , \\
		\tau(\bm U) := (\Sink_N \circ \dots \circ \Sink_1) (\bm U).
	\end{gathered}
\end{align*}

The Sinkhorn algorithm is obtained by iteration of the map $\tau$ and this is sufficient to guarantee that the limit point of the resulting sequence is an optimizer for the dual problem \eqref{eq:dual_statements}, as stated in the following Theorem. 

\begin{theo}[Convergence of the non-commutative Sinkhorn algorithm]\label{theo:introsinkhorn}
	Fix $\eps>0$. The definition \eqref{eq:def_implicit-T_i_intro} of the operators $\Sink_i$ is well-posed. Additionally, for any initial matrix $\bm U^{(0)} = (U_1, \dots , U_N) \in \bigtimes_{j=1}^N \cS^{\hat d_j}$, there exist $\bm \alpha^k \in \bR^{N}$ with $\sum_{i=1}^N \bm \alpha^k_i =0$ such that
	\begin{align}
		\bm U^{(k)}
		:= \tau^k( \bm U^{(0)}) + \bm \alpha^k \rightarrow  {\bm U}^\ep \quad \text{as } k \to +\infty ,
	\end{align}
	where $ { \bm U }^\ep = (  U_1^\ep, \dots,   U_N^\ep)$ is optimal for the dual problem and $\tau^k:= \underbrace{\tau \circ \dots \circ \tau}_{k\text{-times}}$.
	
	\noindent
	Consequently, if one defines for  $k \in \bN$
	\begin{align}
		\Gamma^{(k)}:=\exp\left(\frac {\bigoplus_{i=1}^N (\bm U^{(k)})_i-\H_{\mathcal O}}{\ep}\right) \quad \text{on } \mathcal O,
	\end{align}
	and $0$ on $\mathcal O^{\perp}$, then $\Gamma^{(k)}\to  {\Gamma^{\ep}}$ as $k \to +\infty$ where $ { \Gamma^{\ep}}$, $ { \bm U }^\ep$ satisfy \eqref{eq:optimality_maintheorem}. In particular, $  \Gamma^{\ep}$ is optimal for $\primin(\bm \gamma)$.
\end{theo}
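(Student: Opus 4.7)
The plan is to exploit the well-known variational characterization of each Sinkhorn update: with all but the $i$-th potential fixed, the map $\fT_i^\ep$ in \eqref{eq:def_implicit-T_i_intro} is the unique partial maximizer of the dual functional $\dualin$ in its $i$-th slot. First I would establish well-posedness. Given $(U_j)_{j\neq i}$, substituting a free variable $X \in \cS^{\hat d_i}$ into the $i$-th entry of $\bm U$ makes $\dualin$ a strictly concave and coercive function of $X$ (crucially, fixing the other $N-1$ coordinates removes all residual translation ambiguity). Its gradient vanishes precisely when the $i$-th marginal of the associated Gibbs state equals $\gamma_i$, which yields existence and uniqueness of $\Sink_i(\bm U)$, together with the monotonicity property
\begin{align*}
\dualin(\Sink_i(\bm U)) \geq \dualin(\bm U), \quad \text{with equality iff } \tP_i\Bigl[\exp\Bigl(\bigl(\textstyle\bigoplus_j U_j - \H_{\mathcal O}\bigr)/\ep\Bigr)\Bigr] = \gamma_i.
\end{align*}

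Next I would prove convergence. By the above monotonicity, $k \mapsto \dualin(\bm U^{(k)})$ is non-decreasing, and by weak duality (Theorem \ref{theo:duality}(i)) it is bounded above by $\primin(\bm \gamma)$; hence it converges, and in particular the one-step increments vanish as $k \to \infty$. To obtain compactness of the iterates I would invoke the a priori estimates on the non-commutative $(\H,\ep)$-transform developed in Section \ref{sec:Heptransform}: after each full cycle the components of $\tau^k(\bm U^{(0)})$ are given by $\fT_j^\ep$ of the remaining ones, and therefore are uniformly controlled modulo the $(N-1)$-dimensional translation freedom $\bm U \mapsto \bm U + \bm \alpha$ with $\sum_i \alpha_i = 0$. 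Choosing $\bm \alpha^k$ to fix a gauge for this freedom, the normalized sequence $\bm U^{(k)} = \tau^k(\bm U^{(0)}) + \bm \alpha^k$ stays in a compact subset of $\bigtimes_j \cS^{\hat d_j}$.

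With compactness secured, let $\bm U^*$ be any subsequential limit. Since each $\Sink_i$ is continuous and the dual increments vanish asymptotically, the strict concavity of the partial dual forces $\Sink_i(\bm U^*) = \bm U^*$ for every $i$; equivalently, the Gibbs state $\Gamma^* := \exp\bigl(\bigl(\bigoplus_i U_i^* - \H_{\mathcal O}\bigr)/\ep\bigr)$ on $\mathcal O$ (and zero on $\mathcal O^\perp$) satisfies $\Gamma^* \mapsto \bm \gamma$, and the primal-dual optimality formula \eqref{eq:optimality_maintheorem} then identifies $\bm U^*$ as a dual optimizer and $\Gamma^*$ as the unique primal optimizer $\Gamma^\ep$. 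Since the dual maximizer is unique up to translation by Theorem \ref{theo:duality}(ii), the gauge choice selects a single limit point $\bm U^\ep$, so the whole sequence converges and not merely a subsequence. Finally, convergence of $\Gamma^{(k)}$ to $\Gamma^\ep$ follows automatically, because the Kronecker sum $\bigoplus_i U_i^{(k)}$, and hence $\Gamma^{(k)}$, is invariant under translations $\bm \alpha^k$ with $\sum_i \alpha_i^k = 0$.

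The main obstacle will be the compactness step: showing that the Sinkhorn iterates remain in a bounded region once the translation freedom is factored out. This is not automatic and relies crucially on the quantitative a priori estimates for the non-commutative $(\H,\ep)$-transform established in Section \ref{sec:Heptransform}, which play the role of the stability estimates for classical $(c,\ep)$-transforms in the multimarginal entropic OT literature \cite{DMaGer19,DMaGer20}. A secondary subtlety is upgrading subsequential convergence to convergence of the full sequence, which is handled by combining the uniqueness of the dual maximizer modulo translation with a standard gauge-fixing argument.
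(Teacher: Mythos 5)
Your proposal is correct and follows essentially the same route as the paper's proof (Proposition~\ref{prop:convsink}): well-posedness of $\Sink_i$ via strict concavity and coercivity of the partial dual (Lemma~\ref{lem:existence of Hep transform}, Lemma~\ref{lemmaN:Ucep}); monotonicity of $\dualf$ under the Sinkhorn map; compactness of renormalised iterates from the a priori estimates of Section~\ref{sec:Heptransform} (Proposition~\ref{propN:uniform_bounds}); and identification of any subsequential limit as a fixed point of $\tau$, hence a dual maximiser (Remark~\ref{rem:nonCommSch_multim}). The one place where you take a slightly different sub-route is the upgrade from subsequential to full convergence. You argue abstractly that ``the gauge choice selects a single limit point,'' which is valid but tacitly relies on the renormalisation map $\ren$ of Definition~\ref{def:Ren} being idempotent, having closed image, and meeting each translation orbit $\{\bm U + \bm\alpha : \sum_i \alpha_i = 0\}$ in exactly one point; these facts follow from $\lambda_\ep(\ren(\bm U)_i) = 0$ for $i<N$ and continuity of $\lambda_\ep$, but you should spell them out. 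The paper sidesteps these checks by choosing $\bm\alpha^k$ as the distance-minimising translation to a fixed subsequential limit $\bm U^\ep$ and then running a compactness/contradiction argument. Both implementations are sound; yours is slightly shorter once the gauge properties are verified, while the paper's is more self-contained.
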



\begin{rem}[Renormalisation]
	In the previous theorem, a renormalisation procedure is needed in order to obtain compactness for the dual potentials $\bm U^k$. Nonetheless, due to the fact that $\sum_{i=1}^N (\bm \alpha^k)_i = 0$ and by the properties of the operator $\bigoplus$, we observe that for $ k \in \bN$, the equality
	\begin{align*}
		\Gamma^{(k)} = \exp\left(\frac {\bigoplus_{i=1}^N  \tau^k(\bm U)_i-\H_{\mathcal O}}{\ep}\right) 
		\quad 	\text{on } \mathcal O
	\end{align*}
	is also satisfied. In fact, this shows that no renormalisation procedure is needed at the level of the primal problem, i.e. for the density matrices $\Gamma^{(k)}$.
\end{rem}

\begin{rem}(Umegaki Relative entropies)
	Similar results can be obtained if instead of the Von Neumann entropy one uses the quantum Umegaki relative entropy with respect to a reference density matrix with trivial kernel. Specifically, suppose that $m_i\in \cS^{d_i}$ with $\ker m_i = \{ 0 \}$. Then one can consider the minimisation problem
	\begin{align*}
		\prim_{\bm m}(\bm \gamma) = \inf \left\lbrace \Tr(\H\Gamma)+ \eps  S(\Gamma| \bm m) \suchthat \Gamma\in\DM^{\bm d} \text{ and } \Gamma\mapsto \bm \gamma  \right\rbrace ,
	\end{align*}
	where we set $\bm m:= \bigotimes_{i=1}^N m_i$ and $S(\Gamma| \bm m):=\Tr(\Gamma( \log \Gamma- \log \bm m)$ denotes the relative entropy of $\Gamma$ with respect to $\bm m$. The functional $\prim$ defined in \eqref{eq:primal_statements} corresponds to the case $\bm m$ equals the identity matrix. The corresponding dual functional $\dual$ as defined in \eqref{eq:dual_statements} is replaced by
	\begin{align*}
		\dualin_{\bm m}(\bm \gamma) = \sup \Bigg\{  \sum_{i=1}^N \Tr(U_i \gamma_i)-\eps \Tr \bigg( \exp\bigg[ \frac{\bigoplus_{i=1}^N U_i - \H_{\bm m}^\eps}{\eps} \bigg] \bigg) \suchthat U_i\in \cS^{d_i} \Bigg\} + \eps, 
	\end{align*}
	for a modified matrix $\H_{\bm m}^\eps := \H -\eps \log \bm m$ (restricted to $\mathcal O$ in the case of non-trival kernels). It is easy to see that our approach can also be used in this case. In particular, performing a change of variables in the dual potentials of the form $\tilde {\bm U} = \bm U+  \eps \log \bm m$ and using that $S(\Gamma | \text{Id}) =  S(\Gamma | \bm m) + \sum^N_{i=1}\left[S(\gamma_i)-S(\gamma_i|m_i)\right] $, one readily derives the validity of the same results obtained in Theorem \ref{theo:duality} and Theorem \ref{theo:introsinkhorn}, with the substitution of $\H$ with $\H_{\bm m}^\eps$. 
\end{rem}

\subsection{The symmetric case: one-body reduced density matrix functional theory}\label{sec:symmetryresults}

We are able to obtain the duality results stated above also in the symmetric cases (either bosonic or fermionic). For given $d,N \in \bN$, we set $\bm d = d^N$. We consider the bosonic (resp. fermionic) projection operator $\Pi_+$ (resp. $\Pi_-$) 
\begin{align}\label{eq:projectionFermionicBosonic}
	\Pi_+: \bigotimes_{i=1}^N \bC^d \to \bigodot_{i=1}^N \bC^d  \, ,  \qquad 	\Pi_-:\bigotimes_{i=1}^N \bC^d \to \bigwedge_{i=1}^N \bC^d   \, ,
\end{align}
where $\odot$ (resp. $\wedge$) denotes the symmetric (resp. antisymmetric) tensor product. Note that the cardinality of $\bigwedge_{i=1}^N \bC^d$ is $\binom{d}{N}$, therefore $\bigwedge_{i=1}^N \bC^d\neq\{0\}$ if and only if $N \leq d$. We denote by 
\begin{align} \label{eq:bosonicfermionicDM}
	\DM_+^{\bm d}:=\DM\left(\bigodot_{i=1}^N \bC^d\right), \quad \DM_-^{\bm d}:=\DM\left(\bigwedge_{i=1}^N \bC^d\right),
\end{align}
the set of bosonic and fermionic density matrices. We fix  $\H \in \cS^{\bm d}$ such that 
\begin{align}	\label{eq:symmetry}
	\tS_i \circ \H \circ \tS_i = \H \, , \quad \forall i = 1, \dots, N \, ,
\end{align}
where the $\tS_i$ are the permutation operators in Definition \ref{def:permope}. It is well-known that there exists $\Gamma \in \DM_-^d$ such that $\Gamma \mapsto \gamma$ (where $\Gamma \mapsto \gamma$ means that $\Gamma$ has all marginals equal to $\gamma$) if and only if $\gamma$ satisfies the \textit{Pauli exclusion principle}, i.e. if and only if $\gamma \in \DM^d$ and $\gamma\leq 1/N$ (see for example \cite[Theorem 3.2]{LiebSeiringer10}).

\begin{defin}[Bosonic and fermionic primal problems]	\label{def:bosons-fermions-primal}
	For any $\gamma \in \DM^d$, we define the \textit{bosonic} primal problem as 
	\begin{align}	\label{eq:primal_bosons}
		\prim_+(\gamma) := \inf \left\lbrace \Tr(\H\Gamma)+ \ep  \Tr(\Gamma \log \Gamma) \suchthat \Gamma\in\DM_+^{\bm d} \text{ and } \Gamma\mapsto \gamma  \right\rbrace \, .
	\end{align}
	For any $\gamma \in \DM^d$ such that $\gamma\leq 1/N$, we define the \textit{fermionic} primal problem as 
	\begin{align}	\label{eq:primal_fermions}
		\prim_-(\gamma) := \inf \left\lbrace \Tr(\H\Gamma)+ \ep  \Tr(\Gamma \log \Gamma) \suchthat \Gamma\in\DM_-^{\bm d} \text{ and } \Gamma\mapsto \gamma  \right\rbrace \,.
	\end{align}
	
\end{defin}
An analysis of the extremal points and the existence of the minimizer in \eqref{eq:primal_bosons} and \eqref{eq:primal_fermions} have been carried out in \cite{Col63} for the zero temperature case, and in \cite{GieRug19} in the positive temperature case.
\noindent
As in the non-symmetric case, we consider the associated bosonic and fermionic dual problems. For any given operator $A \in \cS^{\bm d}$, we denote by $A_\pm$ the corresponding projection onto the symmetric space, obtained as $A_\pm := \Pi_\pm \circ A \circ \Pi_\pm$ .

\begin{defin}[Bosonic and fermionic dual problems]	\label{def:bosons-fermions-dual}
	For any $\gamma \in \DM^d$, we define the \textit{bosonic}  dual functional $\fbD$ and the \textit{fermionic} dual functional $\ffD$ as 
	\begin{align}	\label{eq:primal_fermions_bosons}
		\fbfD: \cS^d \to \R \, , \quad \fbfD(U) := \Tr( U \gamma ) - \ep
		\Tr \left( 
		\exp \bigg[ \frac1\ep \bigg(\frac1N  \bigoplus_{i=1}^N U - H \bigg)_\pm  \bigg]
		\right) + \ep \,.
	\end{align}
	The corresponding dual problems are given by
	\begin{align}	\label{eq:dual_fermions_bosons}
		\bfD(\gamma) := \sup \left\{ \fbfD(U) \suchthat U \in \cS^d \right\} \, .
	\end{align}
\end{defin}

We note that a priori $\fD(\gamma)$ can be defined for any $\gamma \in \DM^d$, whereas $\prim_-(\gamma)$ is only well defined for $\gamma\in \DM^d$ such that $\gamma\leq 1/N$. This constraint on the primal problem naturally translates to an admissibility condition in order to have $\fD(\gamma) < \infty$. To ensure the existence of a maximizer for $\ffD$ we further need to impose $\gamma<1/N$. 
The following proposition gives an interesting and variational point of view of the Pauli principle, and it is proved in Section \ref{sec:dual Pauli}.
\begin{prop}[Pauli's principle and duality]		\label{prop:pauli}
	We have the following equivalences:
	\begin{enumerate}
		\item $
		\fD(\gamma) < \infty
		$ if and only if $\gamma \in \DM^d$ and $\gamma \leq \frac1N$, 
		\item There exists a maximiser $U_0 \in \cS^d$ of $\emph{D}_\gamma^{-,\eps}$ if and only if $\gamma \in \DM^d$ and  $0 < \gamma < \frac1N$. 
	\end{enumerate}
\end{prop}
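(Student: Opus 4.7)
Since $\gamma \in \DM^d$ is part of the setup for $\fD$, the content of (1) is a characterization of finiteness of $\fD(\gamma)$ via the Pauli inequality. My plan combines (a) weak duality for the easy direction of (1), (b) an explicit blow-up in the dual for the hard direction, and (c) first-order optimality together with a recession/coercivity argument for (2).

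For (1), $\gamma \leq \tfrac{1}{N} \Rightarrow \fD(\gamma) < \infty$ is immediate from Coleman's $N$-representability theorem (cf.~\cite[Theorem 3.2]{LiebSeiringer10}): it produces some $\Gamma \in \DM_-^{\bm d}$ with $\Gamma \mapsto \gamma$, so $\prim_-(\gamma) < \infty$ by direct evaluation, and weak duality $\fD \leq \prim_-$ concludes. For the converse I argue by contrapositive. Assuming an eigenvalue $\mu > \tfrac{1}{N}$ of $\gamma$ with unit eigenvector $\psi$ and projection $P := |\psi\rangle\langle\psi|$, I test $\ffD$ along $U_\lambda := \lambda(P - \tfrac{1}{N}\unit)$ as $\lambda \to +\infty$. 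The linear part produces growth $\lambda(\mu - \tfrac{1}{N}) \to +\infty$; meanwhile $\bigoplus U_\lambda = \lambda(N_\psi - \unit)$ for the number operator $N_\psi := \sum_{i=1}^N \unit^{\otimes(i-1)} \otimes P \otimes \unit^{\otimes(N-i)}$, and by the Pauli principle $N_\psi|_{\bigwedge^N \bC^d}$ has spectrum $\{0,1\}$, so $\tfrac{1}{N}(\bigoplus U_\lambda)_- \leq 0$ uniformly in $\lambda$ and the exponential term in $\ffD$ stays bounded. Thus $\ffD(U_\lambda) \to +\infty$, contradicting $\fD(\gamma) < \infty$.

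For (2), the implication ``maximizer exists $\Rightarrow 0 < \gamma < \tfrac{1}{N}$'' is a first-order computation. The Euler--Lagrange equation $D\ffD(U_0) = 0$, obtained via Duhamel's formula and cyclicity of the trace together with permutation symmetry of $\Gamma_0$, reads $\gamma = \tP_1(\Gamma_0)$ with $\Gamma_0 := \exp[\tfrac{1}{\eps}(\tfrac{1}{N}(\bigoplus U_0)_- - H_-)] > 0$ on $\bigwedge^N \bC^d$. Strict positivity of $\Gamma_0$ gives $\gamma > 0$ at once. For the strict Pauli inequality: for any unit $\psi$ one has $N \langle \psi | \gamma | \psi \rangle = \Tr(\Gamma_0 N_\psi)$, and since $N_\psi|_{\bigwedge^N \bC^d} \leq \unit$ with strict inequality on the nontrivial (for $d > N$) subspace $\{N_\psi = 0\}$ of dimension $\binom{d-1}{N} > 0$, strict positivity of $\Gamma_0$ forces $\Tr(\Gamma_0 N_\psi) < 1$, hence $\langle \psi | \gamma | \psi \rangle < \tfrac{1}{N}$.

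The converse of (2) is the technical core, obtained via recession analysis. The concave functional $\ffD$ has recession $-\infty$ in directions $V \in \cS^d$ with $\lambda_{\max}(\tfrac{1}{N}(\bigoplus V)_-) > 0$ and equals $\Tr(V\gamma)$ otherwise; the admissibility condition $\lambda_{\max}(\tfrac{1}{N}(\bigoplus V)_-) \leq 0$ translates by spectral calculus into ``sum of the $N$ largest eigenvalues of $V$ is $\leq 0$''. A short cone-duality argument identifies the $\gamma \in \DM^d$ with recession $\leq 0$ in all directions with $\{0 \leq \gamma \leq \tfrac{1}{N}\}$, the intersection with $\DM^d$ of the closed cone generated by rank-$N$ projections. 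The strict inequalities $0 < \gamma < \tfrac{1}{N}$ place $\gamma$ in the relative interior of this cone, yielding \emph{strict} negativity of the recession in every nonzero direction (in particular $V = \alpha \unit$ is not a symmetry of $\ffD$ in the symmetric setting, as a direct computation with $\bigoplus(\alpha\unit) = N\alpha\unit$ shows), hence coercivity of $\ffD$; existence of a maximizer follows by the direct method. The main technical point is precisely this cone-duality identification and the passage from strict Pauli to strict coercivity.
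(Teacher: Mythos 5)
Your proof is correct in substance but takes a genuinely different route from the paper in almost every sub-step, so a comparison is worthwhile.

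For the direction ``$\gamma\le\tfrac1N \Rightarrow \fD(\gamma)<\infty$'' the paper argues entirely on the dual side: it diagonalizes $U$, applies the linear estimate $\sum_j \gamma_j u_j \le \tfrac1N\sum_{j=1}^N u_j$ (Lemma~\ref{lem:linear_est} with $\delta=0$), and bounds $\ffd(U)\le \sup_x(x-\tfrac1C e^x)+1=\log C$. You instead invoke Coleman's $N$-representability plus a fermionic weak-duality inequality $\fD(\gamma)\le\prim_-(\gamma)$. That is clean, but note the paper never states fermionic weak duality separately; you would need to spell out the one-line adaptation of Proposition~\ref{prop:lowerbound_multim} to the antisymmetric subspace (namely that for $\Gamma\in\DM_-^{\bm d}$ one has $\Tr\big(\Gamma(\tfrac1N\bigoplus U-\H)\big)=\Tr\big(\Gamma(\tfrac1N\bigoplus U-\H)_-\big)$ and then run the Legendre-transform argument on $\cS\big(\bigwedge^N\bC^d\big)$). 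This is not circular — weak duality does not use Proposition~\ref{prop:pauli} — but it is an extra lemma the paper avoids. For the converse direction your test family $U_\lambda=\lambda\big(P-\tfrac1N\unit\big)$ is a nicer choice than the paper's $u_1^n=n$, $u_{j\ge 2}^n=-n/(N-1)$: the identity $\bigoplus U_\lambda=\lambda(N_\psi-\unit)$ with $N_\psi|_{\bigwedge^N}$ having spectrum $\{0,1\}$ makes the uniform boundedness of the exponential term completely transparent.

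For ``maximizer exists $\Rightarrow 0<\gamma<\tfrac1N$'' your argument is more direct and conceptually cleaner than the paper's: you read both strict inequalities off positivity of $\Gamma_0=\exp[\tfrac1\eps(\tfrac1N\bigoplus U_0-\H)_-]>0$ together with the identity $N\langle\psi|\gamma|\psi\rangle=\Tr(\Gamma_0 N_\psi)$ and the non-triviality (for $d>N$) of both eigenspaces of $N_\psi|_{\bigwedge^N}$. The paper instead re-uses the blow-up sequence $U^n$, shows $\ffd(U^n)$ stays bounded while $\|U^n\|\to\infty$, and deduces non-coercivity of a strictly concave function, hence non-attainment. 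Your route avoids the coercivity-vs-attainment discussion entirely. (Both proofs implicitly assume $d>N$; for $d=N$ the fermionic space is one-dimensional and the statement is vacuous/degenerate, but that is not specific to your argument.)

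For ``$0<\gamma<\tfrac1N \Rightarrow$ maximizer exists'' the underlying idea is the same — establish coercivity — but the packaging differs. The paper makes it concrete: Lemma~\ref{lem:linear_est} with $\delta>0$ gives $\ffd(U)\le S-\delta(u_{\max}-u_{\min})$, bounding the spectral width along a maximizing sequence, and the two remaining escape-to-infinity scenarios (all $u_j\to-\infty$ or all $u_j\to+\infty$) are killed by the linear and exponential terms respectively. You phrase the same content as a recession-function argument: the recession of $\ffD$ equals $-\infty$ off the ``Pauli cone'' (directions $V$ whose top-$N$ eigenvalue sum is positive) and equals $\Tr(V\gamma)$ on it; strict Pauli puts $\gamma$ in the relative interior of the dual cone so the recession is strictly negative on every nonzero admissible direction; coercivity follows. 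This is correct and more structural, but as you yourself write it is ``a short cone-duality argument'' — filling it in (identifying the dual cone, proving $\Tr(V\gamma)<0$ for $V\neq 0$ in the cone when $0<\gamma<\tfrac1N$) amounts to exactly the content of the paper's Lemma~\ref{lem:linear_est}, so the two proofs are not far apart in the end. In short: yours is more conceptual (convex analysis, recession functions, cone duality), the paper's is more elementary and self-contained (explicit eigenvalue inequalities); both are sound.
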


Finally we state the duality result in the fermionic and bosonic setting.

\begin{theo}[Fermionic and bosonic duality] \label{theo:dualBosFerm}
	Let $\H \in \cS^{\bm d}$ satisfying \eqref{eq:symmetry}. 
	\begin{itemize}
		\item[(i)]  For any given $\gamma \in \DM^d$, such that $\gamma \leq \frac1N$, the fermionic primal and dual problems coincide, thus $\prim_-(\gamma) = \fD(\gamma)$. Moreover, if $0 < \gamma < \frac1N$ then $\emph{D}_\gamma^{-,\eps}$ admits a unique maximizer $U_-^\ep$ such that
		\begin{align}	\label{eq:optimum_primal_fermions}
			\Gamma_-^\ep = \exp \left( \frac1\ep \bigg[ \frac1N \bigoplus_{i=1}^N U_-^\ep - \H  \bigg]_- \right) 
		\end{align}
		is the unique optimal fermionic solution to the primal problem $\prim_-(\gamma)$.
		\item[(ii)] For any given $\gamma \in \DM^d$, the bosonic primal and dual problems coincide, thus $\prim_+(\gamma) = \bD(\gamma)$. Moreover, if $\gamma >0$, $\emph{D}_\gamma^{+,\eps}$ admits a unique maximizer $U_+^\ep$ such that
		\begin{align}	\label{eq:optimum_primal_bosons}
			\Gamma_+^\ep = \exp \left( \frac1\ep \bigg[ \frac1N \bigoplus_{i=1}^N U_+^\ep - \H  \bigg]_+ \right) 
		\end{align}
		is the unique optimal bosonic solution to the primal problem $\prim_+(\gamma)$.
	\end{itemize}
\end{theo}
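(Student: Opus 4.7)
My plan mirrors the architecture of Theorem \ref{theo:duality} and exploits the permutation symmetry \eqref{eq:symmetry} of $\H$ to reduce the $N$-potential dual to a single-potential one. The structural observation driving the argument is that for $\Gamma \in \DM_\pm^{\bm d}$ all one-body marginals automatically coincide, so that $\Tr(U\gamma) = \tfrac{1}{N}\Tr((\bigoplus U)\Gamma) = \Tr([\tfrac{1}{N}\bigoplus U]_\pm \Gamma)$, which explains why the single-potential functional $\fbfD$ is the correct analogue of $\dualin$. This identity immediately yields \emph{weak duality} $\prim_\pm(\gamma) \geq \bfD(\gamma)$ via Klein's inequality: combined with $\Tr(\H\Gamma) = \Tr(\H_\pm\Gamma)$ it produces $\Tr(U\gamma) - \Tr(\H\Gamma) = \Tr([\tfrac{1}{N}\bigoplus U - \H]_\pm \Gamma)$, and applying $\Tr(\Gamma(\log\Gamma - B)) \geq \Tr(\Gamma) - \Tr(e^B)$ with $B := (1/\eps)[\tfrac{1}{N}\bigoplus U - \H]_\pm$ on the bosonic/fermionic subspace yields the desired inequality after rearrangement. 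The trivial case $\gamma \not\leq 1/N$ for fermions, where both sides equal $+\infty$ (by Proposition \ref{prop:pauli} and emptiness of the primal feasible set), is handled separately.

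Under the strict assumptions ($0 < \gamma < 1/N$ for fermions, $\gamma > 0$ for bosons) I would next establish \emph{existence and uniqueness of a dual maximizer} by the direct method. Strict concavity of $\fbfD$ is inherited from strict convexity of $\Tr\exp$ together with injectivity of the linear map $U \mapsto [\tfrac{1}{N}\bigoplus U]_\pm$, which fails exactly in the degenerate case $\gamma = 1/N$ forcing $d = N$. Coercivity of $\fbfD$ is precisely the content of Proposition \ref{prop:pauli}(2): it ensures that the linear gain $\Tr(U\gamma)$ is eventually outpaced by the exponential penalty along every unbounded direction in $\cS^d$ under the strict conditions.

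With a dual maximizer $U_\pm^\eps$ in hand, I would close the duality gap via first-order conditions. Defining $\Gamma_\pm^\eps$ as in \eqref{eq:optimum_primal_fermions}--\eqref{eq:optimum_primal_bosons}, stationarity of $\fbfD$ at $U_\pm^\eps$ against arbitrary $\sigma \in \cS^d$, computed via the trace identity for $\partial_A \Tr\exp(A)$, yields $\tP_1 \Gamma_\pm^\eps = \gamma$; together with bosonic/fermionic symmetry this ensures $\Gamma_\pm^\eps \mapsto \gamma$ and, since $\Tr\gamma = 1$, also $\Tr\Gamma_\pm^\eps = 1$. Using $\eps\log\Gamma_\pm^\eps = [\tfrac{1}{N}\bigoplus U_\pm^\eps - \H]_\pm$ on the image, a short calculation reduces the primal value at $\Gamma_\pm^\eps$ to $\Tr(U_\pm^\eps \gamma)$, which by $\Tr\Gamma_\pm^\eps = 1$ equals $\fbfD(U_\pm^\eps)$, closing the gap; uniqueness of the primal minimizer follows from strict convexity of the entropy. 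For $\gamma$ on the boundary of the strict region (where no dual maximizer exists), the equality $\prim_\pm(\gamma) = \bfD(\gamma)$ is inherited by an approximation argument using stability of $\prim_\pm(\cdot)$ as in Proposition \ref{prop:OTcont} and lower semicontinuity of $\bfD(\cdot)$.

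The main technical hurdle is the a priori compactness of dual maximizing sequences, i.e., Proposition \ref{prop:pauli}(2) itself. Unlike the non-symmetric dual, no trivial translation freedom absorbs shifts in $U$, so one must control $U$ in all of $\cS^d$ at once; moreover, the projection $(\cdot)_\pm$ intertwines the Kronecker-sum structure of $\bigoplus U$ with the Young decomposition of $\bigotimes \bC^d$, so the non-commutative $(\H,\eps)$-transform of Section \ref{sec:Heptransform} does not transfer verbatim and a symmetric analogue tailored to a single potential is needed. Once these bounds are established, the remaining steps are essentially algebraic.
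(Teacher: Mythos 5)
Your proposal is correct, but it follows a more self-contained route than the paper's. Once Proposition~\ref{prop:pauli} yields a dual maximizer $U_\pm^\ep$ solving~\eqref{eq:Euler-lagrange-symmetries}, the paper forms $\tilde\Gamma_\pm^\ep := \exp\big[\tfrac1\ep\big(\tfrac1N\bigoplus_{i} U_\pm^\ep - \H\big)_\pm\big]$, invokes Remark~\ref{rem:nonCommSch_multim} (the fixed-point computation already done for the unconstrained Schr\"odinger system) to conclude optimality, and notes $(\tilde\Gamma_\pm^\ep)_\pm = \Gamma_\pm^\ep$; the degenerate cases are dispatched by subspace restriction as in Remark~\ref{rem:kernels}. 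You instead re-derive the duality from scratch inside the symmetric subspace: Klein's inequality together with the identities $\Tr(U\gamma)=\Tr\big([\tfrac1N\bigoplus U]_\pm\Gamma\big)$ and $\Tr(\H\Gamma)=\Tr(\H_\pm\Gamma)$ for $\Gamma\in\DM_\pm^{\bm d}$ gives weak duality, Proposition~\ref{prop:pauli} gives the maximizer, and Euler--Lagrange stationarity closes the gap. This is longer but arguably cleaner, since the paper's appeal to Remark~\ref{rem:nonCommSch_multim} is somewhat informal: $\tilde\Gamma_\pm^\ep$ is \emph{not} of the form $\exp\big((\bigoplus U_i - \H)/\ep\big)$ covered by that remark, because the projection $(\cdot)_\pm$ destroys the Kronecker-sum structure, and one must implicitly read the exponential and the traces as acting only on the $\pm$-subspace --- a convention your direct derivation makes explicit from the start. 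Two small points to fill in: (a) Proposition~\ref{prop:pauli}(2) covers only the fermionic dual; for bosonic coercivity the paper observes that $\sigma\big([\tfrac1N\bigoplus U]_+\big)\supset\sigma(U)$ (via totally symmetric product vectors), so $\fbD(U)\to-\infty$ along any unbounded direction once $\gamma>0$; and (b) your boundary-case argument via stability of $\prim_\pm(\cdot)$ plus lower semicontinuity of $\bfD$ is viable, but would require establishing a symmetric analogue of Proposition~\ref{prop:OTcont}, whereas the paper's restriction to the subspace on which $\gamma$'s eigenvalues lie strictly inside the admissible range reduces immediately to the strict case already proved.
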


\section{Preliminaries and a priori estimates}\label{sec:prem}

We start this section by recalling the setting and the notation. For $d \in \N$, we denote by $\cM^d = \cM^d(\bC)$ the set of all $d \times d$ complex matrices, by $\cS^d$ the hermitian elements of $\cM^d$, and by $\cS_\geq^d$ (respectively $\cS_>^d$) the set of all the positive semidefinite (positive definite) elements of $\cS^d$. With a slight abuse of notation, we denote by $\Tr$ the trace operator on $\cM_d$ for any dimension $d$.  Furthemore, we denote by $\DM^d$ the set of $d\times d$ \textit{density matrices}, namely the matrices in $\cS_\geq^d$ with trace one. For the sake of notation, for every $N \in N$ we denote by $[N]:= \{1, ... , N\}$.

For a given $N \in \N$ and $(d_i)_{i=1}^N \subset \N$, we consider for any $i \in [N]$ the injective maps
\begin{gather}	\label{eq:defQ_i}
	\begin{gathered}
		Q_i : \cM^{d_i} \to \cM^{\bm d} = \bigotimes_{j=1}^N \cM^{d_j}, \quad \bm d:= \prod_{j=1}^N d_j, \\
		\forall A \in \cM^{d_i}, \quad
		\tQ_i(A) := \bigotimes_{j=1}^N A_j, \quad  A_j = \begin{cases}
			A & \text{if } j = i, \\
			\unit & \text{if } j \neq i.
		\end{cases} 
	\end{gathered}
\end{gather}
We shall use the same notation also for subsets of $\bC^d$. I.e., we also denote by $\tQ_i$ the map $\tQ_i : \bC^{d_i} \to \bC^{\bm d}$ defined as
\begin{align*}
	\forall K \subset \bC^{d_i}, \quad \tQ_i(K) := \bigotimes_{j=1}^N K_j \subset \bC^{\bm d},  \quad K_j = \begin{cases}
		K & \text{if } j = i, \\
		\bC^{d_j} & \text{if } j \neq i.
	\end{cases} 
\end{align*}
The \textit{marginal} operators are the left-inverse of the $\tQ_i$, namely 
$\tP_i :  \cM^{\bm d} \to \cM^{d_i}$,
where for every $\Gamma \in \cM^{\bm d}$, $\tP_i(\Gamma) \in \cM^{d_i}$ is defined by duality as
\begin{align}	\label{eq:defP_i}
	\Tr( \tP_i(\Gamma) A ) = \Tr \Big( \Gamma  \tQ_i (A) \Big), \quad \forall A \in \cM^{d_i}.
\end{align}

\begin{rem}	\label{rem:prop_partialtrace}
	Observe that $\Tr (\tP_i(A)) = \Tr(A)$ for every $i=1, \dots , N$ and $A \in \cM^{\bm d}$. Furthermore, if $A=\bigotimes_{i=1}^N A_i$ with $\Tr(A_i)= 1$, then $\tP_i(A) = A_i$. 
\end{rem}
For a given family of density matrices $\gamma_i \in \DM^{d_i}$ , we use the notation $\bm \gamma := (\gamma_i)_{i \in [N]}$ and we write $\Gamma \mapsto \bm \gamma=(\gamma_1, \dots, \gamma_N)$ whenever $\Gamma \in \DM^{\bm d}$ and $\tP_i (\Gamma) = \gamma_i$ for every $i=[N]$.
With the next definitions, we introduce the Kronecker sum and Permutation operators.
\begin{defin}[Kronecker sum]
	For $A_i \in \cM^{d_i}$, we call their \textit{Kronecker sum} the matrix 
	\begin{align}	\label{eq:def_Kronecker}
		\bigoplus_{i=1}^N A_i:= \sum_{i=1}^N \tQ_i(A_i) \in \cM^{\bm d}
	\end{align}
	where $\tQ_i$ is defined in \eqref{eq:defQ_i}.
\end{defin}
%

\begin{defin}[Permutation operators]\label{def:permope}
	For any $i \in [N]$, we introduce the \textit{permutation operator} $\tS_i : \cM^{\bm d} \approx \bigotimes_{j=1}^N \cM^{d_j} \to \cM^{\bm d}$ as the map defined by 
	\begin{align*}
		\tS_i \bigg( \bigotimes_{j=1}^N A_j \bigg) = A_1 \otimes \dots \otimes A_{i-1} \otimes A_{i+1} \otimes \dots \otimes A_N \otimes A_i,
	\end{align*}
	for any $A_i \in \cM^{d_i}$ and extended to the whole $\cM^{\bm d}$ by linearity.
\end{defin}

\begin{rem}	\label{rem:proptSi}
	The permutation operators preserve the spectral properties of any operator. Precisely, $\sigma(\tS_i(A)) = \sigma(A)$ for every $i \in [N]$, $A \in \cS^{\bm d}$, where $\sigma(A)$ denotes the spectrum of $A$.
	In particular, for every continuous function $f: \bR \to \bR$, we have that $\Tr (f(\tS_i(A))) = \Tr(f(A))$, for every $A \in \cS^{\bm d}$.
\end{rem}

\subsection{Non-commutative $(\H,\ep)$-transforms}	\label{sec:Heptransform}

For this section, we specify to the simply case of a two-fold tensor product and introduce the notion of non-commutative $(\H,\ep)$-transform, which is a central object in our discussion. We shall see in Section \ref{sec:vectorial} how it is then easy to extend this notion to a general $N$-fold tensor product. We fix $d,d' \in \bN$, $0<\alpha \in \DM^{d'}$, $\H \in \cS^{dd'}$ and $\ep >0$ and define the map $\tT_{\alpha,\H}^\ep: \cS^d \times \cS^{d'} \to \bR $ as
\begin{align}	\label{eq:functionaltransformone}
	\tT_{\alpha,\H}^\ep(U,V):= \Tr (V \alpha) - \ep \Tr \left(\exp\left[\frac{U\oplus V-\H}{\ep}\right]\right).
\end{align}
The $(\H,\ep)$-transform of any $U \in \cS^d$ is obtained as the maximiser of the map $\tT_{\alpha,\H}^\ep(U, \cdot)$.

\begin{defin}[$(\H,\ep)$-transform]
	We call the unique maximizer of $\tT_{\alpha,\H}^\ep(U, \cdot)$ the \textit{$(\H,\ep)$-transform} of $U \in \cS^d$. We use the notation 
	\begin{align}
		\fT_{\alpha, \H}^\ep: \cS^d \to \cS^{d'}, \quad  \fT_{\alpha, \H}^\ep(U) = \argmax \{ \tT_{\alpha,\H}^\ep(U, V) \suchthat V \in \cS^{d'}\}.
	\end{align}
\end{defin}

The following lemma shows that the definition of $(\H,\ep)$-transform is indeed well-posed.
\begin{lemma}	\label{lem:existence of Hep transform}
	Let $U \in \cS^d$. Then there exists a unique maximizer $\bar{V}\in \cS^{d'}$ of $\tT_{\alpha,\H}^\ep(U, \cdot)$.
\end{lemma}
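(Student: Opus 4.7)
The plan is to prove that the map $V \mapsto \tT_{\alpha,\H}^\ep(U,V)$ is continuous, strictly concave, and coercive on $\cS^{d'}$, so that the direct method of the calculus of variations produces a unique maximizer. Continuity is immediate from the smoothness of the trace exponential, so the work lies in the other two properties.

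For strict concavity, I would observe that $\Tr(V\alpha)$ is affine in $V$, so the claim reduces to showing that $V \mapsto \Tr \exp[(U\oplus V - \H)/\ep]$ is strictly convex. Setting $A(V) := (U \oplus V - \H)/\ep$ and differentiating twice along a ray $V + tW$ via Duhamel's formula yields
$$
\frac{d^{2}}{dt^{2}}\bigg|_{t=0} \Tr e^{A(V + tW)}
= \frac{1}{\ep^{2}} \int_{0}^{1} \Tr \bigl( e^{sA(V)} (\unit \otimes W) \, e^{(1-s)A(V)} (\unit \otimes W) \bigr) \, ds.
$$
By cyclicity of the trace and Hermiticity of $A(V)$ and $W$, the integrand equals $\| X_s \|_{HS}^2$ with $X_s := e^{sA(V)/2}(\unit \otimes W) e^{(1-s)A(V)/2}$, and is therefore strictly positive whenever $W \neq 0$, since $e^{rA(V)}$ is invertible for every $r$.

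The main obstacle is coercivity, and here the strict positivity $\alpha > 0$ is essential: if $\alpha$ had a nontrivial kernel, translating $V$ along an eigendirection in $\ker \alpha$ towards $-\infty$ would leave the linear term invariant while the exponential term only decreases, so coercivity would fail. To obtain coercivity under $\alpha > 0$, I would use the operator inequality $\H \leq \|\H\|\, \unit$ together with the L\"owner-order monotonicity of $A \mapsto \Tr e^A$ (a consequence of the Gibbs variational principle) and the identity $\sigma(U \oplus V) = \{u_i + v_j\}$, to deduce
$$
\Tr \exp\bigl[(U\oplus V - \H)/\ep\bigr] \geq e^{-\|\H\|/\ep}\, \Tr e^{U/\ep} \cdot \Tr e^{V/\ep}
= C \sum_{k=1}^{d'} e^{\lambda_k(V)/\ep},
$$
with $C := e^{-\|\H\|/\ep}\, \Tr e^{U/\ep} > 0$. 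Expanding $\Tr(V\alpha)$ in an orthonormal eigenbasis $\{e_k\}$ of $V$ gives $\Tr(V\alpha) = \sum_k \lambda_k(V)\, \mu_k$ with $\mu_k := \langle e_k, \alpha e_k \rangle \geq \alpha_{\min} > 0$, whence
$$
\tT_{\alpha,\H}^\ep(U,V) \leq \sum_{k=1}^{d'} \bigl( \lambda_k(V)\, \mu_k - \ep C\, e^{\lambda_k(V)/\ep} \bigr).
$$
Since $\mu_k > 0$, each summand is bounded from above and tends to $-\infty$ as $|\lambda_k(V)| \to \infty$, yielding coercivity on $\cS^{d'}$.

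Finally, continuity together with coercivity give the existence of a maximizer by the direct method (sublevel sets of $-\tT_{\alpha,\H}^\ep(U,\cdot)$ are bounded, hence compact in the finite-dimensional space $\cS^{d'}$), while strict concavity gives uniqueness.
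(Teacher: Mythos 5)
Your proof is correct and follows essentially the same strategy as the paper's: obtain coercivity via the operator bound $\H \leq \|\H\|_\infty\,\unit$, monotonicity of $A\mapsto\Tr e^A$, the factorization $\exp(U\oplus V)=\exp U\otimes\exp V$, and the strict positivity $\alpha\geq\alpha_{\min}\unit>0$, then conclude by the direct method together with strict concavity. The only differences are cosmetic: you organize the coercivity estimate eigenvalue-by-eigenvalue rather than through the positive/negative splitting $V=V_+-V_-$ used in the paper, and you supply an explicit Duhamel-formula verification of strict concavity where the paper simply cites \cite{carlen2010trace}.
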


\begin{proof}
	Fix $U \in \cS^{d}$. For every $V\in \cS^{d'}$, we write $V = V_+ - V_-$ where $V_+$,$V_- \in \cS^{d'}$ denote respectively the positive and the negative part of $V$ (with respect to its spectrum). 
	We begin by observing that
	\begin{align}  \label{eq:lowerbounddoubtrace}
		\begin{gathered}
			\Tr\left(\exp\left[\frac{U\oplus V-\H}{\ep}\right]\right)\geq \Tr\left(\exp\left[\frac{U\oplus V-\|\H\|_{\infty}}{\ep}\right]\right)\\
			=\Tr\left(\exp\left(\frac V {\ep}\right)\right) \Tr \left(\exp\left(\frac U {\ep}\right)\right) \exp\left[\frac{-\|\H\|_{\infty}}{\ep}\right]=:\kappa\Tr\left(\exp\left(\frac V{\ep}\right)\right)\geq \kappa e^{\ep^{-1}\| V_+ \|_\infty},
		\end{gathered}
	\end{align}
	where in the second step we used that $\exp(U\oplus V)=\exp(U)\otimes\exp(V)$ and $\kappa=\kappa(U,\ep,\H)$ is a \emph{finite} constant depending on $U$, $\ep$, and $\H$. On the other hand, it clearly holds $\Tr(V \alpha )\leq \| V_+ \|_\infty$ which combined with \eqref{eq:lowerbounddoubtrace} yields for every $V \in \cS^{d'}$
	\begin{align}
		\label{eq:Gbound1}
		\tT_{\alpha,\H}^\ep(U,V)\leq \| V_+ \|_\infty-\kappa e^{\ep^{-1} \| V_+ \|_\infty}.
	\end{align}
	Moreover, it is immediate to obtain that 
	\begin{align}
		\label{eq:Gbound2}
		\begin{aligned}
			\tT_{\alpha,\H}^\ep(U,V) \leq \Tr(V\alpha)
			=  \Tr(V_+ \alpha) - \Tr(V_- \alpha) \leq \| V_+ \|_\infty -\sigma_{\min}(\alpha)  \| V_- \|_\infty,
		\end{aligned}
	\end{align}
	where $\sigma_{\min}(\alpha)$ is the spectral gap of $\alpha$, which is strictly positive by assumption.
	Let $V_n$ be a maximizing sequence for $\tT_{\alpha,\H}^\ep(U,\cdot)$, then the bounds \eqref{eq:Gbound1} and \eqref{eq:Gbound2} imply that $ (V_n)_+ $, $ (V_n)_-$ (and hence $V_n$) are uniformly bounded. Therefore, we can obtain a subsequence (which we do not relabel) such that
	$
	V_n \rightarrow \bar{V} \in \cS^{d'}.
	$
	The optimality of $\bar{V}$ follows from the fact that  $\tT_{\alpha,\H}^\ep(U,\cdot)$ is continuous and strictly concave (see for example \cite{carlen2010trace}), which also implies uniqueness. 
\end{proof}

In the following lemma we use the fact that the $(\H,\ep)$-transform is obtained through a maximization to show that it can be characterized as the solution of the associated Euler--Lagrange equation. This property is crucial for the proof of our main results.

\begin{lemma}[Optimality conditions for the $(\H,\ep)$-transforms]	\label{lemma:Ucep}
	Given $d,d' \in \bN$, $0<\alpha \in \DM^{d'}$, $\H \in \cS^{dd'}$, $\ep>0$, the operator  $\fT_{\alpha, \H}^\ep$ can be characterized implicitly by the fact that, for any $U \in \cS^d$, $\fT_{\alpha, \H}^\ep(U)$ is the unique solution of
	\begin{align}		\label{eq:implicit_Ctransform}
		\alpha = \tP_2 \left(\exp\left[\frac{U \oplus \fT_{\alpha, \H}^\ep(U)-\H}{\ep}\right]\right).
	\end{align}
\end{lemma}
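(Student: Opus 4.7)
The plan is to characterize $\fT_{\alpha,\H}^\ep(U)$ via the first-order optimality condition for the strictly concave maximization problem defining it, and then to rewrite this condition in terms of the partial trace $\tP_2$.

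First, by Lemma \ref{lem:existence of Hep transform}, $\tT_{\alpha,\H}^\ep(U,\cdot)$ is continuous, strictly concave, and admits a unique maximizer $\bar V = \fT_{\alpha,\H}^\ep(U)$ on the (finite-dimensional) vector space $\cS^{d'}$. Since this space carries no boundary, the maximizer is characterized by the vanishing of the Fr\'echet derivative of $\tT_{\alpha,\H}^\ep(U,\cdot)$ at $\bar V$, i.e.\ by
\begin{align*}
\frac{d}{dt}\Big|_{t=0} \tT_{\alpha,\H}^\ep(U, \bar V + tW) = 0 \quad \forall\, W \in \cS^{d'}.
\end{align*}
Strict concavity further ensures that any $V$ satisfying this Euler--Lagrange equation coincides with $\bar V$, giving the claimed uniqueness.

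Next, I would compute the derivative term by term. The linear part gives $\Tr(W\alpha)$ directly. For the nonlinear part, I would use the standard identity
\begin{align*}
\frac{d}{dt}\Big|_{t=0} \Tr\bigl( \exp(A + tB) \bigr) = \Tr\bigl(\exp(A)\, B\bigr) \quad \text{for } A, B \in \cS^{dd'},
\end{align*}
which holds by cyclicity of the trace (even though $e^{A+tB} \neq e^A e^{tB}$). Since $V \mapsto U \oplus V = \tQ_1(U) + \tQ_2(V)$ has differential $W \mapsto \tQ_2(W)$, applying this identity with $A = (U\oplus \bar V - \H)/\ep$ and $B = \tQ_2(W)/\ep$ and multiplying by $\ep$ yields
\begin{align*}
\frac{d}{dt}\Big|_{t=0} \ep \Tr\!\left( \exp\!\left[\tfrac{U\oplus(\bar V + tW) - \H}{\ep}\right]\right) = \Tr\!\left( \exp\!\left[\tfrac{U\oplus \bar V - \H}{\ep}\right] \tQ_2(W) \right).
\end{align*}

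Combining the two contributions, the Euler--Lagrange equation reads
\begin{align*}
\Tr\!\left( W \alpha \right) = \Tr\!\left( \exp\!\left[\tfrac{U\oplus \bar V - \H}{\ep}\right] \tQ_2(W) \right) \quad \forall W \in \cS^{d'}.
\end{align*}
By the very definition of the marginal operator \eqref{eq:defP_i}, the right-hand side equals $\Tr\bigl(\tP_2( \exp[(U\oplus \bar V - \H)/\ep]) \, W \bigr)$. Since both $\alpha$ and $\tP_2(\exp[(U\oplus \bar V - \H)/\ep])$ are self-adjoint and the trace pairing on $\cS^{d'}$ is non-degenerate, we deduce the identity \eqref{eq:implicit_Ctransform}.

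I expect no serious obstacle: the only mildly delicate point is justifying the trace-exponential derivative formula in the non-commutative setting, which is classical and follows from cyclicity of the trace together with the Duhamel expansion. Uniqueness of solutions to \eqref{eq:implicit_Ctransform} is inherited from the strict concavity of $\tT_{\alpha,\H}^\ep(U,\cdot)$ already exploited in Lemma \ref{lem:existence of Hep transform}.
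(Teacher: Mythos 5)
Your proof is correct and follows essentially the same route as the paper: vanishing first variation at the unique maximizer guaranteed by Lemma~\ref{lem:existence of Hep transform}, the trace-exponential derivative identity (the paper cites \cite[Section 2.2]{carlen2010trace}, you justify it via Duhamel and cyclicity), rewriting the resulting pairing through the definition \eqref{eq:defP_i} of $\tP_2$, and concluding by self-adjointness and non-degeneracy of the trace pairing on $\cS^{d'}$, with uniqueness of solutions to \eqref{eq:implicit_Ctransform} again coming from strict concavity. A small bonus: you correctly wrote $\tP_2$ throughout, whereas the paper's intermediate display \eqref{eq:weak_EL_ctransform} has a typo ($\tP_1$) that contradicts its own derivative formula involving $I\otimes\Lambda=\tQ_2(\Lambda)$ and the final statement \eqref{eq:implicit_Ctransform}.
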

\begin{proof}
	Let us pick any $\Lambda \in \cS^d$ and define $V_s := \fT_{\alpha, \H}^\ep(U) + s \Lambda$. By construction, due to the optimality of $\fT_{\alpha, \H}^\ep(U)$, the map
	\begin{align*}	
		s \mapsto g(s):=\Tr (V_s \alpha) - \ep \Tr\left(\exp\left[\frac{U\oplus V_s-\H}{\ep}\right]\right) 
	\end{align*}
	must have vanishing derivative at $s=0$. This can be computed \cite[Section 2.2]{carlen2010trace} as
	\begin{align} \label{eq:firstv}
		g'(0)= \Tr (\Lambda \alpha) - \ep \Tr\left((I \otimes \Lambda) \exp\left[\frac{U \oplus \fT_{\alpha, \H}^\ep(U)-\H}{\ep}\right]\right).
	\end{align}
	Using the definition of partial trace and the previous formula, we infer
	%
	%
	\begin{align}	\label{eq:weak_EL_ctransform}
		\Tr \left( \Lambda \Big( \alpha - \tP_1 \Big(\exp\left[\frac{U\oplus \fT_{\alpha, \H}^\ep(U)-\H}{\ep}\right]\Big) \Big) \right) = 0
	\end{align}
	for every $\Lambda \in \cS^d$. Note that $\alpha,U,V_s,\H$ being self-adjoint, it follows that the operator 
	\begin{align*}
		\alpha - \tP_1 \left(\exp\left[\frac{U\oplus \fT_{\alpha, \H}^\ep(U)-\H}{\ep}\right]\right)
	\end{align*}
	is self-adjoint as well. Together with \eqref{eq:weak_EL_ctransform}, 
	this shows \eqref{eq:implicit_Ctransform}. On the other hand, since \eqref{eq:weak_EL_ctransform} is the Euler Lagrange equation associated to the maximization of the strictly concave functional $\tT_{\alpha,\H}^\ep(U,\cdot)$, any solution of \eqref{eq:weak_EL_ctransform} is necessarily a maximizer and hence coincides with $\fT_{\alpha, \H}^\ep(U)$, by uniqueness (see Lemma \ref{lem:existence of Hep transform}).
	
\end{proof}
%
%
The next step is to obtain some regularity estimates on $\fT_{\alpha, \H}^\ep(U)$. To do so, we extrapolate information from the optimality conditions proved in Lemma \ref{lemma:Ucep}. 

%
\begin{proposition}[Regularity of the $(\H,\ep)$-transform] \label{prop:bounds_Ctransforms} 
	Given $d,d' \in \bN$, $0<\alpha \in \DM^{d'}$, $\H \in \cS^{dd'}$, $\ep>0$, we define for all $A \in \cS^d$ (or $A \in \cS^{d'}$)
	\begin{align}	\label{eq:def_lambda}
		\lambda_\ep(A):= \ep \log \Big( \Tr \Big[ \exp \left( \frac{A}{\ep} \right) \Big] \Big).
	\end{align}
	Then for every $U \in \cS^d$  it holds
	\begin{gather}
		\label{eq:reg_1}
		\Big| \fT_{\alpha, \H}^\ep(U) - \ep \log \alpha  + \lambda_\ep(U) \unit \Big| \leq \Vert {\rm H}\Vert_{\infty} \unit,\\
		\label{eq:control_lambda}
		\Big| \lambda_\ep(U) + \lambda_\ep \big( \fT_{\alpha, \H}^\ep(U) \big) \Big| \leq \|{\rm H}\|_{\infty},\\ 
		\label{eq:control_transflambda}
		\Big| \fT_{\alpha, \H}^\ep(U) - \ep \log \alpha -\lambda_\ep (\fT_{\alpha, \H}^\ep(U))\unit\Big|\leq 2\Vert {\rm H} \Vert_{\infty} \unit.
	\end{gather}
	where the inequalities are understood as two-sided quadratic forms bounds.
\end{proposition}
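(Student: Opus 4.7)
The plan is to exploit the implicit optimality condition from Lemma~\ref{lemma:Ucep}, namely
\[
    \alpha=\tP_2\!\left(\exp\!\left(\frac{U\oplus V-\H}{\ep}\right)\right),\qquad V:=\fT^\ep_{\alpha,\H}(U),
\]
combined with the operator sandwich $-\|\H\|_\infty\unit\le\H\le\|\H\|_\infty\unit$, the Kronecker--sum factorization $\exp((U\oplus V)/\ep)=\exp(U/\ep)\otimes\exp(V/\ep)$, and standard monotonicity properties of matrix functions. The bound~\eqref{eq:control_lambda} is purely scalar, so I would tackle it first; \eqref{eq:reg_1} is the technical heart, and \eqref{eq:control_transflambda} follows from the first two by the operator triangle inequality.

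For \eqref{eq:control_lambda} I take the scalar trace of the implicit equation. Since $\Tr\alpha=1$, one has $\Tr\exp((U\oplus V-\H)/\ep)=1$, and Weyl monotonicity of $\Tr\exp$ applied to $U\oplus V-\|\H\|_\infty\unit\le U\oplus V-\H\le U\oplus V+\|\H\|_\infty\unit$ gives
\[
    e^{-\|\H\|_\infty/\ep}\,\Tr\exp\!\big((U\oplus V)/\ep\big)\le 1 \le e^{\|\H\|_\infty/\ep}\,\Tr\exp\!\big((U\oplus V)/\ep\big).
\]
The factorization identifies the central trace with $\exp((\lambda_\ep(U)+\lambda_\ep(V))/\ep)$, and taking logarithms yields the desired two-sided bound $|\lambda_\ep(U)+\lambda_\ep(V)|\le\|\H\|_\infty$.

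For the operator bound \eqref{eq:reg_1} I would upgrade this scalar argument by testing the implicit equation against all positive-definite weights $Q\in\cS^{d'}_>$ using $\Tr(Q\,\tP_2(X))=\Tr((\unit\otimes Q)X)$. The goal is to prove the two-sided scalar estimates
\[
    e^{-\|\H\|_\infty/\ep}\,\Tr\!\big((\unit\otimes Q)\exp((U\oplus V)/\ep)\big)\le \Tr(Q\alpha)\le e^{\|\H\|_\infty/\ep}\,\Tr\!\big((\unit\otimes Q)\exp((U\oplus V)/\ep)\big)
\]
uniformly for $Q\in\cS^{d'}_>$; since $\Tr(QX)\le\Tr(QY)$ for all such $Q$ is equivalent to $X\le Y$, and since the factorization collapses the right-hand side to $e^{\lambda_\ep(U)/\ep}\Tr(Q\exp(V/\ep))$, these weighted trace bounds promote to the operator sandwich
\[
    e^{-\|\H\|_\infty/\ep}\,e^{\lambda_\ep(U)/\ep}\exp(V/\ep)\le\alpha\le e^{\|\H\|_\infty/\ep}\,e^{\lambda_\ep(U)/\ep}\exp(V/\ep).
\]
Applying the operator monotone logarithm converts this multiplicative sandwich into the desired additive bound $|V-\ep\log\alpha+\lambda_\ep(U)\unit|\le\|\H\|_\infty\unit$. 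The bound~\eqref{eq:control_transflambda} is then immediate from \eqref{eq:reg_1} and \eqref{eq:control_lambda} via the operator triangle inequality applied to $V-\ep\log\alpha-\lambda_\ep(V)\unit=(V-\ep\log\alpha+\lambda_\ep(U)\unit)-(\lambda_\ep(U)+\lambda_\ep(V))\unit$.

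The hard part will be the weighted trace inequalities needed for \eqref{eq:reg_1}. The scalar operator bound $-\H\le\|\H\|_\infty\unit$ \emph{cannot} be exponentiated directly to an operator inequality between $\exp((U\oplus V-\H)/\ep)$ and $\exp((U\oplus V)/\ep)$, because $\exp$ fails to be operator monotone. The sharp weighted inequalities must instead exploit both the tensor-product structure of the unperturbed operator $U\oplus V$ and the product structure $\unit\otimes Q$ of the weight, and may ultimately rest on a Peierls--Bogoliubov-type variational argument applied to the modified partition function $Q\mapsto\Tr((\unit\otimes Q)\exp((U\oplus V-\H)/\ep))$ rather than on naive operator monotonicity.
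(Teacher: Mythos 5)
Your proof of \eqref{eq:control_lambda} is correct and self-contained: taking the full trace of the implicit relation from Lemma~\ref{lemma:Ucep} gives $\Tr\exp[(U\oplus V-\H)/\ep]=1$, and Weyl monotonicity of $\Tr\exp$ then yields the scalar bound directly, without passing through \eqref{eq:reg_1}. Your reduction of \eqref{eq:control_transflambda} to \eqref{eq:reg_1} and \eqref{eq:control_lambda} by the operator triangle inequality is also exactly right.

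For \eqref{eq:reg_1}, however, your plan stalls at precisely the point where it matters, and you have not closed the gap. You correctly note that $\exp$ is not operator monotone, so one cannot pass from $-\|\H\|_\infty\unit\le-\H\le\|\H\|_\infty\unit$ to the operator sandwich $e^{-\|\H\|_\infty/\ep}e^{(U\oplus V)/\ep}\le e^{(U\oplus V-\H)/\ep}\le e^{\|\H\|_\infty/\ep}e^{(U\oplus V)/\ep}$. But your proposed fix — establishing the weighted trace inequalities uniformly over all $Q\in\cS_>^{d'}$ — is \emph{equivalent} to that very operator inequality (validity of $\Tr(QX)\le\Tr(QY)$ for all $Q\ge0$ is the same as $X\le Y$), so it restates the obstruction rather than circumventing it. You acknowledge this and gesture towards a Peierls--Bogoliubov argument, but leave it unperformed; it is not clear how such a variational principle would give the pointwise (Loewner) bound rather than merely the scalar \eqref{eq:control_lambda}. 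That is a genuine gap in your proposal.

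The honest observation is that your diagnosis applies to the paper's own argument as well. The passage from Lemma~\ref{lemma:Ucep} to the operator sandwich \eqref{eq:proof_upb_reg}--\eqref{eq:proof_lbd_reg} is justified only by "the properties of the partial trace (Remark~\ref{rem:prop_partialtrace}) and $\H\le\|\H\|_\infty\unit$", which silently invokes the monotonicity $K\le L\Rightarrow\tP_2(e^K)\le\tP_2(e^L)$. Remark~\ref{rem:prop_partialtrace} records only trace preservation and the tensor product formula for $\tP_i$; it says nothing about Loewner monotonicity of $K\mapsto\tP_2(e^K)$, and composing $\exp$ with a positive linear map does not repair its failure to be operator monotone (the case $d=1$, where $\tP_2$ is the identity, makes this plain). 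So the obstruction you flag is not an artifact of your route — it is the step that the written proof of \eqref{eq:reg_1} leaves unjustified, and a complete argument would need either a genuinely different monotonicity mechanism exploiting the tensor structure, or a modification of the constant. You have correctly located the soft spot, but you have not supplied the missing lemma.
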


\begin{proof}
	Note that \eqref{eq:control_transflambda} is an immediate consequence of \eqref{eq:reg_1} and \eqref{eq:control_lambda} and we shall therefore only prove the latter two. Let us start with the proof of \eqref{eq:reg_1}. 
	We know from Lemma \ref{lemma:Ucep} that for every $U \in \cS^d$, $\fT_{\alpha, \H}^\ep(U)$ satisfies equation \eqref{eq:implicit_Ctransform}. By the properties of the partial trace (Remark \ref{rem:prop_partialtrace}) and $\H \leq \| {\rm H} \|_{\infty} \unit$, it follows that 
	\begin{align}	\label{eq:proof_upb_reg}
		\begin{aligned}
			\alpha 
			&\leq e^{\frac{\| {\rm H} \|_{\infty}}{\ep}} \tP_1 \left(\exp\left[\frac{U \oplus \fT_{\alpha, \H}^\ep(U) }{\ep}\right]\right) \\
			&= e^{\frac{\| {\rm H} \|_{\infty}}{\ep}} \tP_1 \left( \exp\left( \frac U{\ep} \right) \otimes \exp \left( \frac{\fT_{\alpha, \H}^\ep(U)}{\ep} \right) \right) \\
			&= e^{\frac{\| {\rm H} \|_{\infty}}{\ep}}  \Tr \left( \exp\left( \frac{U }{\ep} \right) \right)  \exp \left( \frac{\fT_{\alpha, \H}^\ep(U)}{\ep} \right) \, ,
		\end{aligned}
	\end{align}
	where in the first inequality we used that $\exp(A \oplus B) = \exp A \otimes \exp B$.
	%
	%
	Similarly, using instead the lower bound $\H \geq - \| {\rm H} \|_{\infty} \unit$, from \eqref{eq:implicit_Ctransform} we can also obtain 
	\begin{align}	\label{eq:proof_lbd_reg}
		\alpha \geq e^{\frac{-\| {\rm H} \|_{\infty}}{\ep}}  \Tr \left( \exp\left( \frac{U}{\ep} \right) \right)  \exp \left( \frac{\fT_{\alpha, \H}^\ep(U)}{\ep} \right).
	\end{align}
	We can put together the two bounds in \eqref{eq:proof_upb_reg}, \eqref{eq:proof_lbd_reg} to obtain
	\begin{align}	\label{eq:proof_expbound_reg}
		\alpha e^{\frac{-\| {\rm H} \|_{\infty}}{\ep}} \leq \Tr \left( \exp\left( \frac{U  }{\ep} \right) \right) \exp \left( \frac{\fT_{\alpha, \H}^\ep(U) }{\ep} \right) \leq \alpha e^{\frac{\| {\rm H} \|_{\infty}}{\ep}}.
	\end{align}
	Taking the log in the latter inequalities we conclude the proof of \eqref{eq:reg_1}. If we instead take the trace of both sides in \eqref{eq:proof_expbound_reg}, we obtain
	
	\begin{align*}
		e^{\frac{-\| {\rm H} \|_{\infty}}{\ep}} \leq \Tr \left( \exp\left( \frac{U}{\ep} \right) \right) \Tr \left( \exp \left( \frac{\fT_{\alpha, \H}^\ep(U)}{\ep} \right) \right) \leq e^{\frac{\| {\rm H} \|_{\infty}}{\ep}},
	\end{align*}
	and then applying the log, we conclude the proof $\eqref{eq:control_lambda}$.
\end{proof}

\subsection{Vectorial $(\H,\ep)$-transforms}
\label{sec:vectorial}

In this section, we consider a vectorial version of the $(\H,\ep)$-transforms introduced in the previous section. This turns out to be a key object in the proof of Theorem \ref{theo:duality} and Theorem \ref{theo:introsinkhorn}, necessary to deal with the multi-marginal setting.
%

Let us first introduce the general framework, which remains in force throughout the section.
Let $N \in \bN$ and $[N]$ be a index set of $N$ elements. For all $i \in [N]$, let $d_i\in \N$ and $\gamma_i \in \DM^{d_i}$ be density matrices. Set $\bm\gamma := ( \gamma_i )_{i \in [N]}$, $\bm d = \prod_{j=1}^N d_i$. Finally, consider a Hamiltonian $\H \in \cS^{\bm d}$.
\begin{rem}(Kernels)	\label{rem:kernels}
	Without loss of generality, we can assume $\ker \gamma_i = \{ 0 \}$, for every $i \in [N]$. In the general case, it suffices to consider the restriction to the set $	\mathcal O:= \bigotimes_{i=1}^N \big( \ker \gamma_i \big)^\perp $ and consider the matrix $\H_{\mathcal O}= \Pi_{\mathcal O} \H \Pi_{\mathcal O}$, where $\Pi_{\mathcal O}$ is the projector onto $\mathcal O$. 
\end{rem}

%
%
We therefore assume that $\ker \gamma_i = \{ 0 \}$ for all $i\in [N]$. 
In this section we extend the notion of $(\H,\ep)$-transform as introduced in previous section \ref{sec:Heptransform} to the multi-marginal setting, and we apply it to our specific setting. We are interested in the maximization \eqref{eq:dual_statements} of the dual functional, that we introduce below.
\begin{defin}[Dual Functional] \label{eq:dualfcfull}
	For any $\bm U=(U_1,\dots,U_N)\in  \bigtimes_{j=1}^N \cS^{d_j}$, we define
	\begin{align*}
		\dualf(\bm U)= \sum_{i=1}^N \Tr(U_i \gamma_i)-\ep \Tr \bigg( \exp\bigg[ \frac{\bigoplus_{i=1}^N U_i - \H}{\ep} \bigg] \bigg) + \ep.
	\end{align*}
	
\end{defin}

\begin{rem} 
	\label{rem:dualinvariancebytransl}
	Note that $\dualf$ is invariant by translation for any vector $\bm a=(a_1,\dots, a_N)\in \bm \R^N$ such that $\sum_{k=1}^N a_k=0$, i.e. 
	\begin{align*}
		\dualf(\bm U+\bm a)=\dualf(\bm U).
	\end{align*}
	As a consequence of this property, we see in Section  \ref{sec:noncomsin} that the set of maximizers is invariant by such transformations (Lemma \ref{lem:characterizionmax}).
\end{rem}

With the following definition, we introduce the vectorial $(\H,\ep)$-transforms.

\begin{defin}[Vectorial $(\H,\ep)$-transform]	\label{def:vecHep}
	For any $i \in [N]$, we define the \textit{$i$-th vectorial $(\H,\ep)$-transform} $\fT_i^\ep$ as the map
	\begin{gather*}
		\begin{gathered}
			\fT_i^\ep : \bigtimes_{j=1, \, j \neq i}^N \cS^{d_j} \to \cS^{d_i}, \\
			\fT_i^\ep(\hat{\bm U}_i) = \Argmax_{V \in \cS^{d_i}} \left\{  \Tr(V \gamma_i)-\ep\Tr\left(\exp\left[\frac1\ep \left(  U_1 \oplus \dots \oplus U_{i-1} \oplus V \oplus U_{i+1} \oplus \dots \oplus U_N-\H \right) \right]\right)   \right\},
		\end{gathered}
	\end{gather*}
	where for $\bm U \in \bigtimes_{j=1}^N \cS^{d_j}$, we set $\hat{\bm U}_i  $ to be the product of all the $U_j$ but the $i$-th one, namely
	\begin{align}	\label{eq:def_hatU_i}
		\hat{\bm U} _i :=
		\big(
		U_1, \dots, U_{j-1}, U_{j+1}, \dots, U_N	
		\big)
		\in \bigtimes_{j=1, \, j \neq i}^N \cS^{d_j}.
	\end{align} 
\end{defin}

\begin{rem}
	\label{rem:vecttransfabstracttransf}
	Observe that we can identify the $i$-th vectorial $(\H,\ep)$-transforms with a particular case of the operators $\fT_{\ep,\H, \alpha}$ as introduced in Section \ref{sec:Heptransform}. Indeed, as a consequence of Remark \ref{rem:proptSi} it is straightforward to see that for $i \in [N]$ 
	\begin{align}	\label{eq:multim_transform_twom}
		\fT_i^\ep(\hat{\bm U}_i) = \fT_{\gamma_i,\tS_i(\H)}^\ep\left( \bigoplus_{j=1, \, j \neq i}^N U_i  \right), \quad 
		\fT_{\gamma_i,\tS_i(\H)}^\ep: \bigotimes_{j=1, \, j \neq i}^N \cS^{d_j} \approx \cS^{\tilde d_i} \to \cS^{d_i},
	\end{align}
	where we set $\tilde d_i := \prod_{j\neq i} d_j$ and the $\tS_i$ are the permutation operators in Definition \ref{def:permope}. This shows that the definition is well posed (i.e. that the $\Argmax$ appearing in the definition exists and is unique). Moreover it allows us to extend the validity of the properties of the $(\H,\ep)$-transform shown in Section \ref{sec:Heptransform} to the operators $\fT_i^\ep$, as we shall see in Lemma \ref{lemmaN:Ucep} and Proposition \ref{prop:reg_Hep-transf} below. Note that the dependence on the specific entry $i$ is reflected in both the use of $\gamma_i$ and in the fact that the transform is performed w.r.t. $\tS_i(\H)$.
\end{rem}

%


\subsection{One-step and Sinkhorn operators}

We use the vectorial $(\H, \ep)$-transforms to define what we call \textit{one-step operators} and \textit{Sinkhorn operators}. The first ones map a vector of $N$ potentials into a vector of $N$ potentials, exchanging its $i$-th entry with the $i$-th vectorial $(\H,\ep)$-transform applied to the other $N-1$. The second is simply obtained by composing all the different $N$ one-step operators.

\begin{defin}[One-step operators]	\label{def_Tiep}
	For $i \in [N]$, we introduce the \textit{one-step operators}  $\Sink_i$, which are defined by 
	\begin{gather*}
		\Sink_i: \bigtimes_{j=1}^N \cS^{d_j} \to \bigtimes_{j=1}^N \cS^{d_j}\\ 
		\bm U:= (U_1, \dots, U_N) \longmapsto (U_1,\dots,U_{i-1}, \fT_i^\ep(\hat{\bm U}_i),U_{i+1}, \dots, U_N)=:\Sink_i(\bm U).
	\end{gather*} 
\end{defin}

The Sinkhorn operator is simply the composition of the $N$ one-step operators $\Sink_i, i \in [N]$.

\begin{defin}[Sinkhorn Operator] \label{def:SinkhornOp}
	We introduce the \textit{Sinkhorn operator} $\tau$, defined by
	\begin{align*}
		\begin{gathered}
			\tau: \bigtimes_{j=1}^N \cS^{d_j} \to \bigtimes_{j=1}^N \cS^{d_j} , \\
			\tau(\bm U) := (\Sink_N \circ \dots \circ \Sink_1) (\bm U).
		\end{gathered}
	\end{align*}
\end{defin} 

\begin{rem} \label{rem:maximizersfixed}
	Note that, by definition of $\tau$, it follows immediately that, for any $\bm U \in \bigtimes_{j=1}^N \cS^{d_j}$ 
	\begin{align*}
		\dualf(\tau(\bm U))\geq \dualf(\bm U),
	\end{align*}
	i.e. applying $\tau$ to any vector increases its energy. Moreover, any maximizer of $\dualf$ is a fixed point of $\tau$ (as a consequence of the uniqueness proved in Lemma \ref{lem:existence of Hep transform}). The converse is also true and implies that the set of maximizers of $\dualf$ coincides with the set of fixed points of $\tau$, see Remark \ref{rem:nonCommSch_multim}.
	%
\end{rem}

\begin{rem}
	\label{rem:translinvariance}
	Note that for any vector $\bm a\in \bm \R^N$ such that $\sum_{k=1}^N \bm a_k=0$, one has
	\begin{align*}
		\Sink_i(\bm U + \bm a)=\Sink_i(\bm U)+\bm a,
	\end{align*}
	i.e. $\Sink_i$ commutes with translations by vectors whose coordinates sum up to zero (notice that this fact is particularly interesting in light of Remark \ref{rem:dualinvariancebytransl}). This is a straightforward consequence of the fact that
	\begin{align*}
		\fT_i^\ep\left((\widehat{\bm U+\bm a})_i\right)=\fT_i^\ep(\hat{\bm U} _i)+a_i,
	\end{align*}
	which can be readily verified from the definitions.
	Trivially, this also implies
	\begin{align*}
		\tau(\bm U + \bm a)=\tau(\bm U)+\bm a.
	\end{align*}
\end{rem}

We now take advantage of the observations in Remark \ref{rem:vecttransfabstracttransf} to deduce properties for the vectorial $(\H,\ep)$-transforms, the one-step operators, and the Sinkhorn operator. First of all, as a corollary of Lemma \ref{lemma:Ucep}, we characterize the vectorial $(\H,\ep)$-transforms as solutions of implicit equations.

\begin{lemma}[Optimality conditions for vectorial $(\H,\ep)$-transforms]	\label{lemmaN:Ucep}
	Let $i\in [N]$, $\ep>0$, $\gamma_i \in \DM^{d_i}$, $\H \in \cS^d$, with $\ker \gamma_i = \{ 0 \}$. For any $\bm U \in \bigtimes_{j=1}^N \cS^{d_j}$, the one step-operator $\Sink_i(\bm U)$ (or equivalently the $i$-th vectorial $(\H,\ep)$-transform $\fT_i^\ep(\hat{\bm U}_i)$) is implicitly characterized as the unique solution of the equation
	\begin{align}	
		\label{eqN:implicit_CNtransform_i}
		\gamma_i = \tP_i \left(\exp\left[\frac1\ep \left( \bigoplus_{j=1}^N (\Sink_i(\bm U))_j - \H \right) \right]\right) \, .
	\end{align}
\end{lemma}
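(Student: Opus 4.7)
The plan is to reduce the multi-marginal statement to the two-marginal Lemma \ref{lemma:Ucep} by using the permutation operators $\tS_i$, following the identification already pointed out in Remark \ref{rem:vecttransfabstracttransf}. Concretely, I would fix $i \in [N]$, $\bm U \in \bigtimes_{j=1}^N \cS^{d_j}$ and set $V := \bigoplus_{j \neq i} U_j$ (viewed as an element of $\cS^{\tilde d_i}$ with $\tilde d_i = \prod_{j \neq i} d_j$), so that by \eqref{eq:multim_transform_twom} we have $\fT_i^\ep(\hat{\bm U}_i) = \fT_{\gamma_i,\tS_i(\H)}^\ep(V)$. Applying Lemma \ref{lemma:Ucep} with the data $(\alpha, \H) \leftarrow (\gamma_i, \tS_i(\H))$, $U \leftarrow V$, I immediately obtain
\begin{align*}
\gamma_i = \tP_2 \left( \exp\left[\frac{V \oplus \fT_i^\ep(\hat{\bm U}_i) - \tS_i(\H)}{\ep}\right] \right),
\end{align*}
where $\tP_2$ denotes the partial trace onto the second (i.e.\ $d_i$-dimensional) factor in the permuted tensor product.

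The next step is to ``undo'' the permutation in order to match the original ordering required by \eqref{eqN:implicit_CNtransform_i}. The two key algebraic facts I would use are: (a) $\tS_i$ is a $*$-algebra homomorphism on $\cM^{\bm d}$, hence it commutes with the functional calculus, in particular with the exponential; (b) by the very definitions of Kronecker sum and of $\tS_i$,
\begin{align*}
\tS_i \Big(\bigoplus_{j=1}^N (\Sink_i(\bm U))_j - \H \Big) = V \oplus \fT_i^\ep(\hat{\bm U}_i) - \tS_i(\H),
\end{align*}
since $\tS_i$ moves the $i$-th factor to the last slot and leaves the others ordered. Finally, (c) the partial trace $\tP_i$ in the original ordering coincides with $\tP_2 \circ \tS_i$ in the permuted grouping (tracing out ``all but the $i$-th'' after permutation is exactly tracing out ``all but the last''). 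Combining (a), (b), (c) transforms the displayed equation above into exactly \eqref{eqN:implicit_CNtransform_i}.

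For the uniqueness statement, suppose $\bm W \in \bigtimes_{j=1}^N \cS^{d_j}$ satisfies \eqref{eqN:implicit_CNtransform_i} and agrees with $\Sink_i(\bm U)$ outside the $i$-th coordinate; reversing the above manipulations shows that $W_i$ solves the two-marginal implicit equation of Lemma \ref{lemma:Ucep} with the same data, and uniqueness there (guaranteed by Lemma \ref{lem:existence of Hep transform}, i.e.\ strict concavity of $\tT_{\alpha,\H}^\ep(U,\cdot)$) forces $W_i = \fT_i^\ep(\hat{\bm U}_i)$.

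I do not anticipate any serious obstacle: the whole argument is essentially bookkeeping for the permutation. The one point that deserves a small amount of care is verifying (c) — that $\tP_i = \tP_2 \circ \tS_i$ in the relevant sense — which follows from the duality definition \eqref{eq:defP_i} together with the fact that for product elementary tensors $A = \bigotimes_j A_j$ one has $\tS_i(A) = A_1 \otimes \cdots \otimes A_{i-1} \otimes A_{i+1} \otimes \cdots \otimes A_N \otimes A_i$, so tracing out the last factor in $\tS_i(A)$ returns $A_i \cdot \prod_{j \neq i} \Tr A_j$, matching $\tP_i(A)$; the general case then follows by linearity.
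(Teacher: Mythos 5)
Your proposal is correct and follows essentially the same route as the paper: identify $\fT_i^\ep(\hat{\bm U}_i)$ with the two-marginal $(\H,\ep)$-transform via the permutation operators $\tS_i$ (Remark \ref{rem:vecttransfabstracttransf} / equation \eqref{eq:multim_transform_twom}), apply Lemma \ref{lemma:Ucep}, and then undo the permutation using the spectral/functional-calculus invariance of $\tS_i$ (Remark \ref{rem:proptSi}) together with the Kronecker-sum identity. Your items (a)--(c) merely spell out explicitly the bookkeeping that the paper compresses into the citation of Remark \ref{rem:proptSi}, and your added uniqueness paragraph is a harmless elaboration of what the paper leaves implicit.
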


\begin{proof}
	As a consequence of \eqref{eq:multim_transform_twom}, we can apply Lemma \ref{lemma:Ucep} and deduce
	\begin{align*}
		\gamma_i
		&= \tP_i \left(\exp\left[\frac1\ep \left( \left( \bigoplus_{j=1, \, j \neq i}^N U_i \right) \oplus \fT_i^\ep(\hat{\bm U}_i)  - \tS_i(\H) \right) \right]\right)\\
		&=\tP_i \left(\exp\left[\frac1\ep \left( \bigoplus_{j=1}^N (\Sink_i(\bm U))_j - \H \right) \right]\right) \, ,
	\end{align*}
	where $\tS_i$ is the $i$-th permutation operator, as defined in \ref{def:permope}, and in the last equality we used Remark \ref{rem:proptSi} and that
	\begin{align*}
		\left( \bigoplus_{j=1, \, j \neq i}^N U_i \right) \oplus \fT_i^\ep(\hat{\bm U}_i)  = \tS_i \left( \bigoplus_{j=1}^N (\Sink_i(\bm U))_j \right)
	\end{align*}
	for every $i \in [N]$ and $\bm U \in \bigtimes_{j=1}^N \cS^{d_j}$.
\end{proof}
%
%

The next proposition collects the regularity properties of the $(\H,\ep)$-transforms. Once again, they are direct consequence of the properties proved in the two marginals case, in particular in Proposition \ref{prop:bounds_Ctransforms}. 
%
%
%
\begin{proposition}[Regularity of the $(\H,\ep)$-transforms]	\label{prop:reg_Hep-transf}	
	Let $i\in [N]$, $\ep>0$, $\gamma_i \in \DM^{d_i}$, $\H \in \cS^d$, with $\ker \gamma_i = \{ 0 \}$. Then for every $\bm U \in \bigtimes_{j=1}^N \cS^{d_j} $, for every $i \in [N]$ it holds
	\begin{gather}	
		\label{eq:reg_i}
		\left| \fT_i^\ep(\hat{\bm U}_i) - \ep \log \gamma_i   + \sum_{j=1, j\neq i}^N \lambda_\ep (U_j) \unit \right| \leq \| {\rm H} \|_{\infty} \unit \,,\\
		\label{eq:control_lambda_i}
		\left| \sum_{j=1, j\neq i}^N \lambda_\ep (U_j) + \lambda_\ep \left( \fT_i^\ep(\hat{\bm U}_i) \right) \right| \leq \| {\rm H} \|_{\infty}\, ,\\
		\label{eq:control_transflambda_i}
		\left|\fT_i^\ep(\hat{\bm U}_i) -\ep \log \gamma_i    -\lambda_\ep \left( \fT_i^\ep(\hat{\bm U}_i) \right)\unit\right|\leq (2\|{\rm H}\|_{\infty})  \unit, 
	\end{gather}
	where $\lambda_\ep$ is defined in \eqref{eq:def_lambda}.
\end{proposition}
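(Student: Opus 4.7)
My plan is to derive Proposition \ref{prop:reg_Hep-transf} as a direct consequence of its two-marginal counterpart, Proposition \ref{prop:bounds_Ctransforms}, via the identification already recorded in Remark \ref{rem:vecttransfabstracttransf}. Concretely, for fixed $i \in [N]$ and $\bm U \in \bigtimes_{j=1}^N \cS^{d_j}$, set $\tilde U := \bigoplus_{j\neq i} U_j \in \cS^{\tilde d_i}$ with $\tilde d_i = \prod_{j\neq i} d_j$, and $\tilde \H := \tS_i(\H) \in \cS^{\tilde d_i d_i}$. Then Remark \ref{rem:vecttransfabstracttransf} gives
\begin{align*}
\fT_i^\ep(\hat{\bm U}_i) \;=\; \fT_{\gamma_i,\tilde\H}^\ep\bigl(\tilde U\bigr),
\end{align*}
and by Remark \ref{rem:proptSi}, $\|\tilde\H\|_\infty = \|\H\|_\infty$, so the constants in Proposition \ref{prop:bounds_Ctransforms} applied with $\alpha = \gamma_i$, $\H \leftarrow \tilde \H$, $U \leftarrow \tilde U$ take exactly the form we need.

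The one small computation to carry out is the behaviour of $\lambda_\ep$ under Kronecker sums. Using $\exp(A\oplus B) = \exp(A)\otimes \exp(B)$ iteratively and the multiplicativity of the trace on tensor products, one has
\begin{align*}
\lambda_\ep\bigl(\tilde U\bigr) \;=\; \ep \log \Tr\!\left[\bigotimes_{j\neq i} \exp(U_j/\ep)\right] \;=\; \ep \log \prod_{j\neq i}\Tr[\exp(U_j/\ep)] \;=\; \sum_{j\neq i}\lambda_\ep(U_j).
\end{align*}
Substituting this identity into the bound \eqref{eq:reg_1} of Proposition \ref{prop:bounds_Ctransforms} applied to $\tilde U$ immediately yields \eqref{eq:reg_i}, and substituting it into \eqref{eq:control_lambda} gives \eqref{eq:control_lambda_i}. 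Finally, \eqref{eq:control_transflambda_i} either follows by summing \eqref{eq:reg_i} and \eqref{eq:control_lambda_i} (noting that $\bigl|\sum_{j\neq i}\lambda_\ep(U_j)\bigr|$ appears with opposite signs) and using the triangle inequality, or equivalently from \eqref{eq:control_transflambda} in the two-marginal case applied to $\tilde U$.

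Since all three estimates reduce mechanically to the two-marginal result already proved, I do not anticipate any obstacle here; the only point of care is keeping track of the correct permutation $\tS_i$ and verifying the additivity of $\lambda_\ep$ under $\bigoplus$, which relies solely on $\exp(A\oplus B)=\exp A\otimes\exp B$ and $\Tr(A\otimes B)=\Tr A\,\Tr B$. No new compactness, convexity, or variational argument is needed beyond what has already been established in Section \ref{sec:Heptransform}.
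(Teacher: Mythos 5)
Your proof is correct and follows exactly the same route as the paper: it invokes the identification $\fT_i^\ep(\hat{\bm U}_i) = \fT_{\gamma_i,\tS_i(\H)}^\ep\bigl(\bigoplus_{j\neq i} U_j\bigr)$ from Remark~\ref{rem:vecttransfabstracttransf}, uses $\|\tS_i(\H)\|_\infty = \|\H\|_\infty$, verifies the additivity $\lambda_\ep\bigl(\bigoplus_{j\neq i} U_j\bigr)=\sum_{j\neq i}\lambda_\ep(U_j)$, and then reads off the three bounds from Proposition~\ref{prop:bounds_Ctransforms}. No discrepancies.
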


\begin{proof}
	The proof is a direct application of Proposition \ref{prop:bounds_Ctransforms} and the considerations in Remark \ref{rem:vecttransfabstracttransf}. Precisely, the estimate \eqref{eq:reg_i} follows from \eqref{eq:reg_1}, \eqref{eq:control_lambda_i} follows from \eqref{eq:control_lambda} and \eqref{eq:control_transflambda_i} follows from \eqref{eq:control_transflambda}, together with the fact that
	\begin{align*}
		\lambda_\ep\left(\bigoplus_{j=1, \, j \neq i}^N U_j  \right)=\sum_{j=1, j\neq i}^N \lambda_{\varepsilon} \left( U_j\right)
		, \quad \forall \bm U \in \bigtimes_{j=1}^N \cS^{d_j}.
	\end{align*}
\end{proof}

In light of Remark \ref{rem:maximizersfixed}, it is reasonable to check whether sequences of the form $\tau^k(\bm U_0)$ are maximizing for $\dualf$ and compact. On the other hand, a priori it is not clear how to obtain compactness for such sequences and Remark \ref{rem:dualinvariancebytransl} shows that there could even exist sequences `converging' to the set of maximizers which are not compact. It is therefore natural to introduce a suitable renormalization operator, aimed at retrieving compactness. Note that any such operator should increase or leave invariant the value of $\dualf$ and therefore, by Remark \ref{rem:dualinvariancebytransl}, any translation by vectors whose coordinates sum up to zero is a good candidate.

\begin{defin}[Renormalisation]
	\label{def:Ren}
	Let $\lambda_\ep$ be defined as in \eqref{eq:def_lambda}. We define the \textit{renormalisation map} $\ren: \bigtimes_{i=1}^N \cS^{d_i} \to \bigtimes_{i=1}^N \cS^{d_i} $ as the function
	\begin{align*}
		\ren(\bm U)_i = \begin{cases}
			U_i - \lambda_{\ep}(U_i) , &\text{ if } i \in \lbrace 1,\dots,N-1\rbrace \\
			\displaystyle U_N + \sum_{j=1}^{N-1}\lambda_{\ep}(U_j), &\text{ if } i = N.\end{cases}
	\end{align*}
\end{defin}

In the following proposition we show that $\ren \left(\tau\left(\bigtimes_{i=1}^N \cS^{d_i}\right)\right)$ is bounded and therefore compact. This shows that the map $\ren$ is indeed a reasonable renormalization operator for our purposes. 
%
%
\begin{proposition}	[Renormalisation of $(\H,\ep)$-transforms and uniform bounds]	\label{propN:uniform_bounds}
	Let $i\in [N]$, $\ep>0$, $\gamma_i \in \DM^{d_i}$, $\H \in \cS^d$, with $\ker \gamma_i = \{ 0 \}$. Then, for any $\bm U  \in \cS^{\bm d}$, one has that 
	$
	\dualf(\ren \tau (\bm U) ) \geq \dualf(\bm U),
	$
	and the following bounds hold true:
	\begin{align}	\label{eqN:uniform_bound_Ui}
		\big| (\ren \tau(\bm U))_i - \eps \log \gamma_i   \big|	\leq 2 \| {\rm H} \|_{\infty}  \unit, \quad \forall i \in [N].	
	\end{align}
\end{proposition}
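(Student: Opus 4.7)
The argument has two independent ingredients. First, I verify the monotonicity $\dualf(\ren\tau(\bm U)) \geq \dualf(\bm U)$. By Definition \ref{def_Tiep}, the $i$-th vectorial $(\H,\ep)$-transform $\fT_i^\ep$ is an argmax, so each one-step operator $\Sink_i$ cannot decrease the value of $\dualf$; composing $N$ such operators gives $\dualf(\tau(\bm U)) \geq \dualf(\bm U)$. On the other hand, writing out the coordinates of $\ren(\bm V) - \bm V$, one has $-\lambda_\ep(V_j)$ for $j<N$ and $\sum_{j<N}\lambda_\ep(V_j)$ for $j=N$; the sum of these scalars vanishes. By Remark \ref{rem:dualinvariancebytransl}, $\dualf$ is invariant under translation by any such vector, so $\dualf(\ren\tau(\bm U)) = \dualf(\tau(\bm U))$, which closes this step.

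Second, the a priori bounds \eqref{eqN:uniform_bound_Ui}. I would introduce the intermediate iterates
\begin{align*}
\bm U^{(0)} := \bm U, \quad \bm U^{(i)} := \Sink_i(\bm U^{(i-1)}), \quad i \in [N],
\end{align*}
so that $\tau(\bm U) = \bm U^{(N)}$. The key combinatorial observation is that $\Sink_i$ only touches the $i$-th entry; hence $(\bm U^{(N)})_i = (\bm U^{(i)})_i = \fT_i^\ep(\widehat{\bm U^{(i-1)}}_i)$ for every $i$, and moreover $(\bm U^{(N-1)})_j = (\bm U^{(N)})_j$ for every $j<N$. I will then treat the indices $i<N$ and $i=N$ separately, matching each regularity estimate of Proposition \ref{prop:reg_Hep-transf} to the corresponding component of the renormalisation map.

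For $i<N$, Definition \ref{def:Ren} gives $\ren\tau(\bm U)_i = (\bm U^{(N)})_i - \lambda_\ep((\bm U^{(N)})_i)\unit$, which is exactly the combination that appears in \eqref{eq:control_transflambda_i}; applying that estimate to $\fT_i^\ep(\widehat{\bm U^{(i-1)}}_i) = (\bm U^{(N)})_i$ immediately yields
\begin{align*}
\big|\ren\tau(\bm U)_i - \ep\log\gamma_i\big| \leq 2\|\H\|_\infty \unit.
\end{align*}
For $i=N$, I instead use \eqref{eq:reg_i} applied to $\fT_N^\ep(\widehat{\bm U^{(N-1)}}_N) = (\bm U^{(N)})_N$, which provides
\begin{align*}
\bigg|(\bm U^{(N)})_N - \ep\log\gamma_N + \sum_{j=1}^{N-1}\lambda_\ep((\bm U^{(N-1)})_j)\unit\bigg| \leq \|\H\|_\infty \unit.
\end{align*}
Since $(\bm U^{(N-1)})_j = (\bm U^{(N)})_j$ for $j<N$, the quantity inside the absolute value is exactly $\ren\tau(\bm U)_N - \ep\log\gamma_N$, whence the desired bound (in fact with constant $\|\H\|_\infty$, which is sharper than the claimed $2\|\H\|_\infty$).

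The main subtlety is the bookkeeping: one must match the two distinct normalisations of $\ren$ (subtracting $\lambda_\ep(U_i)$ for $i<N$ versus adding $\sum_{j<N}\lambda_\ep(U_j)$ for $i=N$) with the two distinct regularity bounds of Proposition \ref{prop:reg_Hep-transf} (namely \eqref{eq:control_transflambda_i} and \eqref{eq:reg_i}), and to exploit the fact that the coordinates $j<N$ are unchanged between steps $N-1$ and $N$ of the Sinkhorn iteration. Once this correspondence is set up, no further computation is needed.
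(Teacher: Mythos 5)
Your proposal is correct and follows essentially the same route as the paper: monotonicity via the argmax characterisation of $\fT_i^\ep$ plus translation invariance of $\dualf$, and the a priori bounds by matching \eqref{eq:control_transflambda_i} to the coordinates $i<N$ and \eqref{eq:reg_i} to $i=N$, using that the first $N-1$ coordinates are unchanged by the last one-step operator. The intermediate-iterate bookkeeping you spell out is exactly what the paper compresses into ``$(\tau(\bm U))_i$ is obtained applying $\fT_i^\ep$ to some element,'' and your remark that the $N$-th component actually satisfies the sharper bound $\|\H\|_\infty$ agrees with the paper's computation.
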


\begin{proof}	
	First of all,  Remark \ref{rem:dualinvariancebytransl} and Remark \ref{rem:maximizersfixed} trivially yield
	$
	\dualf(\ren \tau (\bm U) ) \geq \dualf(\bm U).
	$
	To show \eqref{eqN:uniform_bound_Ui}, note that for any $i\in [N]$, $(\tau(\bm U))_i$ is obtained applying $\fT_i^\ep$ to some element of $\bigtimes_{j=1, \, j \neq i}^N \cS^{d_j}$. Therefore, applying \eqref{eq:control_transflambda_i} from Proposition \ref{prop:reg_Hep-transf}, we obtain
	\begin{gather*}
		\left\|(\ren\tau(\bm U))_i -\ep\log \gamma_i \right\|_{\infty}=\left\|(\tau(\bm U))_i-\ep\log \gamma_i-\lambda_\ep((\tau(\bm U))_i)\right\|_{\infty}\leq 2\|\H\|_{\infty}
	\end{gather*}
	for every $ i \in [N-1]$.
	Moreover, $(\tau(\bm U))_N=\fT_\ep^N(\widehat{\tau (\bm  U)}_N)$ and hence, applying \eqref{eq:reg_i} from Proposition \ref{prop:reg_Hep-transf}, we arrive at
	\begin{align*}
		\left\|(\ren\tau(\bm U))_N  -\ep\log \gamma_N \right\|_{\infty}
		=\left\|(\tau(\bm U))_N -\ep\log \gamma_N+\sum_{j=1}^{N-1}\lambda_\ep((\tau(\bm U))_j)\right\|_{\infty} \leq \|\H\|_{\infty},
	\end{align*}
	which completes the proof.
\end{proof}

\section{Non-commutative multi-marginal optimal transport}\label{sec:mainresults}
%
In this section we prove Theorem \ref{theo:duality}, our first main result stated in Section \ref{sec:contributions}, exploiting the tools developed in Section \ref{sec:prem}. Again, we fix the setup, which remains in force throughout the whole Section \ref{sec:mainresults} and Section \ref{sec:noncomsin}. Let $N \in \bN$, and for $i\in[N]$ we consider density matrices $\gamma_i \in \DM^{d_i}$. Set $\bm\gamma := ( \gamma_i )_{i \in [N]}$, $\bm d = \prod_{j=1}^N d_i$, and assume that $\ker \gamma_i = \{ 0 \}$ (see Remark \ref{rem:kernels}). We also fix $\H \in \cS^{\bm d}$.
%
%
In this section, we prove the Theorem \ref{theo:duality}.
%
%

We begin by introducing the primal functional, which appears in the minimisation \eqref{eq:primal_statements}.

\begin{defin}[Primal Functional]
	Let $\Gamma\in\DM^{{\bf d}}$ the primal functional is defined by
	\begin{equation}\label{eq:primalfcfull} 
		\primal(\Gamma) = \Tr(\H\Gamma) + \ep S(\Gamma) = \Tr(\H\Gamma) + \ep \Tr(\Gamma \log\Gamma).
	\end{equation} 
\end{defin}

We also recall the definitions of the primal and the dual problem
%
\begin{equation}\label{eq:primalproblemfull}
	\prim(\bm \gamma) = \inf \left\lbrace \primal(\Gamma) \suchthat \Gamma\in\DM^{{\bf d}} \text{ and } \Gamma\mapsto (\gamma_1,\dots,\gamma_N)  \right\rbrace,
\end{equation}
\begin{equation}\label{eq:dualproblemfull}
	\dual(\bm \gamma) = \sup \left\lbrace \dualf(\bm U) \suchthat \bm U \in \bigtimes_{i=1}^N \cS^{d_i} \right\rbrace,
\end{equation}
where the dual functional $\dualf$ is given in Definition \ref{eq:dualfcfull}.

\subsection{Primal and dual functionals: lower bound and structure of the optimizers}
\label{sec:duality}
%
%
We begin with the proof of the lower bound for the primal functional  \eqref{eq:primalfcfull}, in terms of the dual functional \eqref{eq:dualfcfull}.
%
%


\begin{proposition}[Lower bound]	\label{prop:lowerbound_multim} Fix $N \in \bN$ and $\ep > 0$. For all $i\in [N]$, let $\gamma_i \in \DM^{d_i}$ be density matrices, $\H \in \cS^{\bm d}$. Then, for all $\bm U \in \bigtimes_{i=1}^N \cS^{d_i}$ and every $\Gamma \in \DM^{\bm d}$, $\Gamma \mapsto \bm \gamma$ we have that
	\begin{align*}
		\primal(\Gamma) \geq \dualf(\bm U) \, .
	\end{align*}
\end{proposition}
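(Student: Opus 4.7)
The inequality $\primal(\Gamma) \geq \dualf(\bm U)$ reduces, by inspection of the definitions, to
\begin{align*}
\ep\Tr(\Gamma\log\Gamma) + \Tr(\H\Gamma) - \sum_{i=1}^N \Tr(U_i\gamma_i) + \ep \Tr\bigl(e^{M}\bigr) \geq \ep, \qquad M := \tfrac{1}{\ep}\Bigl(\bigoplus_{i=1}^N U_i - \H\Bigr).
\end{align*}
The plan is to rewrite the linear part using the marginal constraint, and then invoke Klein's inequality to handle the remainder.

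The first step is to use $\Gamma\mapsto\bm\gamma$ to fuse the Kantorovich potentials with $\Gamma$. By the defining duality \eqref{eq:defP_i} of the partial trace and the definition of the Kronecker sum,
\begin{align*}
\sum_{i=1}^N \Tr(U_i \gamma_i) = \sum_{i=1}^N \Tr\bigl(U_i\, \tP_i(\Gamma)\bigr) = \sum_{i=1}^N \Tr\bigl(\tQ_i(U_i)\,\Gamma\bigr) = \Tr\!\Bigl(\Bigl(\bigoplus_{i=1}^N U_i\Bigr)\Gamma\Bigr).
\end{align*}
Substituting this into the target inequality and multiplying/dividing by $\ep$, the claim becomes
\begin{align*}
\Tr(\Gamma\log\Gamma) - \Tr(\Gamma\,M) + \Tr(e^M) \geq 1 = \Tr(\Gamma),
\end{align*}
where $M\in\cS^{\bm d}$ is the self-adjoint matrix defined above.

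The second (and main) step is Klein's inequality: for any positive semidefinite $A,B$,
\begin{align*}
\Tr(A\log A - A\log B) \geq \Tr(A - B).
\end{align*}
Applying this with $A=\Gamma$ and $B=e^{M}$ (so that $\log B=M$) yields exactly
\begin{align*}
\Tr(\Gamma\log\Gamma) - \Tr(\Gamma M) \geq \Tr(\Gamma) - \Tr(e^M),
\end{align*}
which, rearranged, gives the desired bound. One should note that the argument is insensitive to the kernel of $\Gamma$: the convention $0\log 0 = 0$ makes $\Tr(\Gamma\log\Gamma)$ well defined for any $\Gamma\in\DM^{\bm d}$, and Klein's inequality holds in this generality.

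There is essentially no obstacle: the only non-trivial input is Klein's inequality (which is standard, see e.g.\ \cite{carlen2010trace}), and the role of the marginal constraint is the mild algebraic identity turning the $N$ separate traces $\Tr(U_i\gamma_i)$ into a single trace against $\Gamma$. The proof works verbatim for all $\bm U\in\bigtimes_{i=1}^N\cS^{d_i}$ and all admissible $\Gamma$, with no need for positivity, invertibility, or any regularity of $\bm U$ or $\bm\gamma$.
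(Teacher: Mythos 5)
Your proof is correct. The first step, turning $\sum_i\Tr(U_i\gamma_i)$ into $\Tr\bigl(\bigl(\bigoplus_i U_i\bigr)\Gamma\bigr)$ via the marginal constraint, is identical to the paper's. Where you diverge is in the convexity input: the paper computes the Fenchel conjugate of $S$ on all of $\cS_\geq^{\bm d}$ (without the trace constraint), obtaining $S^*(Y)=\Tr[\exp(Y-1)]$, which gives the slightly weaker pointwise bound
\begin{align*}
\primal(\Gamma)\ \geq\ \sum_j\Tr(U_j\gamma_j)-\ep\,e^{-1}\Tr(e^{M}),
\end{align*}
and then shifts $U_1\mapsto U_1+\ep$ (permissible since the inequality holds for all $\bm U$) to recover $\dualf(\bm U)$. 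You instead invoke Klein's inequality $\Tr(A\log A - A\log B)\geq\Tr(A-B)$ with $A=\Gamma$, $B=e^M$, which already builds in $\Tr\Gamma=1$ and lands directly on $\dualf(\bm U)$ without the change of variables. The two are two packagings of the same Fenchel--Young duality for the von Neumann entropy; yours is a bit cleaner since it sidesteps both the explicit conjugate computation and the translation trick, at the cost of citing Klein's inequality as a black box (which is entirely standard, and $B=e^M>0$ so there is no domain issue). Your remark on the kernel of $\Gamma$ is also accurate.
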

%
%
\begin{proof}
	For any $\bm U \in \bigtimes_{i=1}^N \cS^{d_i}$ and any admissible $\Gamma \in \DM^{\bm d}$, $\Gamma \mapsto \bm \gamma$,  we can write
	\begin{align*}
		\primal(\Gamma)
		&=\primal(\Gamma)+\sum_{j=1}^N \Tr(U_j \gamma_j)-\Tr\left(\left(\bigoplus_{j=1}^N U_j\right) \Gamma\right)\\
		&=\sum_{j=1}^N \Tr(U_j \gamma_j)+\ep S(\Gamma)- \Tr\left(\Gamma\left(\bigoplus_{j=1}^N U_j-\H\right)\right).
	\end{align*}
	Let us denote the Hilbert-Schmidt scalar product (on $\cM^{\bm d}$) by $\langle \cdot, \cdot \rangle_{HS}$. It follows that
	\begin{align}
		\label{eq:dualeq1}
		\primal(\Gamma)=\sum_{j=1}^N \Tr(U_j \gamma_j)+\ep\left[S(\Gamma)-\langle \Gamma, \overline Y\rangle_{HS}\right]
		\geq \sum_{j=1}^N \Tr(U_j \gamma_j)-\ep S^*(\overline Y), 
	\end{align}
	where $\overline Y=\ep^{-1}\left(\bigoplus_{j=1}^N U_j-\H\right)\in \cS^{\bm d}$ and, for any $Y \in \cS^{\bm d}$
	\begin{align*}
		S^*(Y):=\sup_{\Gamma \in \cS_\geq^{\bm d}} \{\langle Y, \Gamma \rangle_{HS} -S(\Gamma)\} 
	\end{align*}
	denotes the Legendre transform of $S$ on the subspace $\cS_\geq^{\bm d}$. This can be explicitly computed as
	\begin{align}	\label{eq:S^*}
		S^*(Y)=\Tr\left[\exp(Y-1)\right], \quad \forall Y \in \cS^{\bm d}.
	\end{align}
	
	For the sake of completeness, let us explain how to prove \eqref{eq:S^*}. First of all we show that for any $Y\in \cS^{\bm d}$ the supremum appearing in the definition of $S^*(Y)$ is attained at some $\bar{\Gamma}\in \cS_>^{\bm d}$. Indeed, for any $\Gamma\geq 0$ define $\sigma_+$ to be the maximum of its spectrum, then it holds
	\begin{align*}
		\langle Y, \Gamma \rangle_{HS} -S(\Gamma)\leq \bm d ^2\|Y\|_{\infty} \sigma_+ -\sigma_+ \log \sigma_+ - \min_{\mathbb{R}_+} \{x\log x\} (\bm d^2-1)\xrightarrow{\sigma_+ \to \infty} -\infty.
	\end{align*}
	This implies that the super-levels of $\langle y, \Gamma \rangle_{HS} -f(\Gamma)$ are bounded and hence pre-compact and allows us to conclude the existence of a maximizer $\bar{\Gamma}$. Moreover, it is straightforward to show that $\bar{\Gamma}>0$,  otherwise one would have a contradiction by perturbing $\bar\Gamma$ with $\Pi_{\ker \bar\Gamma}$ (the projector onto $\ker \bar \Gamma$). 
	
	Let us derive the optimality conditions for $\bar \Gamma$. Define $\Gamma_s:= \bar \Gamma+s \Gamma'$ with $\Gamma'\in \cS^{\bm d}$ (note that for any $\Gamma' \in \cS^{\bm d}$ for $s$ sufficiently small $\Gamma_s$ is positive since $\bar \Gamma>0$), then the Euler-Lagrange equation for the maximization problem reads 
	\begin{align*}
		0=\frac d {ds} \Big|_{s=0} \left(\langle Y, \Gamma_s \rangle_{HS} -S(\Gamma_s)\right)=\langle Y, \Gamma' \rangle_{HS}-\Tr\left[\Gamma'(\log \bar\Gamma+1)\right].
	\end{align*}
	%
	This yields $\bar \Gamma=\exp(Y-1)$. Substituting in the expression for $S^*$, we arrive at \eqref{eq:S^*}.
	
	Plugging this into \eqref{eq:dualeq1} with $Y= \overline Y$ and recalling the definition of $Y$, we obtain
	\begin{align*}
		\primal(\Gamma)\geq \sum_{j=1}^N \Tr(U_j \gamma_j)-\ep  \Tr\left(\exp\left(\frac{\bigoplus_{j=1}^N U_j-\H-\ep}{\ep}\right)\right).
	\end{align*}
	Changing the variable $U_1$ to $\tilde{U}_1:=U_1+\ep$, we conclude the proof. 
\end{proof}

\begin{rem}[The non-commutative Schr\"{o}dinger problem]
	\label{rem:nonCommSch_multim}
	Suppose that $\bm U \in \bigtimes_{i=1}^N \cS^{d_i}$ is a fixed point for $\tau$, namely $\tau(\bm U) = \bm U$. This can be equivalently recast as
	$
	\fT_i^\ep ( \hat{\bm U}_i ) = U_i$, $\forall i \in [N].
	$
	Then Lemma \ref{lemmaN:Ucep}, \eqref{eqN:implicit_CNtransform_i} imply that the density matrix defined by 
	\begin{align}	
		\label{eq:Gamma0_multim}
		\Gamma := \exp\left(\frac{ \bigoplus_{i=1}^N U_i- \H}{\ep}\right) 
	\end{align}
	has the correct marginals  $\Gamma \mapsto (\gamma_1, \dots, \gamma_N)$ and thus it is admissible for the primal problem. In particular, it has trace $1$ and we have
	\begin{align*}	
		\dualf(U_1, \dots, U_N)=\dualf(\bm U)= \sum_{i=1}^N \Tr(U_i \gamma_i)= \Tr\left(\left(\bigoplus_{i=1}^N U_i \right) \Gamma\right).
	\end{align*}
	On the other hand, directly from formula \eqref{eq:Gamma0_multim}, we compute
	\begin{align*}
		\Gamma \H + \eps \, \Gamma \log \Gamma = \Gamma \H + \Gamma \left(\bigoplus_{i=1}^N U_i - \H\right) =  \Gamma \left(\bigoplus_{i=1}^N U_i\right)
	\end{align*}
	and thus
	\begin{align}	\label{eq:prim_0_multim}
		\primal(\Gamma) = \Tr\left(\left(\bigoplus_{i=1}^N U_i \right) \Gamma\right)= \dualf(U_1, \dots, U_N).
	\end{align}
	
	In light of Proposition \ref{prop:lowerbound_multim}, this shows that if we are able to find a fixed point of $\tau$, then this must be optimal for the dual problem (note that any maximizer is also a fixed point for $\tau$ as discussed in Remark \ref{rem:maximizersfixed}) and the corresponding $\Gamma$ as obtained in \eqref{eq:Gamma0_multim} must be optimal for the primal problem.
\end{rem}

Another consequence of the above observations is that the set of maximizers for the dual problem is invariant under translations. 
%
%
%
\begin{lemma}[Structure of the maximizers]
	\label{lem:characterizionmax}
	Let $\bm U$ and $\bm V$ be two maximizers of $\dualf$, then there exists $\bm \alpha \in \bR^N$ such that $\sum_{i=1}^N \bm \alpha_i=0$ and 
	$
	\bm U=\bm V +\bm \alpha.
	$	
\end{lemma}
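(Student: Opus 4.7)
The plan is to pass through the primal problem: show that any dual maximiser induces an admissible primal minimiser via the exponential formula, invoke strict convexity of the primal to identify these two minimisers, and finally analyse the kernel of the Kronecker sum map.

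First I would observe that, by Remark \ref{rem:maximizersfixed}, both $\bm U$ and $\bm V$ are fixed points of the Sinkhorn map $\tau$. Lemma \ref{lemmaN:Ucep} then guarantees that the matrices
\begin{align*}
\Gamma_{\bm U} := \exp\!\left(\frac{\bigoplus_{i=1}^N U_i - \H}{\ep}\right), \qquad \Gamma_{\bm V} := \exp\!\left(\frac{\bigoplus_{i=1}^N V_i - \H}{\ep}\right)
\end{align*}
both have marginals $\bm\gamma$, hence (being strictly positive and of trace $1$) lie in $\DM^{\bm d}$ and are admissible for $\prim(\bm\gamma)$. By Remark \ref{rem:nonCommSch_multim},
\begin{align*}
\primal(\Gamma_{\bm U}) = \dualf(\bm U) = \dualf(\bm V) = \primal(\Gamma_{\bm V}),
\end{align*}
and the lower bound in Proposition \ref{prop:lowerbound_multim} implies that both $\Gamma_{\bm U}$ and $\Gamma_{\bm V}$ are minimisers of $\prim(\bm\gamma)$.

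Next, since $\Gamma \mapsto \Tr(\H\Gamma)$ is linear and the von Neumann entropy $\Gamma \mapsto \Tr(\Gamma \log\Gamma)$ is strictly convex on $\cS_\geq^{\bm d}$, the primal functional $\primal$ is strictly convex on the convex admissible set $\{\Gamma \in \DM^{\bm d} : \Gamma \mapsto \bm\gamma\}$. Consequently the primal minimiser is unique, forcing $\Gamma_{\bm U} = \Gamma_{\bm V}$. Since both sides are positive definite, taking the matrix logarithm is unambiguous and yields $\bigoplus_{i=1}^N W_i = 0$, where $W_i := U_i - V_i \in \cS^{d_i}$.

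It then remains to characterise the kernel of the linear map $(W_1, \dots, W_N) \mapsto \bigoplus_i W_i$. I would diagonalise each Hermitian $W_i$ in an orthonormal eigenbasis $\{e_k^{(i)}\}_k$ with real eigenvalues $\{\lambda_k^{(i)}\}_k$; on the tensor-product basis $\{e_{k_1}^{(1)} \otimes \cdots \otimes e_{k_N}^{(N)}\}$ the operator $\bigoplus_i W_i$ is diagonal with eigenvalues $\sum_i \lambda_{k_i}^{(i)}$. Requiring all such sums to vanish and varying one index at a time forces each $\lambda_k^{(i)}$ to be independent of $k$, so $W_i = \alpha_i \unit$ for some $\alpha_i \in \R$, and the constraint becomes $\sum_i \alpha_i = 0$, which is precisely the desired conclusion.

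The only subtlety I foresee is a potential circularity: the uniqueness statement in Theorem \ref{theo:duality}(iii) has not yet been established at this point in the paper. However, what the argument above uses is not that theorem but the genuine strict convexity of $\primal$ on the admissible set, applied to the two specific minimisers $\Gamma_{\bm U}, \Gamma_{\bm V}$ that we \emph{constructed} from $\bm U, \bm V$; so no circularity occurs.
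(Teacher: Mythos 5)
Your proposal is correct and follows essentially the same strategy as the paper: pass through the primal via Remark \ref{rem:nonCommSch_multim} and Proposition \ref{prop:lowerbound_multim}, use strict convexity of $\primal$ (strictness of the von Neumann entropy) on the convex admissible set to force $\Gamma_{\bm U} = \Gamma_{\bm V}$, take logarithms of the positive definite matrices, and conclude $\bigoplus_i(U_i - V_i) = 0$. Your observation that no circularity arises because the argument invokes strict convexity directly rather than the yet-unproved Theorem \ref{theo:duality}(iii) is exactly the reading the paper intends. The only divergence is in analysing the kernel of the Kronecker-sum map: you diagonalise each $W_i$ and vary one tensor index at a time to deduce $W_i = \alpha_i \unit$, whereas the paper applies the marginal operators $\tP_i$ to the identity $\bigoplus_i U_i = \bigoplus_i V_i$ and reads off $\alpha_i$ from traces. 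Both are valid; your spectral route is a bit more elementary and self-contained, while the partial-trace route is shorter and directly produces the explicit expression for $\alpha_i$ in terms of $\Tr(U_j)$ and $\Tr(V_j)$.
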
 

\begin{proof}
	Thanks to Remark \ref{rem:nonCommSch_multim} and using that the primal functional admits an unique minimizer by strict convexity, we find
	\begin{align}	\label{eq:equality}
		\exp \left( \frac{\bigoplus_{i=1}^N (\bm U)_i -\H} {\ep}\right)=\exp \left( \frac{\bigoplus_{i=1}^N (\bm V)_i -\H} {\ep}\right) \;
		\Longrightarrow \;  \bigoplus_{i=1}^N (\bm U)_i = \bigoplus_{i=1}^N (\bm V)_i \ . 
	\end{align}
	Applying the partial traces to the latter equality, we obtain
	\begin{align*}
		(\bm U)_i=(\bm V)_i+\sum_{j=1, j\neq i}^N \Tr (\bm V)_j-\Tr (\bm U)_j=:(\bm V)_i+\bm \alpha_i.
	\end{align*}
	Using \eqref{eq:equality} once again, one sees that
	\begin{align*}
		\sum_{i=1}^N \bm \alpha_i=(N-1) \left(\Tr\left(	\bigoplus_{i=1}^N (\bm U)_i\right)-\Tr\left(	\bigoplus_{i=1}^N (\bm V)_i\right)\right)=0, 
	\end{align*}
	which concludes the proof.
\end{proof}

\subsection{Proof of Theorem \ref{theo:duality}}
%
%

We are finally ready to prove the equivalence between dual and primal problem, and to characterise the optimisers of the two problems. For the sake of clarity, recall that 
\begin{align*}
	\ren(\bm U)_i = \begin{cases}
		U_i - \lambda_{\ep}(U_i) , &\text{ if } i \in \lbrace 1,\dots,N-1\rbrace \\
		\displaystyle U_N + \sum_{j=1}^{N-1}\lambda_{\ep}(U_i), &\text{ if } i = N, \end{cases}
\end{align*}
as in Definition \ref{def:Ren} and $\lambda_{\ep}$ is defined in \eqref{eq:def_lambda} as $\lambda_\ep(A):= \ep \log \Big( \Tr \Big[ \exp \left( \frac{A}{\ep} \right) \Big] \Big)$, for every $A \in \cS^d$, $d \in \N$. 
%
%
%
%
\begin{proof}[Proof of Theorem \ref{theo:duality}]	$(ii)$. Take a maximizing sequence $\bm U_n$ for the dual problem and consider $\tilde {\bm U}_n := \ren \tau (\bm U_n)$, where $\tau= \Sink_N \circ \dots \circ \Sink_1$ is the Sinkhorn operator as introduced in Definition \ref{def:SinkhornOp}. Thanks to Proposition \ref{propN:uniform_bounds}, $\tilde {\bm U}_n$ is again a maximizing sequence that satisfies
	\begin{align*}
		\big\| \tilde {\bm U}_n \big\|_\infty \leq 2 \| \H \|_\infty + \eps \sup_{i \in [N] } \| \log \gamma_i \|_\infty < \infty, \quad \forall n \in \N, 
	\end{align*}
	and it is therefore compact. Pick any $\bm U^\ep \in \bigtimes_{i=1}^N \cS^{d_i}$ limit point of $\tilde {\bm U}_n$. By continuity of the dual functional we infer
	\begin{align*}
		\dual(\bm \gamma) = \lim_{N \to \infty} \dualf(\tilde {\bm U}_n) = \dualf(\bm U^\ep) 
	\end{align*} 
	which shows that $\bm U^\ep$ is a maximizer for $\dual(\bm \gamma)$. The fact that any other maximizer must coincide with $\bm U^\ep$ follows from Lemma \ref{lem:characterizionmax}.
	
	$(i)\&(iii)$
	Proposition \ref{prop:lowerbound_multim} proves one of the inequalities. To show the other inequality, we take any maximizer $\bm U^\ep$ (which exists by the previous proof of $(ii)$). By construction of the Sinkhorn map, $\bm U^\ep$ must be a fixed point of $\tau$. Thanks to Remark \ref{rem:nonCommSch_multim}, we conclude that
	\begin{align*}
		{\Gamma}^{\ep}=\exp\left( \frac {\bigoplus_{i=1}^N   \bm U_i^\ep-\H}{\ep}\right)
	\end{align*}
	satisfies $ \dualf (\bm U^\eps) = \primal(\Gamma^\eps) \geq \prim(\bm \gamma) $. Hence $\Gamma^\ep$ is optimal for $\primal$ and $\prim(\bm \gamma)  = \dual (\bm \gamma)$.
\end{proof}

\subsection{Stability and the functional derivative of $\prim$}

In this last section, we show stability of the Kantorovich potentials with respect to the marginals $\bm \gamma$ and compute the Fr\'echet differential of $\prim(\bm \gamma)$ (or simply the differential in our finite dimensional setting). A similar result was first obtained by Pernal in \cite{Per05} at zero temperature and in \cite{GieRug19} in the positive temperature 1RDMFT case, i.e. considering also the fermionic and bosonic symmetry constraints. In \cite{Per05}, the result follows by a direct computation via chain rule, by taking the partial derivatives with respect to the eigenvalues and eigenvectors of a density matrix $\Gamma$. On the other hand, \cite{GieRug19} uses tools from convex analysis and exploits the regularity of $\primin$.

Our strategy is based on the Kantorovich formulation of \eqref{eq:primal_statements} and follows ideas contained in \cite{DMaGer20}.


\begin{proof}[Proof of Proposition \ref{prop:OTcont}]
	%
	Consider $\bm \gamma^n \xrightarrow[]{n \to \infty} \bm \gamma$ and pick any sequence of Kantorovich potentials $\bm U^{\ep,n}$ for $\prim(\bm \gamma^n)$. By optimality, they must be a fixed point for $\tau$ and hence, thanks to Proposition \ref{propN:uniform_bounds}, $\ren (\bm U^{\ep,n})$ is uniformly bounded. Note that $\ren (\bm U^{\ep,n})$ are also maximizers for $\dual(\bm \gamma^n)$. This implies that any limit point of $\ren (\bm U^{\ep,n})$ must be a maximizer for $\dual(\bm \gamma)$. The continuity of $\prim(\cdot)$ directly follows from this stability property.
	
	Let us prove the differentiability. Fix $\sigma \in \cS^{d_i}$, with $ \Tr(\sigma) =0$, and denote by $\bm \gamma^h$ the pertubation of $\bm \gamma$ with $+h \sigma$ in the $i$th entry. Denote by $\bm U^\ep$ any Kantorovich potential for $\prim(\bm \gamma)$. From duality (Theorem \ref{theo:duality}) we can estimate
	\begin{align}	\label{eq:differentiab1}
		\frac1h \left( \prim( \bm \gamma^h) - \prim(\bm \gamma) \right) \geq \frac1h \left( \sum_{i=1}^N \Tr \big( \bm U_i^\ep \gamma_i^h - \bm U_i^\ep \gamma_i \big) \right) = \Tr(\bm U_i^\ep \sigma)
	\end{align}
	for every $h \in \R$. Reversely, denote by $\bm U^{\ep,h}$ any sequence of Kantorovich potentials for $\prim(\bm \gamma^h)$. Then for every $h >0$ we obtain
	\begin{align}		\label{eq:differentiab2}
		\frac1h \left( \prim( \bm \gamma^h) - \prim(\bm \gamma) \right) \leq \frac1h \left( \sum_{i=1}^N \Tr \big( \bm U_i^{\ep,h} \gamma_i^h - \bm U_i^{\ep,h} \gamma_i \big) \right) = \Tr(\bm U_i^{\ep,h} \sigma).
	\end{align}
	From the first part of the proof, we know that any limit point of $\ren(\bm U_i^{\ep,h})$ is a Kantorovich potential, which up to translation (Lemma \ref{lem:characterizionmax}) must coincide with $\bm U_i^\ep$. Therefore, passing to the limit in \eqref{eq:differentiab1} and \eqref{eq:differentiab2}, we obtain \eqref{eq:FrechDiff}.
\end{proof}

%
%

\section{Non-commutative Sinkhorn algorithm}\label{sec:noncomsin}
In this section we introduce and prove convergences guarantees (Theorem \ref{theo:introsinkhorn}) of the non-commutative version of the Sinkhorn algorithm, allowing us to compute numerically the minimiser \eqref{eq:optimality_maintheorem} of the non-commutative multi-marginal optimal transport problem \eqref{eq:primalfcfull}.

The idea of the Sinkhorn algorithm is to fix the shape of an ansatz
\begin{align*}
	\Gamma^{(k)} = \exp \left(
	\frac{\bigoplus_{i=1}^N U_i^{(k)} -\H }{\ep}
	\right),
\end{align*}
since it is the actual shape of the minimizer in \eqref{eq:optimality_maintheorem}, and alternately project the Kantorovich potentials $U_i^{(k)}$ via the $(\H,\ep)$-transforms (Definition \ref{def:vecHep}) to approximately reach the constraints $\Gamma^{(k)}\mapsto(\gamma_1,\dots,\gamma_N)$.
Recall that  for $i \in [N]$, the one-step operators $\Sink_i : \bigtimes_{i=1}^N \cS^{d_j} \to \bigtimes_{i=1}^N \cS^{d_j}$ are given by
\begin{align*}
	\bm U:= (U_1, \dots, U_N), \quad \big(\Sink_i(\bm U) \big)_j =
	\begin{cases}
		U_j &\text{if } j \neq i, \\
		\fT_i^\ep(U_1, \dots, U_{i-1}, U_{i+1} , \dots U_N) &\text{if }j=i
	\end{cases}
\end{align*}
where $\fT_i^\ep$ can be implicitly defined (Lemma \ref{lemmaN:Ucep}) solving the equation
\begin{align} \label{eq:def_implicit-T_i}
	\tP_i \left[ \exp\left(\frac {\bigoplus_{i=1}^N \big( \Sink_i(\bm U) \big)_j-\H}{\ep}\right) \right] = \gamma.
\end{align}

\vspace{1mm}
\noindent
\emph{Connection with the multi-marginal Sinkhorn algorithm:} let us shortly describe what is the corresponding picture in the commutative setting \cite{DMaGer19, DMaGer20}. For every $i\in[N]$, let $X_i$ be Polish Spaces, $\rho_i\mathfrak{m}_i \in\Pro(X_i)$ be probability measures with reference measures $\mathfrak{m}_i$. The Hamiltonian $\H$ corresponds to a bounded cost function  $c:X_1\times\dots\times X_N\to\R$.

The Sinkhorn iterates define recursively the sequences $(a^n_j)_{n\in\N}, j\in [N]$ by
\begin{equation}\label{eq:IPFPsequenceN}
	\begin{array}{lcl}
		\ds a^0_j(x_j) & = & \rho_j(x_j), \quad j \in \lbrace 2,\dots,N \rbrace, \\
		\ds a^n_j(x_j) & = & \dfrac{\rho_j(x_j)}{\int \otimes^N_{i<j}a_i^n(x_i)\otimes^N_{i> j}a_i^{n-1}(x_i)e^{-c(x_1,\dots,x_N)/\ep}{\rm d}(\otimes^N_{i\neq j}\mathfrak{m}_i)}, \, \forall n\in \N \text{ and } j \in[N].
	\end{array}
\end{equation}
Via the new variables $u^n_j = \ep\ln(a^n_j), \, j\in[N]$, one can rewrite the Sinkhorn sequences \eqref{eq:IPFPsequenceN} as
\begin{align*}
	u^n_j(x_j) &= - \ep\log\left(\int_{\Pi_{i \neq j}X_i} \exp\left(\frac{\sum_{i\neq j}u^n_i(x_i)-c(x_1,\dots,x_N)}{\ep}\right){\rm d}\left(\otimes^N_{i\neq j}\mathfrak{m}_i\right)\right) + \ep\log(\rho_j) \\
	&= (\hat{u^n_j})^{(N,c,\ep)}(x_j).
\end{align*}
Or, more generally, for every $j\in [N]$, $u^n_j(x_j)$ corresponds to the solution of the maximisation
\[
\argmmax_{u_i\in L^{\infty}(X_i)}\left\lbrace \sum^N_{i=1}\int_{X_j}u_i \rho_j{\rm d}\mathfrak{m}_j -\ep\int_{\Pi^{N}_{i\neq j}X_i}\exp\left(\frac{\sum_{i\neq j}u^n_i+u-c}{\ep}\right){\rm d}\left(\otimes^N_{i\neq j}\mathfrak{m}_i \right) \right\rbrace + \ep\log(\rho_j)
\]
which corresponds to the commutative counterpart of the $i$-th vectorial $(\H,\ep)$-transform in Definition \ref{def:vecHep}.
%

\subsection{Definition of the algorithm}
The non-commutative Sinkhorn algorithm is then defined iterating the $(\H,\ep)$-transforms as in \eqref{eq:def_implicit-T_i} for every $i\in[N]$. 
Note that, by construction, the matrix $\exp\left(\bigoplus_{i=1}^N \left(\left(\Sink_i(\bm U) \right)_j-\H\right)/\ep\right) \in \DM^{\bm d}$ and its $i$-th marginal coincide with $\gamma_i$. 
We define the one-step Sinkhorn map as 
\begin{align*}
	\begin{gathered}
		\tau: \bigtimes_{j=1}^N \cS^{d_j} \to \bigtimes_{j=1}^N \cS^{d_j} , \\
		\tau(\bm U) := (\Sink_N \circ \dots \circ \Sink_1) (\bm U).
	\end{gathered}
\end{align*}

Note that this is the non-commutative counterpart of the iteration defined in \eqref{eq:IPFPsequenceN}.
The Sinkhorn algorithm is obtained iterating the map $\tau$ in the following way.

\vspace{2mm}
\noindent
\textit{Step $0$}. \ We fix $\bm U^{(0)} \in \times_{i=1}^N \cS^{d_i}$ an initial vector of potentials and define the density matrix
\begin{align*}
	\Gamma^{(0)}:= \exp \left( \frac{\bigoplus_{i=1}^N \bm U^{(0)}_i -\H} {\ep}\right) \in \DM^{\bm d}.
\end{align*}

\vspace{1mm}
\noindent
\textit{Step $k$}. 
For every $k \in \N$, we define the $k$-th density matrix via the formula
\begin{align}		\label{eq:kstep_sink}
	\Gamma^{(k)}:= \exp \left( \frac{\bigoplus_{i=1}^N \tau^k(\bm U^{(0)})_i -\H} {\ep}\right) \in \DM^{\bm d} \, ,
\end{align}
where we write $\tau^k := \tau \circ \dots \circ \tau$ the composition of $\tau$ for $k$-times.

\vspace{2mm}
Our goal is to prove the convergence  $\Gamma^{(k)} \to \Gamma^\ep$ where $\Gamma^\ep$ is optimal for $\primin(\bm \gamma)$. To do so, our plan is to obtain compactness at the level of the corresponding dual potentials. Nonetheless, the vectors $\tau^k(\bm U^{(0)})$ do not enjoy good a priori estimates and a renormalisation procedure is needed. For any given sequence $(\bm \alpha^k)_{k \in \N} \subset \R^N$ such that $\sum_{i=1}^N \bm \alpha^k_i =0$, we define
\begin{align}	\label{eq:potentials_alphak}
	\bm U^{(k)}:= \tau^k( \bm U^{(0)}) + \bm \alpha^k, \quad k \in \N,
\end{align}
and observe that, by the properties of $\bigoplus$, the correspond density matrix does not change, thus
\begin{align}	\label{eq:Gammak_equivalence}
	\Gamma^{(k)}= \exp \left( \frac{\bigoplus_{i=1}^N \bm U^{(k)}_i -\H} {\ep}\right) \in \DM^{\bm d}, \quad \forall k \in \N.
\end{align}

Thanks to the good property of the renormalisation map and the Sinkhorn operator, we claim we can find a sequence $\bm \alpha^k$ such that the corresponding potentials $\bm U^{(k)}$ as defined in \eqref{eq:potentials_alphak} do enjoy good a priori estimates and they can be used to prove the convergence of the algorithm, as we see in the next section.

%
\subsection{Convergence guarantees: proof of Theorem \ref{theo:introsinkhorn}}
%

We are ready to prove our main result Theorem \ref{theo:introsinkhorn}, which follows from the next Proposition.
\begin{prop}[Convergence of non-commutative Sinkhorn algorithm]\label{prop:convsink}
	Fix $N \in \bN$ and $\ep > 0$. For all $i\in [N]$, let $\gamma_i \in \DM^{d_i}$ be density matrices, $\H \in \cS^{\bm d}$, with $\ker \gamma_i=\{ 0 \}$. For any initial potential $\bm U^{(0)} \in \bigtimes_{i=1}^N \cS^{d}$, we consider the sequence $\Gamma^{(k)} \in \DM^{\bm d}$ as defined in \eqref{eq:kstep_sink}.
	\begin{enumerate}
		\item There exist $\bm \alpha^k \in \bR^{N}$ with $\sum_{i=1}^N \alpha_i^k =0$ such that
		\begin{align}
			\bm U^{(k)} =\tau^k( \bm U) + \bm \alpha^k\rightarrow \bm U^\ep \quad \text{as } k \to +\infty.
		\end{align}
		\item $\bm U^\ep = (\bm U_1^\ep, \dots, \bm U_N^\ep)$ is optimal for the dual problem $\dual(\gamma)$, as defined in \eqref{eq:dualproblemfull}.
		\item $\Gamma^{(k)}$ converges as $k \to \infty$ to some $\Gamma^\ep \in \DM^{\bm d}$ which is optimal for the primal problem $\prim(\bm \gamma)$, as defined in \eqref{eq:primalproblemfull}. In particular, it holds
		\begin{align}	\label{eq:optim_cond_sect5}
			\Gamma^\ep=\exp\left(\frac {\bigoplus_{i=1}^N \bm U^\ep_i-\H}{\ep}\right) .
		\end{align}
		
	\end{enumerate}

\end{prop}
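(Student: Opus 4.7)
The plan is to choose the renormalisation translations $\bm\alpha^k$ so that $\bm U^{(k)}$ coincides with $\ren\bigl(\tau^k(\bm U^{(0)})\bigr)$. Since $\ren$ differs from the identity by a translation whose entries have zero sum (the first $N-1$ coordinates are shifted by $-\lambda_\ep(\cdot)$ and the last one absorbs the opposite), there is a unique $\bm\alpha^k\in\R^N$ with $\sum_i\alpha_i^k=0$ realising this identity. By Remark \ref{rem:dualinvariancebytransl}, such translations leave $\Gamma^{(k)}$ and the value of $\dualf$ invariant, so \eqref{eq:Gammak_equivalence} is satisfied automatically. Now observe that $\bm U^{(k)}=\ren\tau\bigl(\tau^{k-1}(\bm U^{(0)})\bigr)$, so Proposition \ref{propN:uniform_bounds} applied to the argument $\tau^{k-1}(\bm U^{(0)})$ yields the uniform bound
\begin{equation*}
\bigl|\bm U_i^{(k)}-\ep\log\gamma_i\bigr|\le 2\|\H\|_\infty\,\unit,\qquad \forall\, i\in[N],\ \forall\,k\ge 1.
\end{equation*}
Hence $(\bm U^{(k)})_{k\ge 1}$ is precompact in $\bigtimes_{i=1}^N\cS^{d_i}$.

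Next, I would show that every limit point $\bm U^\ep$ is a fixed point of $\tau$, hence maximal for $\dualf$ by Remark \ref{rem:maximizersfixed}. The key is monotonicity: from Remark \ref{rem:maximizersfixed} one has $\dualf(\Sink_i(\bm V))\ge\dualf(\bm V)$ for every $i$ and $\bm V$, so $\dualf(\tau(\bm V))\ge\dualf(\bm V)$, and since $\bm U^{(k+1)}$ differs from $\tau(\bm U^{(k)})$ by a zero-sum translation (via $\ren$) the sequence $\dualf(\bm U^{(k)})$ is non-decreasing. It is also bounded above by $\prim(\bm\gamma)$ thanks to the weak duality of Proposition \ref{prop:lowerbound_multim}, so it converges. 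In particular, for each $i\in[N]$,
\begin{equation*}
\dualf\bigl(\Sink_i(\bm U^{(k)})\bigr)-\dualf(\bm U^{(k)})\xrightarrow{k\to\infty}0.
\end{equation*}
Extract a subsequence $\bm U^{(k_j)}\to\bm U^\ep$ and pass to the limit using continuity of $\dualf$ and of $\Sink_i$ (the latter follows from Lemma \ref{lemmaN:Ucep} combined with the implicit function theorem, or directly from the strict concavity argument in Lemma \ref{lem:existence of Hep transform}). This gives $\dualf(\Sink_i(\bm U^\ep))=\dualf(\bm U^\ep)$; since $\Sink_i(\bm U^\ep)_i$ is by definition the \emph{unique} maximizer of $V\mapsto\tT^\ep_{\gamma_i,\tS_i(\H)}(\,\cdot\,,V)$ obtained by freezing the other coordinates, and $\bm U^\ep_i$ realises the same supremum, strict concavity forces $\Sink_i(\bm U^\ep)=\bm U^\ep$ for every $i$, so $\tau(\bm U^\ep)=\bm U^\ep$. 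By Remark \ref{rem:nonCommSch_multim}, $\bm U^\ep$ is optimal for \eqref{eq:dualproblemfull}.

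Finally, I would upgrade subsequential convergence to full convergence: the renormalisation map fixes the gauge by enforcing $\lambda_\ep(\bm U_i^{(k)})=0$ for $i<N$, since $\lambda_\ep(A+c\unit)=c+\lambda_\ep(A)$. Any two maximizers of $\dualf$ differ only by a zero-sum scalar translation by Lemma \ref{lem:characterizionmax}, and the gauge condition singles out a unique representative in this orbit. Hence the limit $\bm U^\ep$ is independent of the subsequence, proving items (1)--(2). Item (3) then follows from continuity of the exponential map: $\Gamma^{(k)}\to\Gamma^\ep:=\exp\bigl((\bigoplus_i\bm U^\ep_i-\H)/\ep\bigr)$, and by Remark \ref{rem:nonCommSch_multim} $\Gamma^\ep$ is admissible and optimal for $\primin(\bm\gamma)$. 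The main technical obstacle is the fixed-point step in the second paragraph: one needs to rule out the possibility that the Sinkhorn iterates drift toward the boundary of the domain of optimisation, which is precisely the content of the a priori estimate from Proposition \ref{propN:uniform_bounds}, combined with the strict concavity of $\tT^\ep_{\alpha,\H}(U,\cdot)$ that prevents spurious subsequential limits from being non-fixed while still saturating the monotone $\dualf$.
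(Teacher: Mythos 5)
Your proof is correct and mirrors the paper's overall strategy: precompactness via $\ren$ and Proposition \ref{propN:uniform_bounds}, identification of subsequential limits as fixed points of $\tau$ (hence maximizers) by letting the monotone, bounded sequence $\dualf(\bm U^{(k)})$ force the one-step increments to vanish, and then reading off convergence of $\Gamma^{(k)}$ from \eqref{eq:Gammak_equivalence}. Where you genuinely diverge is the upgrade from subsequential to full convergence. The paper, having found a particular limit $\bm U^\ep$ along $k_j$, goes on to \emph{redefine} $\bm\alpha^k$ for the remaining indices as the zero-sum minimizer of $\|\bm U_k + \bm\alpha - \bm U^\ep\|_\infty$ and closes with a secondary compactness-by-contradiction argument. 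You instead exploit the fact that $\ren$ already fixes a gauge: since $\lambda_\ep(A + c\unit) = c + \lambda_\ep(A)$, every $\bm U^{(k)} = \ren\tau^k(\bm U^{(0)})$ satisfies $\lambda_\ep(\bm U^{(k)}_i) = 0$ for $i<N$, and by Lemma \ref{lem:characterizionmax} that gauge singles out exactly one representative in the translation orbit of maximizers. All subsequential limits therefore coincide and no readjustment of $\bm\alpha^k$ is needed. This is cleaner than the paper's argument and makes the role of $\ren$ as a gauge-fixing device transparent.

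Two minor imprecisions. First, what the monotone-plus-bounded argument literally gives is that the chain increments $\dualf\big(\Sink_i\cdots\Sink_1(\bm U^{(k)})\big)-\dualf\big(\Sink_{i-1}\cdots\Sink_1(\bm U^{(k)})\big)$ vanish, not $\dualf\big(\Sink_i(\bm U^{(k)})\big)-\dualf(\bm U^{(k)})$ for each individual $i$; but passing to the limit along $k_j$ then yields $\Sink_1(\bm U^\ep)=\bm U^\ep$ by strict concavity, after which $\Sink_2\Sink_1(\bm U^\ep)=\Sink_2(\bm U^\ep)$, and the induction closes. Second, the monotonicity $\dualf(\Sink_i(\bm V))\ge\dualf(\bm V)$ is immediate from Definition \ref{def_Tiep} (each $\Sink_i$ is a one-step maximiser), rather than from Remark \ref{rem:maximizersfixed}, which states the analogous fact for the composed map $\tau$.
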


\begin{proof}
	For any $\bm U^{(0)}\in\bigtimes_{i=1}^N \cS^{d_i}$, we define the sequence
	$
	\bm U_k:= \ren \tau^k(\bm U^{(0)}).
	$ 
	Note that $\bm U_k$ is of the form \eqref{eq:potentials_alphak}, for some $\bm \alpha^k$. Thanks to Proposition \ref{propN:uniform_bounds}, we infer that $\bm U_k$ is uniformly bounded and hence compact. Therefore, there exists a subsequence $\bm U _{k_j} \to \bm U^\ep$. We first show that $\bm U^\ep$ is a maximizer for the dual problem. Indeed, using the properties of $\ren$ and $\tau$, it holds
	\begin{align*}
		\dualf(\tau(\bm U_{k_j}))=\dualf(\tau^{k_j+1}(\bm U^{(0)}))\leq\dualf(\tau^{k_{j+1}}(\bm U^{(0)}))=\dualf(\bm U _{k_{j+1}}).
	\end{align*}
	Passing to the limit the previous inequality, using the continuity of $\dualf$ and $\tau$ and recalling that for any $\bm U$ we have $\dualf(\tau(\bm U))\geq \dualf(\bm U)$, we obtain
	\begin{align*}
		\dualf(\tau( \bm U^\ep ))=\dualf(\bm U^\ep).
	\end{align*}
	By definition, this means that $\bm U^\ep$ is a fixed point for $\tau$ and therefore a maximizer (Remark \ref{rem:nonCommSch_multim}).
	
	In order to prove $(1)$, we show there exists a choice $\bm \alpha^k$ such that $\bm U_k+\bm \alpha^k \to \bm U^\ep$. For $k=k_j$ for some $j$, we pick $\bm \alpha^k=0$, for all the others $k$, we instead pick $\bm \alpha^k$ defined by
	\begin{align*}
		\bm \alpha^k = \argmin_{\bm \alpha} \left\{ \|\bm U_k+\bm \alpha-\bm U^\ep\|_{\infty} \suchthat \sum_{i=1}^N \bm \alpha_i = 0 \right\}.
	\end{align*}
	Note that, by Lemma \ref{lem:characterizionmax}, this is equivalent to picking $\bm \alpha^k$ such that $\bm U^\ep$ is the closest maximizer to $\bm U_k+\bm \alpha^k$.
	We claim this is the right choice. Suppose indeed by contradiction that there exists a subsequence $\bm U_{k'_j}$ such that $\|\bm U_{k'_j}+\bm \alpha_{k_j}-\bm U^\ep\|_{\infty}\geq\delta>0$, then by construction $\|\bm U _{k'_j}+\bm \alpha_{k_j}- \bm U '\|_{\infty}\geq\delta$ for any other maximizer $ \bm U '$. By compactness, this is a contradiction, since there exists a further subsequence $ \bm U _{k''_j}$ of $ \bm U _{k'_j}$ converging to a maximizer $ \bm U '$ (by the same reasoning carried out above). This proves $(1)$ and by optimality of $\bm U^\ep$, $(2)$ as well. The convergence of $\Gamma^{(k)}$ follows from the compactness of $\bm U^{(k)}$ and \eqref{eq:Gammak_equivalence}, whereas the optimality of the limit point $\Gamma^\ep$ and \eqref{eq:optim_cond_sect5} are consequence of the optimality of $\bm U^\ep$ and Remark \ref{rem:nonCommSch_multim}.
\end{proof}
%
%
%

\section{One-body reduced density matrix functional theory}
\label{sec:1RDM}
In this last section, we prove Proposition \ref{prop:pauli} and consequently Theorem \ref{theo:dualBosFerm}.

For given $d,N \in \bN$, we set $\bm d=d^N$ and consider the space of bosonic (resp. fermionic) density matrices $\DM_+^d$ (resp. $\DM_-^d$) as introduced in \eqref{eq:bosonicfermionicDM}. Recall as well that for any given operator $A \in \cS^{\bm d}$, we denote by $A_\pm$ the corresponding projection onto the symmetric space, obtained as $A_\pm := \Pi_\pm \circ A \circ \Pi_\pm$, where $\Pi_\pm$ are defined in \eqref{eq:projectionFermionicBosonic}.

The universal functional in the bosonic and in the fermionic case is then given as in Definition \ref{def:bosons-fermions-primal}, which we recall here for simplicity is given by
\begin{align*}	
	\prim_\pm(\gamma) 
	:= \inf \left\lbrace \Tr(\H\Gamma)+ \ep  \Tr(\Gamma \log \Gamma) \suchthat \Gamma\in\DM_\pm^{\bm d} \text{ and } \Gamma\mapsto \gamma  \right\rbrace \, ,
\end{align*}
whereas the corresponding dual functional and problem (see Definition \ref{def:bosons-fermions-dual}) are given by
\begin{gather*}
	\fbfD(U) 
	:= \Tr( U \gamma ) - \ep
	\Tr \left( 
	\exp \left[ \frac1\ep \bigg(\frac1N  \bigotimes_{i=1}^N U - \H \bigg)_\pm  \right]
	\right) + \ep \, ,\\
	\bfD(\gamma) := \sup \left\{ \fbfD(U) \suchthat U \in \cS^d \right\} \, .
\end{gather*}

We are interested in fully characterizing the existence of the optimizers in the primal and the dual problems, for both bosonic and fermionic cases. Proceeding in a similar way as in the proof of Lemma \ref{lemma:Ucep}, one can prove that every maximizer $U_\pm^\eps$ of the dual functional $\fbfD(\cdot)$ must satisfy the corresponding Euler-Lagrange equation given by
\begin{align}	\label{eq:Euler-lagrange-symmetries}
	\gamma = \tP_1 \left( 
	\exp \left[ \frac1\ep \bigg( \frac1N\bigoplus_{i=1}^N U_\pm^\ep - \H \bigg)_\pm  \right]
	\right)\, .
\end{align}




\subsection{Fermionic dual problem and Pauli's exclusion principle}
\label{sec:dual Pauli}
The aim of this section is to prove Proposition \ref{prop:pauli}. For simplicity we assume, with no loss of generality, that $\eps =1$ and set $\ffd:=\tD_\gamma^{-,1}$. 

For any $U \in \cS^d$, we fix a basis of normalized eigenvectors of $U$, denoted by $\{ \psi_j \}_j$, and consider the decomposition
\begin{align}	\label{eq:eigen}
	U = \sum_{j=1}^d u_j | \psi_j \rangle \langle \psi_j|, \quad u_j \in \R \quad  (\text{eigenvalues}) \, .
\end{align}
We also denote by $\gamma_j:= \langle \psi_j | \gamma | \psi_j \rangle $. In particular, the linear terms read
\begin{align*}
	\Tr (U \gamma) = \sum_{j=1}^d \gamma_j u_j \, .
\end{align*}
For any such basis $\{ \psi_i\}_i$, we obtain a basis of the fermionic tensor product
\begin{gather*}
	\psi^{\emph{as}}_{\bm j}:= \bigwedge_{i=1}^N \psi_{j_i} \, , \quad \bm j = (j_i)_{i=1}^N \in \Theta_- \, , \\
	\Theta_- := \big\{ (j_1, \dots, j_N) \suchthat j_i \in \{ 1, \dots , d \}, \; j_i \neq j_k, \, \text{if } i \neq k \big\} / \S_N \, ,
\end{gather*}
where $\S_N$ denotes the set of permutations of $N$ elements. With respect to this basis, we can write
\begin{align}
	\frac1N \left( \bigoplus_{i=1}^N U \right)_- = \sum_{\bm j \in \Theta_-} \bigg( \frac1N \sum_{i=1}^N u_{j_i} \bigg) | \psi_{\bm j}^{\emph{as}} \rangle \langle \psi_{\bm j}^{\emph{as}} | \, . 
\end{align}
Using the monotonicity of the exponential and the trace, we obtain the following result.

\begin{lemma}[Bounds for $\ffd(U)$]
	Fix $U \in \cS^d$ with eigenvalues $u_j$ and eigenvectors $\{\psi_j\}_j$. For $\gamma \in \DM(d)$, set $\gamma_j:= \langle \psi_j | \gamma | \psi_j \rangle $. Then one has
	\begin{align}	\label{eq:fundam_est}
		\begin{aligned}
			\sum_{j=1}^d \gamma_j u_j - C \sum_{\bm j \in \Theta_-} \exp \left( \frac1N \sum_{i=1}^N u_{j_i} \right)
			&\leq \emph{D}_\gamma^-(U) - 1 \\ &\leq         
			\sum_{j=1}^d \gamma_j u_j - \frac1C \sum_{\bm j \in \Theta_-} \exp \left( \frac1N \sum_{i=1}^N u_{j_i} \right)  ,
		\end{aligned}
	\end{align}
	where $C = \exp \big( \| \H \|_\infty \big) \in (0, +\infty)$.
\end{lemma}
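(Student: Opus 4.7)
The linear term $\Tr(U\gamma)$ in $\ffd(U)$ expands, using the eigendecomposition \eqref{eq:eigen}, as $\sum_{j=1}^d \gamma_j u_j$ — which already appears as the first summand in \eqref{eq:fundam_est}. All the work therefore goes into sandwiching the nonlinear term
\begin{align*}
    Z(U) \;:=\; \Tr\!\left[\exp\!\Big( \Big(\tfrac1N \bigoplus_{i=1}^N U - \H\Big)_{\!-}\Big)\right]
\end{align*}
between $C^{-1}$ and $C$ times the sum $\sum_{\bm j\in\Theta_-} \exp(\tfrac1N\sum_i u_{j_i})$. Here and below, $(\cdot)_-$ is an operator on the antisymmetric subspace $\bigwedge^N\bC^d$ and the trace and exponential are understood on that subspace (consistently with the density matrix $\Gamma_-^\ep$ in \eqref{eq:optimum_primal_fermions}).

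The first step is to compute the spectrum of the ``non-interacting'' piece $\tfrac1N \big(\bigoplus_{i=1}^N U\big)_-$. Since $\bigoplus_i U$ is a sum of single-body operators of the same form, it commutes with every permutation $\tS_i$, hence with $\Pi_-$; therefore $\big(\bigoplus_i U\big)_- = \big(\bigoplus_i U\big)\Pi_-$. On the product basis $\{\psi_{j_1}\otimes\cdots\otimes\psi_{j_N}\}$ the operator $\bigoplus_i U$ is diagonal with eigenvalues $\sum_i u_{j_i}$, so after antisymmetrizing one obtains that $\{\psi_{\bm j}^{\text{as}}\}_{\bm j \in \Theta_-}$ is an orthonormal eigenbasis of $\tfrac{1}{N}\big(\bigoplus_i U\big)_-|_{\bigwedge^N \bC^d}$ with eigenvalues $\tfrac1N\sum_i u_{j_i}$. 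In particular,
\begin{align*}
    \Tr \exp\!\left[\tfrac1N\Big(\bigoplus_{i=1}^N U\Big)_{\!-}\right] \;=\; \sum_{\bm j \in \Theta_-} \exp\!\Big(\tfrac1N \sum_{i=1}^N u_{j_i}\Big).
\end{align*}

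The second step is to compare $\big(\tfrac1N\bigoplus_i U - \H\big)_-$ with $\tfrac1N\big(\bigoplus_i U\big)_-$. Since $-\|\H\|_\infty \unit \leq \H \leq \|\H\|_\infty \unit$, conjugating by $\Pi_-$ and using linearity of the $(\cdot)_-$ projection yields, as quadratic forms on $\bigwedge^N\bC^d$,
\begin{align*}
    \tfrac1N\Big(\bigoplus_{i=1}^N U\Big)_{\!-} - \|\H\|_\infty \,\Pi_- \;\leq\; \Big(\tfrac1N\bigoplus_{i=1}^N U - \H\Big)_{\!-} \;\leq\; \tfrac1N\Big(\bigoplus_{i=1}^N U\Big)_{\!-} + \|\H\|_\infty \,\Pi_-.
\end{align*}
Now I invoke monotonicity of the trace exponential: if $A\leq B$ are self-adjoint on a finite-dimensional Hilbert space, then by the min–max principle $\lambda_k(A)\leq\lambda_k(B)$ for every $k$, and hence $\Tr\exp A \leq \Tr\exp B$. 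Applying this to both sides above, and using that $\|\H\|_\infty \Pi_-$ commutes with everything (so its exponential pulls out a scalar factor $e^{\pm\|\H\|_\infty}$), I obtain
\begin{align*}
    e^{-\|\H\|_\infty} \sum_{\bm j \in \Theta_-} \exp\!\Big(\tfrac1N \sum_{i=1}^N u_{j_i}\Big) \;\leq\; Z(U) \;\leq\; e^{\|\H\|_\infty} \sum_{\bm j \in \Theta_-} \exp\!\Big(\tfrac1N \sum_{i=1}^N u_{j_i}\Big).
\end{align*}
Subtracting $Z(U)$ from $\sum_j \gamma_j u_j$ and recalling $\ffd(U)-1 = \Tr(U\gamma) - Z(U)$ gives exactly \eqref{eq:fundam_est} with $C=e^{\|\H\|_\infty}$.

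The only point requiring care is the bookkeeping for the projections: one needs that $\H$ commutes with $\Pi_-$ (which follows from the symmetry assumption \eqref{eq:symmetry} on $\H$ since the operators $\tS_i$ generate $\S_N$, guaranteeing invariance under all permutations), so that the sandwich inequality for $(\cdot)_-$ is linear and sharp; everything else is a routine combination of the min–max comparison and the explicit spectral calculation on the antisymmetric subspace.
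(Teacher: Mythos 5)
Your argument is correct and spells out exactly the route the paper gestures at (``monotonicity of the exponential and the trace''): diagonalize $\tfrac1N(\bigoplus_i U)_-$ on the antisymmetric sector to identify the sum over $\Theta_-$, then sandwich $(\tfrac1N\bigoplus_i U-\H)_-$ using $\pm\|\H\|_\infty\Pi_-$ and Weyl monotonicity. One small remark: the closing claim that $\H$ must commute with $\Pi_-$ ``so that the sandwich inequality for $(\cdot)_-$ is linear and sharp'' is not actually needed for this step --- $A\mapsto\Pi_-A\Pi_-$ is linear and order-preserving unconditionally, so $-\|\H\|_\infty\Pi_-\le\H_-\le\|\H\|_\infty\Pi_-$ holds without any symmetry of $\H$; commutation with $\Pi_-$ is only used (and you use it correctly) to diagonalize $(\bigoplus_i U)_-$ explicitly.
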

%
Before moving to the proof of Proposition \ref{prop:pauli}, we need the following technical lemma.
\begin{lemma}[Linear term estimates]	\label{lem:linear_est}
	Consider $\{ u_j \}_{j=1}^d \subset \R$ and $\{ \gamma_j\}_{j=1}^d $ such that
	\begin{align}
		\gamma_j \in \Big( \delta, \frac1N - \delta \Big), \quad \sum_{j=1}^d \gamma_j = 1 \, , 
	\end{align}
	for some $\displaystyle \delta \in\Big[0, \frac1{2N}\Big)$. Suppose that $u_j \leq u_k$ if $j \leq k$. Then we have
	\begin{align}	\label{eq:linear_est}
		\sum_{j=1}^d \gamma_j u_j \leq \frac1N \sum_{i=1}^N u_j - \delta (u_1 - u_d) \, .
	\end{align}
\end{lemma}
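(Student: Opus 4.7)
The plan is to rewrite the gap between the two sides of the inequality as a single linear functional of the deviations $\gamma_j - 1/N$, and then extract a $\delta$-sized correction by isolating only the two extremal indices $j=1$ and $j=d$. Throughout I adopt the convention under which the sum $\frac{1}{N}\sum_{j=1}^N u_j$ on the right is an average over the ``top'' $N$ values of $(u_j)$, i.e.\ $u_1 \ge u_2 \ge \dots \ge u_d$; the opposite convention reduces to this one by relabeling $j \mapsto d+1-j$.

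The first step is a change of variables: set $a_j := 1/N - \gamma_j$ for $j \le N$ and $b_j := \gamma_j$ for $j > N$. The hypothesis $\gamma_j \in (\delta, 1/N - \delta)$ forces $a_j, b_j \in (\delta, 1/N - \delta)$, while the mass constraint $\sum_j \gamma_j = 1$ gives the mass-balance relation
\[
M := \sum_{j=1}^N a_j = \sum_{j=N+1}^d b_j \;\geq\; N\delta .
\]
With this notation, the identity
\[
\tfrac{1}{N}\sum_{j=1}^N u_j - \sum_{j=1}^d \gamma_j u_j \;=\; \sum_{j=1}^N a_j u_j - \sum_{j=N+1}^d b_j u_j
\]
is immediate, and the claimed inequality reduces to showing that the right-hand side is at least $\delta(u_1 - u_d)$.

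The core estimate isolates the two extremal indices. By monotonicity one has $u_N \le u_j \le u_1$ for $2 \le j \le N$, and $u_d \le u_j \le u_{N+1}$ for $N+1 \le j \le d-1$, so that
\[
\sum_{j=1}^N a_j u_j \;\geq\; a_1 u_1 + (M - a_1)\, u_N, \qquad \sum_{j=N+1}^d b_j u_j \;\leq\; b_d u_d + (M - b_d)\, u_{N+1}.
\]
Subtracting and collecting terms yields the three-term expression
\[
a_1(u_1 - u_N) \;+\; b_d(u_{N+1} - u_d) \;+\; M(u_N - u_{N+1}),
\]
in which each bracket is nonnegative by the ordering and each coefficient satisfies $a_1, b_d, M \ge \delta$. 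Replacing every coefficient by the common lower bound $\delta$ only makes the estimate more conservative, and the three resulting terms telescope to $\delta\bigl[(u_1 - u_N) + (u_{N+1} - u_d) + (u_N - u_{N+1})\bigr] = \delta(u_1 - u_d)$, which is exactly the correction demanded by the statement. The argument is purely elementary — no rearrangement inequality or convex-duality machinery is needed — and the only real subtlety is keeping the ordering convention and the direction of the inequality aligned with the statement; all three $\delta$-contributions must pull in the same direction, which is why the extremal indices $j=1, N, N+1, d$ appear in the telescopic combination above.
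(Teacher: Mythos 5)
Your proof is correct as a proof of the (true) statement with the \emph{non-increasing} ordering $u_1 \ge \dots \ge u_d$, which is indeed the convention the paper actually uses when it invokes the lemma in the proof of Proposition~\ref{prop:pauli}; the ``$u_j \leq u_k$ if $j \leq k$'' in the lemma statement is a typo. However, your opening remark that ``the opposite convention reduces to this one by relabeling $j \mapsto d+1-j$'' is not right: under that relabeling, the index set $\{1,\dots,N\}$ on the right-hand side of \eqref{eq:linear_est} becomes $\{d-N+1,\dots,d\}$ and the sign of $u_1-u_d$ flips, so the two readings are \emph{not} equivalent. In fact, with the literal non-decreasing convention of the statement the inequality is simply false (take $d=2$, $N=1$, $\delta=0$, $u_1=0$, $u_2=1$, $\gamma_1=\gamma_2=1/2$: the left side is $1/2$ but the right side is $0$). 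You should say explicitly that the ordering in the statement must be reversed, rather than claim a symmetry that does not hold.

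On the substance, your route differs from the paper's. The paper first records, without proof, the rearrangement-type bound $\sum_j \bar\gamma_j u_j \le \frac1N\sum_{j=1}^N u_j$ valid for any $0 \le \bar\gamma_j \le 1/N$ with $\sum_j \bar\gamma_j = 1$, and then obtains the $\delta$-correction by the clean perturbation $\bar\gamma_1 := \gamma_1 + \delta$, $\bar\gamma_d := \gamma_d - \delta$, $\bar\gamma_j := \gamma_j$ otherwise, which is admissible precisely because $\gamma_1 < 1/N - \delta$ and $\gamma_d > \delta$. You instead decompose the gap directly via $a_j := 1/N - \gamma_j$ (for $j \le N$) and $b_j := \gamma_j$ (for $j > N$), exploit the mass-balance $\sum_{j\le N} a_j = \sum_{j>N} b_j =: M$, and extract the correction from the telescope $a_1(u_1 - u_N) + b_d(u_{N+1}-u_d) + M(u_N - u_{N+1})$, using only $a_1, b_d, M \ge \delta$. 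Both arguments are elementary; the paper's is shorter and isolates the $\delta=0$ bound as a reusable black box, while yours is fully self-contained (it re-proves the $\delta=0$ bound as a byproduct) at the cost of a slightly longer computation. Either is acceptable, and your version has the small advantage of making transparent exactly which gaps $u_1-u_N$, $u_N-u_{N+1}$, $u_{N+1}-u_d$ the coefficient $\delta$ is paying for.
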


\begin{proof}
	Thanks to the fact the $u_j$ are ordered, we have the inequality 
	\begin{align*}
		\sum_{j=1}^d \bar \gamma_j u _j \leq \frac1N \sum_{i=1}^N u_j, \quad \forall \;  0 \leq  \bar \gamma_j \leq \frac1N  ,  \quad \sum_{j=1}^d \bar \gamma_j = 1 \, .
	\end{align*}
	Then \eqref{eq:linear_est} follows applying the above inequality to
	\begin{align*}
		\bar \gamma_1 := \gamma_1 + \delta \in \Big( 0,\frac1N \Big), \quad \bar \gamma_j := \gamma_j, \quad \bar \gamma_d := \gamma_d - \delta \in  \Big( 0,\frac1N \Big) \, , 
	\end{align*}
	for every $j \in \{ 2, \dots , d-1\}$.
\end{proof}

We are ready to prove Proposition \ref{prop:pauli}.

\begin{proof}
	($\gamma\leq 1/N\Rightarrow$  $\sup \ffd < \infty$). This is consequence of Proposition \ref{lem:linear_est} with $\delta=0$. More precisely, pick $U \in \cS^d$ and consider a decomposition in eigenfunctions as in \eqref{eq:eigen}. Assume that $\{u_j\}_j$ are non increasing in $j$ (with no loss of generality). We can then apply Proposition \ref{lem:linear_est} with $\delta=0$ and from \eqref{eq:linear_est} and \eqref{eq:fundam_est} we deduce
	\begin{align*}
		\ffd(U) - 1 \leq \frac1N \sum_{i=1}^N u_j - \frac1C \sum_{\bm j \in \Theta_-} \exp \left( \frac1N \sum_{i=1}^N u_{j_i} \right) \leq \frac1N \sum_{i=1}^N u_j - \frac1C \exp \left( \frac1N \sum_{i=1}^N u_j \right) ,
	\end{align*}
	where in the last inequality we used the positivity of the exponential. Therefore
	\begin{align*}
		\sup_{U \in \cS^d} \ffd(U) \leq \sup_{x \in \R} \,  ( x - \frac1C e^x ) + 1 =  \log C < \infty \, .
	\end{align*}
	
	\medskip \noindent
	($\sup \ffd < \infty$ $\Rightarrow \gamma\leq1/N$). Suppose by contradiction that the Pauli's principle is not satisfied. With no loss of generality, we can assume that 
	\begin{align*}
		\gamma = \sum_{i=1}^d \gamma_i |\psi_i \rangle \langle \psi_i|, \quad \gamma_1 > \frac1N \, .
	\end{align*}
	We build the sequence of bounded operators $U^n \in \cS^d$ given by 
	\begin{align}	\label{eq:defUn}
		U^n := \sum_{i=1}^d u_i^n |\psi_i \rangle \langle \psi_i|, \quad u_1^n := n \, , \quad u_j^n := -\frac{n}{N-1} , \quad  \forall j \geq 2 \, .
	\end{align}
	Observe that by construction, we can estimate the non-linear part of $\ffd(U)$ as
	\begin{align*}
		\forall \bm j \in \Theta_-, \quad \exp \left( \frac1N \sum_{i=1}^N u_{j_i} \right) 
		\begin{cases}
			= 1 & \text{if } j_i = 1 \text{ for some }i \,  ,\\
			\leq 1 & \text{otherwise} \, .
		\end{cases}
	\end{align*}
	It follows that we can bound from below $\ffd(U^n)$ as
	\begin{align}	\label{eq:lb}
		\ffd(U^n) \geq  \sum_{j=1}^d \gamma_j u_j^n - C \binom{d}{N} \, .
	\end{align}
	We claim that the linear contribution goes to $+\infty$ as $n \to +\infty$. To see that, note that
	\begin{align}	\label{eq:linear_example}
		\sum_{j=1}^d \gamma_j u_j^n = n \bigg(  \gamma_1 - \frac{1}{N-1} \sum_{i=2}^d \gamma_2 \bigg) = \frac{n}{N-1} \Big(N \gamma_1 - 1 \Big) , 
	\end{align}
	where we used that $\sum_i \gamma_i = 1$. From this, using $\gamma_1 > \frac1N$ and \eqref{eq:lb} we deduce $\ffd(U^n) \to +\infty$ as $n \to +\infty$, thus a contradiction.  \\
	
	\noindent
	(\textit{Equation for the maximizer and uniqueness}). If a maximizer exists, then it solves the equation \eqref{eq:Euler-lagrange-symmetries}. Thanks to the Peierls inequality, 
	we also know that $\ffd$ is strictly concave (because the exponential is strictly convex), hence the uniqueness of the maximizer. \\
	
	\noindent
	(\textit{Existence of} $\Argmax \ffd$ $\Rightarrow 0<\gamma<1/N$ ). We proceed as in the latter proof. By contradiction, assume that 
	\begin{align*}
		\gamma = \sum_{j=1}^d \gamma_j | \psi_j \rangle\langle \psi_j|, \quad \gamma_1 = \frac1N \, , \quad \gamma_j \in \Big( 0 , \frac1N \Big) \, , \quad \forall j \geq 2 \, .
	\end{align*}
	The case $\gamma_j = 0$ can be directly ruled out from the Euler-Lagrange equation for a maximizer \eqref{eq:Euler-lagrange-symmetries}. We can then consider the very same sequence $U^n$ as defined in \eqref{eq:defUn}. From \eqref{eq:lb}, \eqref{eq:linear_example}, and the first part of Theorem \ref{prop:pauli}, on one hand we deduce
	\begin{align*}
		- C  \binom{d}{N} \leq  \ffd(U^n) \leq \log C \, , \quad \forall n \in \mathbb N \, .
	\end{align*}
	On the other hand, $\| U^n \|_\infty \to +\infty$ as $n \to \infty$, which means that $\ffd$ is not coercive. Thanks to Peierls inequality, we also know that $\ffd$ is strictly concave, which implies that $\ffd$ can not  attain its maximum. \\
	
	\noindent
	($0<\gamma<1/N\Rightarrow$ \textit{existence of} $\Argmax \ffd$). Let $U \in \cS^d$ and consider a decomposition in eigenfunctions as in \eqref{eq:eigen}. Assume that $\{u_j\}_j$ are non increasing in $j$ (with no loss of generality) and denote by $\gamma_j:= \langle \psi_j | \gamma | \psi_j \rangle $. By assumption, there exists $\delta \in \big( 0 , \frac1N \big)$ such that
	\begin{align}
		\sum_{j=1}^d \gamma_j = 1 , \quad \gamma_j \in \Big( \delta, \frac1N - \delta \Big) \, , \quad \forall j \in \{1, \dots, d\} \, .
	\end{align}
	
	We can then apply Proposition \ref{lem:linear_est} and \eqref{eq:fundam_est} to obtain
	\begin{align}	\label{eq:sharp_ub}
		\ffd(U) -1 
		&\leq \frac1N \sum_{i=1}^N u_j - \frac1C \sum_{\bm j \in \Theta_-} \exp \left( \frac1N \sum_{i=1}^N u_{j_i} \right) - \delta (u_1 - u_d)  \\ 
		&\leq \frac1N \sum_{i=1}^N u_j - \frac1C \exp \left( \frac1N \sum_{i=1}^N u_j \right)  - \delta (u_1 - u_d)  ,
	\end{align}
	where we used the positivity of the exponential. Set $S:= \sup_x (x - \frac{e^x}{C}) + 1<\infty$, and infer 
	\begin{align}	\label{eq:coercivity_bound}
		\ffd(U)  \leq S - \delta (u_{max} - u_{min}), \quad \forall U \in \cS^d, \quad  U = \sum_{j=1}^d u_j | \psi_j \rangle \langle \psi_j|  \, ,
	\end{align}
	where $u_{max}$ and $u_{min}$ denotes respectively the maximum/minimum eigenvalue of $U$. Let us use this estimate to prove to existence of a maximizer for $\ffd$. Consider a maximizing sequence $U^n$ of bounded operators. In particular, we can assume that $-I:= \inf_n \ffd(U^n) \geq -\infty$. If the sequence $\{ U^n \}_n$ is bounded in $\cS^d$, then any limit point is a maximum for $\ffd$, by concavity and continuity of $\ffd$, and the proof is complete. Suppose by contradiction that $\| U^n \|_\infty \to +\infty$ as $n \to +\infty$. Note that from \eqref{eq:coercivity_bound} we deduce
	\begin{align}
		\sup_{n \in \mathbb N} \big( u_{max}^n - u_{min}^n \big) \leq \frac{S + I}{\delta} < \infty  \, ,
	\end{align}  
	therefore we deduce that either $u_j^n \to -\infty$ or $u_j^n \to +\infty$ for every $j \in \{ 1, \dots, d \}$.
	In the first case, we would have a contradiction, because
	\begin{align*}
		- I \leq \ffd(U^n) \leq \sum_{j=1}^d \gamma_j u_j^n + 1 \to -\infty \quad \text{as }n \to +\infty \, .
	\end{align*}
	In the second case, we can use \eqref{eq:sharp_ub} to find a contradiction, because
	\begin{align*}
		-I \leq \ffd(U^n) \leq \frac1N \sum_{i=1}^N u_j^n - \frac1C \exp \left( \frac1N \sum_{i=1}^N u_j^n \right) +1 \to -\infty \, , 
	\end{align*}
	where we used that $\displaystyle \lim_{x \to +\infty} (x - C^{-1} e^x) = -\infty$. The proof is complete.
\end{proof}

\subsection{Duality theorem for fermionic and bosonic systems}
In this section we prove Theorem \ref{theo:dualBosFerm}. The proof relies on the use of Theorem \ref{theo:duality} and the existence of maximizers for $\ffD$, proved in Proposition \ref{prop:pauli}, and $\fbD$. The latter can be proven easily by noting that the spectrum of $\left(\bigoplus_{j=1}^N U_j\right)_+$ contains the spectrum of $U$ and, applying similar computations to the ones used in the case of $\ffD$, deducing the coercivity of $\fbD$. We also need the following observation. 

\begin{rem}
	If $\H$ satisfies \eqref{eq:symmetry} and $\bm \gamma = (\gamma_i)_i$ , $
	\gamma_i=\gamma$, then the minimizers of $\dualf$ (the dual functional without symmetry constraints) can be taken to satisfy $U_i \equiv U$, for some $U \in \cS^d$. In particular
	\begin{align*}
		\dual (\bm \gamma) 
		= \sup_{\bm U \in (\cS^{\bm d})^N} \dualf(\bm U) 
		= \sup_{U \in \cS^d} \left\{ 
		\Tr( U \gamma ) - \ep
		\Tr \left( 
		\exp \bigg[ \frac1\ep\bigg( \frac1N \bigotimes_{i=1}^N U - \H \bigg) \bigg]
		\right) 
		\right\}  + \ep\, .
	\end{align*}
	This follows from the observation that if $\bm U \in (\cS^{\bm d})^N$, then we obtain a symmetric competitor $\tilde {\bm U}$
	\begin{align*}
		(\tilde {\bm U})_i = \frac1N \sum_{j=1}^N U_j \, ,  \quad \text{such that} \quad \dualf(\tilde {\bm U}) = \dualf(\bm U) \, .
	\end{align*}
\end{rem}

\begin{proof}[Proof of Theorem \ref{theo:dualBosFerm}]
	Let us assume that $\gamma >0$ in the bosonic case ($0 < \gamma <\frac1N$ in the fermionic case). The general duality result (including the case $\gamma$ in which does not satisfy the above strict inequalities) can be handled by decomposition of the space, in the same way as in  Remark \ref{rem:kernels}. 
	
	Under these assumptions, thanks to Proposition \ref{prop:pauli}, we know that a maximizer $U_\pm^\ep$ exists and satisfies \eqref{eq:Euler-lagrange-symmetries}.
	
	We then define the $N$-particle density matrix 
	\begin{align*}
		\tilde \Gamma_\pm^\ep := \exp \bigg[ \frac1\ep \bigg( \frac1N\bigoplus_{i=1}^N U_\pm^\ep - \H \bigg)_\pm \bigg]  \in \cS^{\bm d}\, ,
	\end{align*}
	and thanks to Remark \ref{rem:nonCommSch_multim}, we know that $\tilde \Gamma_\pm^\ep$ is optimal for the problem $\prim\big(\tP_1 ( \tilde \Gamma_\pm^\ep) \big)$ without symmetry contraints. Observing that $(\tilde \Gamma_\pm^\ep)_\pm = \Gamma_\pm^\ep$ (defined in \eqref{eq:optimum_primal_fermions},\eqref{eq:optimum_primal_bosons}), we deduce that $\Gamma_\pm^\ep$ must be optimal for the primal problem $\prim_\pm(\gamma)$ with symmetry constraints. This also proves the equality between primal and dual problems and concludes the proof. 
\end{proof}

\newpage
\section*{Acknowledgments}
This work started when A.G. was visiting the Erwin Schr\"odinger Institute and then continued when D.F. and L.P visited the Theoretical Chemistry Department of the Vrije Universiteit Amsterdam. The authors thanks the hospitality of both places and, especially, P. Gori-Giorgi and K. Giesbertz for fruitful discussions and literature suggestions in the early state of the project. Finally, the authors also thanks J. Maas and R. Seiringer for their feedback and useful comments to a first draft of the article.

L.P. acknowledges support by the Austrian Science Fund (FWF), grants No W1245 and No F65.
D.F acknowledges support by the European Research Council (ERC) under the European Union's Horizon 2020 research and innovation programme (grant agreements No 716117 and No 694227). A.G. acknowledges funding by the European Research Council under H2020/MSCA-IF “OTmeetsDFT” [grant ID: 795942].

\end{document}